\documentclass{article}

\PassOptionsToPackage{numbers,compress,sort}{natbib}
\usepackage[preprint]{neurips_2026}

\usepackage[utf8]{inputenc}
\usepackage[T1]{fontenc}
\usepackage[colorlinks=true,linkcolor=blue!70!black,citecolor=blue!70!black,urlcolor=blue!70!black]{hyperref}
\usepackage{url}
\usepackage{booktabs}
\usepackage{amsfonts}
\usepackage{amsmath}
\usepackage{amssymb}
\usepackage{amsthm}
\usepackage{nicefrac}
\usepackage{microtype}
\usepackage{xcolor}
\usepackage{colortbl}
\usepackage{tabularx}
\usepackage{graphicx}
\usepackage{subcaption}
\usepackage{algorithm}
\usepackage{algorithmic}
\usepackage{multirow}
\usepackage{placeins}
\usepackage{float}
\usepackage{wrapfig}
\usepackage[inline]{enumitem}
\usepackage{acronym}

\newtheorem{theorem}{Theorem}
\newtheorem{lemma}[theorem]{Lemma}
\newtheorem{proposition}[theorem]{Proposition}
\newtheorem{corollary}[theorem]{Corollary}
\newtheorem{definition}[theorem]{Definition}
\newtheorem{remark}[theorem]{Remark}

\newcommand{\RR}{\mathbb{R}}
\newcommand{\EE}{\mathbb{E}}

\newcommand{\KL}{D_{\mathrm{KL}}}
\newcommand{\TV}{\mathrm{TV}}

\DeclareMathOperator*{\argmin}{arg\,min}

\DeclareMathOperator{\topB}{Top\text{-}B}

\newcommand{\tinymetric}[1]{{\scriptsize #1}}

\usepackage{needspace}
\makeatletter
\newsavebox{\SBS@tablebox}
\newcommand{\SideTableParagraph}[4]{%
  \par
  \sbox{\SBS@tablebox}{%
    \begin{minipage}[t]{#2}%
      \vspace{0pt}%
      #3%
    \end{minipage}%
  }%
  \Needspace{\dimexpr\ht\SBS@tablebox+\dp\SBS@tablebox+2\baselineskip\relax}%
  \begingroup
    \dimen@=\dimexpr\linewidth-\wd\SBS@tablebox-1em\relax
    \def\SBS@shape{}%
    \@tempcnta=\z@
    \loop\ifnum\@tempcnta<#1
      \xdef\SBS@shape{\SBS@shape 0pt \the\dimen@}%
      \advance\@tempcnta\@ne
    \repeat
    \xdef\SBS@shape{\SBS@shape 0pt \the\linewidth}%
    \parshape=\numexpr#1+1\relax \SBS@shape
    \noindent
    \makebox[0pt][l]{%
      \hspace*{\dimexpr\linewidth-\wd\SBS@tablebox\relax}%
      \raisebox{\dimexpr\ht\strutbox-\ht\SBS@tablebox\relax}[0pt][0pt]{%
        \usebox{\SBS@tablebox}%
      }%
    }%
    \tolerance=1500\emergencystretch=1em
    #4\par
  \endgroup
}
\makeatother

\makeatletter
\renewcommand{\paragraph}{\@startsection{paragraph}{4}{\z@}%
  {0.25ex plus 0.1ex minus 0.05ex}%
  {-1em}%
  {\normalfont\normalsize\bfseries}}
\makeatother

\acrodef{GR}{generative retrieval}
\acrodef{MGR}{multimodal generative retrieval}
\acrodef{RQ}{residual quantization}

\title{Closing the Indexing-Decoding Gap in Multimodal Generative Retrieval via Prefix Retention Optimization}

\author{%
\textbf{Yufei Chen$^{1}$ \quad
Zihan Wang$^{2}$ \quad
Yubao Tang$^{2}$} \\
\textbf{Yukun Zhao$^{1}$ \quad
Maarten de Rijke$^{2}$ \quad
Zhaochun Ren$^{3}$} \\
$^{1}$Shandong University \quad
$^{2}$University of Amsterdam \\
$^{3}$Leiden University \\
\texttt{cyf200409@gmail.com, zhw.cypher@gmail.com} \\
\texttt{y.li7@uva.nl, zhaoyukun@sdu.edu.cn, m.derijke@uva.nl} \\
\texttt{z.ren@liacs.leidenuniv.nl}
}

\begin{document}

\maketitle

\begin{abstract}
\Acl{MGR} formulates multimodal retrieval as discrete identifier generation, eliminating the need for explicit similarity search over external embeddings.
Existing approaches construct identifiers via residual quantization and decode them with trie-constrained beam search.
This combination introduces an \emph{indexing-decoding gap}: 
identifier learning objectives, including reconstruction and contrastive losses, do not explicitly enforce prefix discriminability during decoding.
As a result, even well-optimized identifiers can be irreversibly pruned early in beam search due to low-rank prefixes.
We theoretically characterize this gap and derive a survival bound that relates prefix retention to three controllable factors in indexing and decoding.
Building on this bound, we propose \textbf{PRO}, \emph{prefix retention optimization}, a unified framework comprising three  mechanisms: 
\begin{enumerate*}[label=(\roman*)]
\item prefix ranking distillation aligns quantized prefix rankings with those induced by pre-quantization embeddings using a listwise loss;
\item vocabulary scheduling increases codebook sizes from shallow to deep residual quantization levels to reduce early competition from non-target prefixes; and 
\item geometric score fusion vectorizes each candidate prefix and incorporates its similarity to the query into beam search scoring, further reducing the indexing–decoding mismatch.
\end{enumerate*}
Experiments on nine multimodal retrieval tasks show that PRO improves retention of target identifier prefixes and outperforms existing multimodal generative retrieval baselines.\footnote{Our code is available at \url{https://github.com/layingfish/MGR_PRO}.}
\end{abstract}

\section{Introduction}
\label{sec:intro}
Multimodal information retrieval requires matching queries and targets across modalities such as text, images, and composed inputs.
Dense retrieval dominates this setting by comparing learned continuous embeddings, but it requires maintaining and searching external vector indexes~\citep{wei2023uniir}.
\Ac{GR} offers an alternative: a sequence-to-sequence model directly generates a discrete identifier for the target item~\citep{decao2021autoregressive,tay2022transformer}.
This paradigm was first studied in text retrieval and recommender systems~\citep{tay2022transformer, rajput2023recommender}, and has recently been extended to multimodal search~\citep{zhang2024irgen,li2024grace,genius2025}.

Existing \ac{MGR} methods, including IRGen~\citep{zhang2024irgen}, GRACE~\citep{li2024grace}, AVG~\citep{cai2025avg}, and SemCORE~\citep{wu2025semcore}, focus on individual cross-modal retrieval tasks.
GENIUS~\citep{genius2025} takes a step further by introducing a unified generative framework for diverse multimodal retrieval tasks, achieving state-of-the-art performance in this setting.
It consists of three stages:
(i) an indexing stage that uses a unified multimodal encoder to map queries and target items into a shared continuous embedding space, and quantizes item embeddings into multi-level \ac{RQ} codes as identifiers;
(ii) a retrieval training stage that trains an autoregressive decoder to generate these identifiers from queries; and
(iii) a decoding stage where trie-constrained beam search ensures that the decoder only produces valid candidate identifiers during inference~\citep{decao2021autoregressive}.
Despite its versatility, GENIUS exhibits a structural \emph{indexing--decoding gap}, as illustrated in Fig.~\ref{fig:idg}: during indexing, identifier learning objectives, including reconstruction and contrastive losses, are optimized independently of decoding and therefore do not explicitly enforce prefix discriminability for decoding.
As a result, prefix survival is difficult to guarantee because trie-constrained beam search expands prefixes sequentially and irrevocably prunes low-ranked candidates at each level.


\begin{wrapfigure}[31]{r}{0.50\textwidth}
\centering
\vspace{-22pt}
\includegraphics[width=0.47\textwidth]{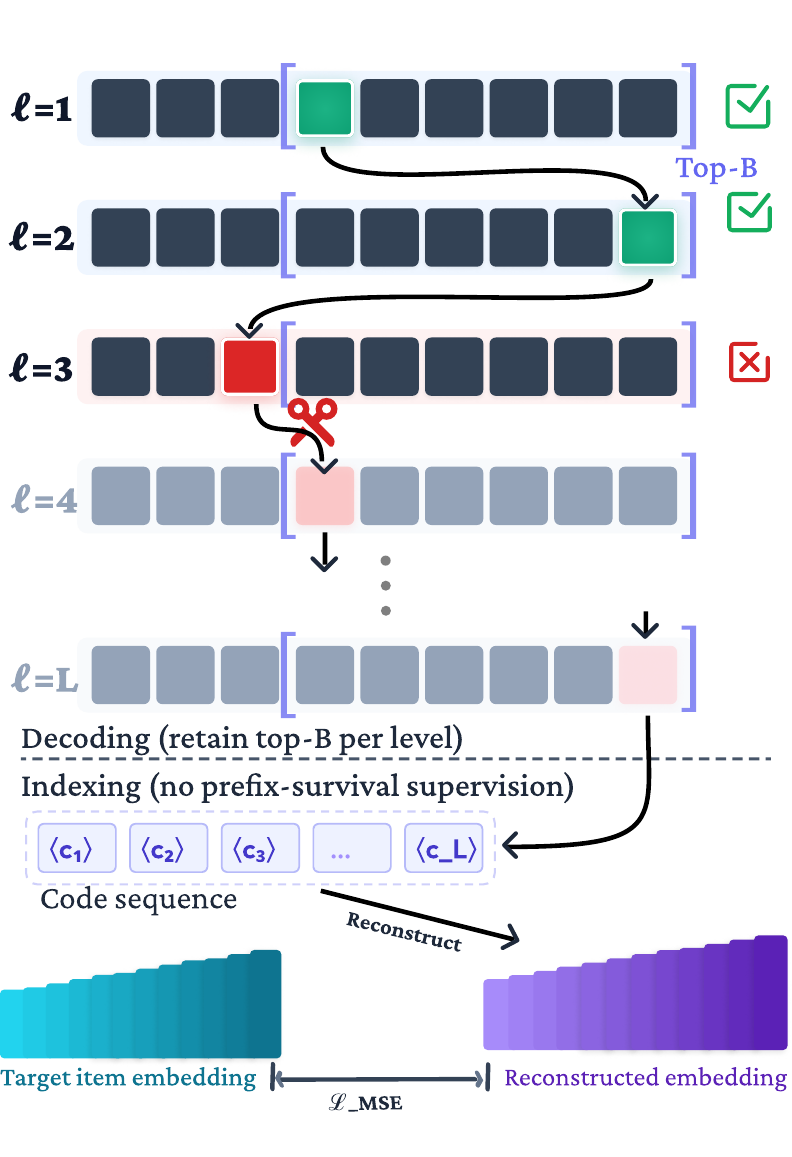}
\vspace{-6pt}
\caption{The indexing-decoding gap.
\textbf{Top} (decoding): Beam search prunes irrevocably at each level; the target item (colored path) survives levels~1--2 but is lost at level~3.
\textbf{Bottom} (indexing): Existing identifier learning objectives do not explicitly supervise prefix survival during decoding.}
\label{fig:idg}
\vspace{-10pt}
\end{wrapfigure}

In this paper, we theoretically characterize this gap and show that existing indexing objectives are insufficient to determine whether a target identifier prefix survives beam search: two tokenizers can achieve the same objective values while one preserves the target prefix and the other prunes it under decoding (see Thm.~\ref{thm:qdg}).
We further derive a survival bound that relates prefix survival to three controllable factors across indexing and decoding (see Thm.~\ref{thm:survival}).
Building on this bound, we propose \textbf{PRO}, the \emph{prefix retention optimization} framework, to improve prefix survival across indexing and decoding by targeting the controllable factors.
During indexing, prefix ranking distillation enforces consistency between prefix rankings of RQ partial reconstructions and those induced by pre-quantization embeddings via a listwise objective, while vocabulary scheduling progressively increases codebook sizes across RQ levels to mitigate early competition from non-target prefixes.
During decoding, geometric score fusion augments beam search scoring with query–prefix similarity derived from reconstructed prefixes, reducing the mismatch between indexing and decoding.

We evaluate PRO on nine multimodal retrieval tasks spanning text-to-image, image-to-text, and composed retrieval. Across all settings, PRO consistently improves prefix survival and outperforms state-of-the-art MGR baselines. Extensive results further demonstrate strong generalizability, with robust gains across diverse multimodal encoders and consistent improvements on a representative RQ-based recommendation model, TIGER~\citep{rajput2023recommender}.


Our contributions are:
\begin{enumerate*}[label=(\roman*)]
  \item We identify and formalize the indexing–decoding gap in \ac{MGR} as a mismatch between identifier learning objectives and sequential beam search pruning.
  \item We show that existing identifier learning objectives do not determine prefix survival, and derive a bound relating it to three controllable factors across indexing and decoding.
  \item We propose PRO, a framework for improving prefix survival via prefix ranking distillation, vocabulary scheduling, and geometric score fusion.
  \item Experiments on nine multimodal retrieval tasks show that PRO improves prefix survival, consistently surpasses state-of-the-art MGR methods, and demonstrates strong generalizability across multimodal encoders and RQ-based methods.
\end{enumerate*}

\section{Preliminaries}
\label{sec:preliminaries}

To facilitate the theoretical analysis of the indexing--decoding gap and the proposed PRO, we briefly introduce the \ac{MGR} framework proposed by GENIUS~\citep{genius2025}. This framework consists of three stages: indexing, retrieval training, and decoding.

\textbf{Indexing stage.}
\label{sec:prelim}
The indexing stage constructs discrete identifiers for items by progressively quantizing their representations, which are extracted by a frozen multimodal encoder. These identifiers serve as supervision for subsequent decoder training.

\emph{Multimodal encoder.}
The encoder maps queries and targets from heterogeneous modalities, including images, texts, and image-text compositions, into a shared $d$-dimensional embedding space. It is trained with a contrastive loss and kept frozen during both identifier generation and decoder training.

\emph{Residual quantization.}
Let $q \in \RR^d$ and $x_j \in \RR^d$ denote the pre-quantization fused query and target embeddings, respectively. We use $z_j^\ell$ to denote the RQ code index of item~$j$ at level~$\ell$.
The tokenizer maps each target embedding $x_j$ to a discrete identifier via an $L$-level residual quantizer. Specifically, it (i) initializes the residual as $r_0 = x_j$; (ii) at each level $\ell$, selects the nearest codeword from the level-specific codebook $\{c_{\ell,1}, \ldots, c_{\ell,V_\ell}\} \subset \RR^d$ by
$z_j^\ell = \argmin_{v \in \{1,\ldots,V_\ell\}} \|r_{\ell-1} - c_{\ell,v}\|^2$; and (iii) updates the residual as $r_\ell = r_{\ell-1} - c_{\ell,z_j^\ell}$. Here, $V_\ell$ denotes the vocabulary size at level~$\ell$. The resulting code sequence $(z_j^1,\ldots,z_j^L)$ forms the discrete identifier of item~$j$, with partial reconstruction $\hat{x}_{j,\ell}=\sum_{t=1}^{\ell} c_{t,z_j^t}$ and full reconstruction $\hat{x}_j=\hat{x}_{j,L}$; applying the same residual quantizer to $q$ gives the full quantized query vector $\hat{q}=\sum_{\ell=1}^{L} c_{\ell,z_q^\ell}$, where $z_q^\ell$ is selected by the same residual-quantization rule initialized at~$q$.
Following GENIUS~\citep{genius2025}, we adopt a modality-decoupled first code level to distinguish target modalities: the first-level codebook contains three codes (one per modality), while subsequent levels are shared to capture semantic content in the unified embedding space.

\emph{Training objective.}
The standard RQ tokenizer is trained with three objectives: (i) an RQ loss that stabilizes codebook learning by aligning residuals with selected codewords (updated via exponential moving average); (ii) a mean squared error (MSE) loss aligning quantized query and target vectors, $\mathcal{L}_{\mathrm{mse}}=\|\hat{q}-\hat{x}_j\|_2^2$; and (iii) a contrastive loss~\citep{khosla2020supervised} on pre-quantization embeddings, with one direction given by
$\mathcal{L}_{\mathrm{cl}}=-\log \frac{\exp(\langle q,x_j\rangle/\tau)}{\sum_k \exp(\langle q,x_k\rangle/\tau)}$.
During training, the same RQ codebooks are applied to both $q$ and $x_j$ to obtain $\hat{q}$ and $\hat{x}_j$, while only the target code sequence is stored as the item identifier. Here, $x_k$ denotes a positive target or a contrastive negative, and $\tau>0$ is the temperature. The standard RQ loss is given in App.~\ref{app:impl}.
Overall, these objectives constrain local residual-to-codeword fitting, full quantized vectors, and pre-quantization embeddings, but do not explicitly regulate query-conditioned rankings of intermediate prefixes or partial reconstructions $\hat{x}_{j,\ell}$.

\textbf{Retrieval training and decoding.}
After tokenizer training, the discrete identifiers supervise an autoregressive decoder.
(i) \emph{Training.} The decoder~\citep{raffel2020exploring} maps query embeddings to code sequences using a cross-entropy loss over the $L$ tokens.
(ii) \emph{Decoding.} Candidate identifiers are generated via trie-constrained beam search~\citep{decao2021autoregressive} with beam width~$B$, where decoding is restricted to valid identifiers and beams are ranked by cumulative log-probability.

\section{The Indexing-Decoding Gap}
\label{sec:qdg}

In this section, we provide a theoretical analysis of the \emph{indexing-decoding gap} between identifier learning and target item decoding.
We first establish a negative result: existing tokenizer objectives used for identifier learning, including contrastive, residual fitting, and reconstruction losses, do not determine whether the target prefix survives beam search.
For theoretical analysis, we simulate beam search scoring with a rule that ranks each prefix by the similarity between the query and its RQ partial reconstruction, $\langle q, \hat{x}_{j,\ell} \rangle$.
We refer to this scoring rule as the \emph{quantized oracle}.
This oracle evaluates prefixes using partial reconstructions induced by RQ codes in the quantized representation space.

\begin{theorem}

\label{thm:qdg}

For any beam width $B \geq 1$ and RQ levels $L \geq 2$, there exist a query~$q$, a set of $B{+}1$ items, and two $L$-level residual quantizers $\mathcal{Q}^+$ and $\mathcal{Q}^-$ 
such that:
\begin{enumerate*}[label=\textup{(\roman*)}]

  \item $\mathcal{Q}^+$ and $\mathcal{Q}^-$ share the same pre-quantization query and item embeddings, produce identical greedy code assignments, and achieve identical values for all training objectives, including contrastive, residual-fitting, and reconstruction losses computed from full reconstructions.
  \item Under trie-constrained beam search over valid item identifiers with width~$B$ using quantized oracle scores, the target prefix survives at every level in $\mathcal{Q}^+$, whereas it is pruned at level~1 in $\mathcal{Q}^-$.

\end{enumerate*}

\end{theorem}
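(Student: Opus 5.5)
The construction keeps everything that the training objectives can see fixed and perturbs only what they cannot see: how the full reconstruction $\hat{x}_j$ is split across levels. Concretely, I use the same final codewords in both quantizers but move mass between the level-1 codeword and the deeper codewords. Residual fitting, MSE and contrastive losses depend only on $q$, the $x_j$, the residuals at the selected codes, and the full reconstructions $\hat{x}_j,\hat{q}$. The quantized oracle at level~1, in contrast, reads $\langle q,\hat{x}_{j,1}\rangle=\langle q,c_{1,z_j^1}\rangle$. So I look for two codebook families with identical per-item sums $\sum_\ell c_{\ell,z_j^\ell}$ but different first-level codewords.

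\textbf{Step 1 (items and codes).} Take $B{+}1$ items with pre-quantization embeddings $x_0$ (the target) and $x_1,\dots,x_B$, and a query $q$. Give every item a distinct level-1 code $z_j^1=j$, so each trie root branch holds exactly one item. This requires $V_1\ge B{+}1$. If the modality-decoupled convention forces $V_1=3$, I instead apply the argument at the first shared level and treat the modality level as common to all items. Set each deeper codebook to a single codeword per item index, for example with $V_\ell=B{+}1$, so the code sequence $(j,j,\dots,j)$ is forced. The codebooks are chosen so that $\hat{x}_j=x_j$ exactly. Every residual then ends at zero, which makes the residual-fitting and MSE terms coincide across $\mathcal{Q}^\pm$.

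\textbf{Step 2 (the two splits).} Write $c^{\pm}_{1,j}=x_j+\epsilon\,\delta_j^{\pm}$ and $c^{\pm}_{2,j}=-\epsilon\,\delta^\pm_j$, with zero codewords at levels $\ge 3$, or any common completion. Then $\hat{x}_j=x_j$ in both quantizers, and every per-level residual norm $\|r_{\ell-1}-c_{\ell,z}\|$ at the selected codes equals $\epsilon\|\delta^\pm_j\|$ at level~1 and $0$ afterwards. Taking $\|\delta_j^+\|=\|\delta_j^-\|$ makes these residual terms identical as well. The directions are chosen as follows. In $\mathcal{Q}^+$, $\delta_0^+$ points along $q$ and $\delta_j^+$ for $j\ge1$ points along $-q$. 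In $\mathcal{Q}^-$ the signs are reversed. Start from embeddings where all $\langle q,x_j\rangle$ are equal, or where the target is only slightly ahead to keep the contrastive loss meaningful. With $\epsilon$ large enough relative to these gaps, the target has the strictly highest level-1 score in $\mathcal{Q}^+$. In $\mathcal{Q}^-$ it has the strictly lowest among $B{+}1$ branches, so width-$B$ beam search prunes it at level~1. In $\mathcal{Q}^+$ each branch has a single descendant path, so the target survives at every level, since it is in the beam at level~1 and nothing can displace it thereafter.

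\textbf{Step 3 (greedy assignments).} The remaining and main obstacle is clause (i)'s requirement that both quantizers produce identical greedy code assignments. The argmin rule must actually select code $j$ for item $j$ at level~1, and must select the planned query codes, under both splits. Shifting a codeword by $\pm\epsilon\delta$ could make a competitor codeword nearer. I would handle this by placing the $x_j$ far apart, for example $x_j=R e_j+$ a small common component along $q$ with $R\gg\epsilon$, and choosing $\delta$ with a dominant component along $q$. Then $\|x_j-c_{1,k}\|\approx R\sqrt2$ for $k\ne j$ while $\|x_j-c_{1,j}\|=\epsilon\|\delta\|$, and nearest-codeword selection is preserved. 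For the query I use the same construction, assigning $q$ its own path or the target's. Since $\hat{q}$ is a full reconstruction, choosing the query codes consistently across $\mathcal{Q}^\pm$ keeps the MSE loss identical. The remaining work is to verify these inequalities and that $\langle q,\epsilon\delta\rangle$ still dominates the score gaps, both of which are routine once $R\gg\epsilon\gg$ the gaps.
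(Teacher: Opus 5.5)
Your proposal is correct and uses the same mechanism as the paper's proof. In both, the sum of the first two selected codewords is held fixed, so every full reconstruction and every listed objective stays the same, while the query-direction component is moved between level~1 and level~2; this sends the target from first to last among $B{+}1$ singleton trie branches at level~1. The paper perturbs only the target and starts from an exact level-1 match in $\mathcal{Q}^+$, so it needs the calibrated shift $\delta=M-\sqrt{M^2-4}$ to equalize the level-1 residual norm, whereas your symmetric overshoot $\pm\epsilon\delta_j$ gets that equality from the sign flip alone. Do commit to giving the query its own dedicated codewords (e.g.\ $c_{1,q}=q$ and zero codewords afterwards, identical in $\mathcal{Q}^\pm$): routing $q$ through the target's path would generally make its level-1 residual-fitting cost, and possibly its level-2 code choice, differ between the two quantizers.
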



\emph{Proof sketch.} We construct two RQ instances $\mathcal{Q}^+$ and $\mathcal{Q}^-$ over the same query and $B{+}1$ items.
First, both quantizers share the same pre-quantization embeddings, select the same RQ code index at every level under the greedy nearest codeword rule, and have identical values for the existing tokenizer objectives.
The only difference is in the first two codewords selected by the target item.
In $\mathcal{Q}^+$, the first selected codeword has a positive component along the query direction, so the target prefix receives the highest level 1 quantized oracle score.
In $\mathcal{Q}^-$, we reverse this query direction component in the first selected codeword, which lowers the target's level 1 prefix score, and add the opposite compensation to the second selected codeword, so that the sum of the first two selected codewords and therefore the full reconstruction remain unchanged.
Consequently, the existing tokenizer objectives considered in the theorem assign the same values to the two constructions.
As a result, the target prefix survives under $\mathcal{Q}^+$ but is ranked below all $B$ competitors and pruned at level~1 under $\mathcal{Q}^-$.
Full details are provided in App.~\ref{app:proofs}. If the first code level is reserved for a modality token, the construction applies from levels~2 and~3.

Thm.~\ref{thm:qdg} shows that existing tokenizer objectives cannot guarantee prefix discriminability during beam search.
Although later RQ levels can compensate early errors in the final reconstruction, beam search is irreversible and cannot recover a target once its prefix has been pruned.
Therefore, prefix discriminability at intermediate levels must be explicitly considered beyond optimizing complete identifiers.
This is analogous to the failure of successive refinement in source coding~\citep{equitz1991successive}: a code optimized for final distortion need not be useful at intermediate stages.

\emph{Empirical analysis.}
\phantomsection
\label{sec:evidence}
We examine whether the failure mode revealed by Thm.~\ref{thm:qdg} also appears in practice.
We measure the per-level survival rate of the target identifier token under quantized oracle scoring on COCO~\citep{lin2014microsoft,wei2023uniir} with a \ac{RQ}-based tokenizer~\citep{lee2022autoregressive} ($L{=}16$, $V{=}4096$, $B{=}50$, 5{,}000 queries).
At each \ac{RQ} level, this metric conditions on the preceding target prefix and checks whether the target token at that level is ranked among the top-$B$ candidate tokens.
As shown in Fig.~\ref{fig:mse_survival}\subref{fig:prefix_survival_decay}, the target token becomes much harder to retain after the first \ac{RQ} level.
This shows the same failure mode as Thm.~\ref{thm:qdg}: due to insufficient prefix discriminability, target identifier tokens are easily lost during beam decoding.
We further use complete reconstruction MSE as a diagnostic of reconstruction quality.
Fig.~\ref{fig:mse_survival}\subref{fig:mse_survival_ecdf} shows that the ECDF curves of survived and pruned items nearly overlap.
Fig.~\ref{fig:mse_survival}\subref{fig:mse_survival_quintile} groups items by MSE quintiles and shows that the prefix survival rate varies little across different MSE ranges.
These results suggest that good complete reconstruction alone does not predict whether a target survives beam search.

\begin{figure}[t]
  \centering
  \captionsetup[subfigure]{skip=1pt}
  \begin{subfigure}[t]{0.315\linewidth}
    \centering
    \includegraphics[width=\linewidth]{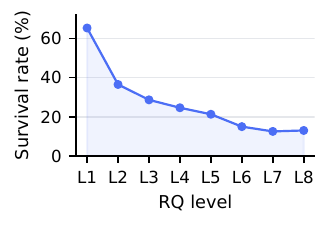}
    \caption{Per-level token survival}
    \label{fig:prefix_survival_decay}
  \end{subfigure}\hfill
  \begin{subfigure}[t]{0.315\linewidth}
    \centering
    \includegraphics[width=\linewidth]{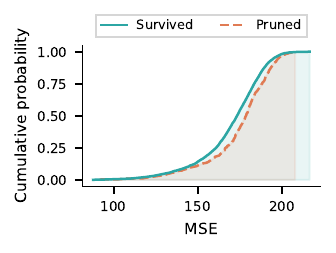}
    \caption{MSE: survived vs. pruned}
    \label{fig:mse_survival_ecdf}
  \end{subfigure}\hfill
  \begin{subfigure}[t]{0.315\linewidth}
    \centering
    \includegraphics[width=\linewidth]{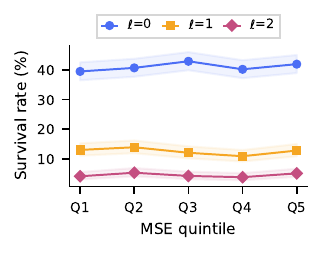}
    \caption{Survival by MSE quintile}
    \label{fig:mse_survival_quintile}
  \end{subfigure}
    \caption{
            Empirical diagnostics of the indexing--decoding gap on COCO under quantized oracle scoring.
            (a) The per-level survival rate of the target identifier token drops sharply across early RQ code positions.
            (b,c) Full reconstruction MSE weakly predicts prefix survival: survived/pruned
            empirical cumulative distribution functions (ECDFs) nearly overlap, and survival remains flat across MSE quintiles.
            }
  \label{fig:mse_survival}
\end{figure}


\section{Prefix Retention Optimization}
\label{sec:method}

\textit{Under what conditions does the target prefix survive} and \textit{how can these conditions be formulated as continuous quantities amenable to optimization}?
To answer those questions, we first derive a survival bound that characterizes prefix survival through three quantities: ranking divergence, teacher margin, and decoder mismatch.
The bound provides a constructive counterpart to Thm.~\ref{thm:qdg}: while existing objectives cannot determine survival, survival is guaranteed when the errors introduced by indexing and decoding are sufficiently small relative to the teacher margin.
Guided by this bound, we propose \emph{Prefix Retention Optimization (PRO)}, which reduces ranking divergence, enlarges the teacher margin, and mitigates decoder mismatch via three corresponding components.

\textbf{Survival bound.}
\phantomsection
\label{sec:survival}
We begin with prefix-level notation. 
For a target item~$j$ with RQ code sequence $z_j^{1:L}$, let $\hat{x}_{j,\ell} := \sum_{t=1}^{\ell} c_{t,z_j^t}$ denote its partial reconstruction after $\ell$ levels, so that
$\hat{x}_{j,L} = \hat{x}_j$. Let $\mathcal{P}_\ell$ denote the set of
distinct length-$\ell$ code prefixes, and let $p \in \mathcal{P}_\ell$ denote a generic prefix.
Let $p^+_\ell := z^{1:\ell}$ denote the target prefix at level~$\ell$.
We analyze each beam step through distributions over $\mathcal{P}_\ell$: 
a teacher distribution~$\pi^*_\ell$ defined by pre-quantization embedding similarities, 
a \emph{quantized oracle distribution}~$\bar{\pi}_\ell$ induced by partial reconstructions, 
and a \emph{decoder prefix distribution}~$\tilde{\pi}_\ell$ derived from decoder scores. 
All distributions are conditioned on the query embedding~$q$, and their formal definitions are given in App.~\ref{app:proofs}.

The survival of the ground-truth prefix $p^+_\ell$ is governed by three quantities:
\begin{align}
  K_\ell
    &= \KL\bigl(\pi^*_\ell \,\|\, \bar{\pi}_\ell\bigr),
    &\quad&\text{(ranking divergence)}
    \label{eq:ranking_kl} \\
  m_\ell
    &= \pi^*_\ell(p^+_\ell)
       - \pi^*_{\ell,B+1},
    &&\text{(teacher margin)}
    \label{eq:margin} \\
  \varepsilon_\ell
    &= \TV\bigl(\bar{\pi}_\ell ,\,
       \tilde{\pi}_\ell\bigr),
    &&\text{(decoder mismatch)}
    \label{eq:dec_mismatch}
\end{align}
where $\pi^*_{\ell,B+1}$ denotes the $(B{+}1)$-th largest probability under the teacher distribution (with $V_\ell > B$ in practice).
Thus, $K_\ell$ measures the divergence between teacher and quantized rankings, $m_\ell$ is the top-$B$ margin under the teacher, and $\varepsilon_\ell$ captures the discrepancy between oracle and decoder distributions.

\begin{theorem}
\label{thm:survival}
Fix a query $q$ and beam width $B$.
If, for every level $\ell \in \{1,\dots,L\}$,
\begin{equation}
  \underbrace{\sqrt{\tfrac{1}{2}\,K_\ell(q)}
    \vphantom{\Big|}}_{\text{\normalsize ranking divergence}}
  \;+\;
  \underbrace{\varepsilon_\ell(q)
    \vphantom{\Big|}}_{\text{\normalsize decoder mismatch}}
  \;\;<\;\;
  \underbrace{\tfrac{1}{2}\,m_\ell(q)
    \vphantom{\Big|}}_{\text{\normalsize teacher margin}},
\label{eq:survival}
\end{equation}
then the target prefix remains in the beam through all $L$ levels.
\end{theorem}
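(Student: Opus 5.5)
The plan is to combine Pinsker's inequality with the triangle inequality for total variation. This gives a pointwise perturbation bound between the teacher distribution $\pi^*_\ell$ and the decoder prefix distribution $\tilde{\pi}_\ell$ that the beam actually ranks by. A margin argument then shows that the target cannot drop out of the top $B$. Induction over levels finishes the proof, because trie-constrained beam search at level $\ell$ only ranks prefixes that extend surviving level-$(\ell-1)$ prefixes.

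\emph{Step 1 (distance bound).} By Pinsker, $\TV(\pi^*_\ell,\bar{\pi}_\ell)\le\sqrt{\tfrac12 K_\ell}$. By the triangle inequality,
\[
\delta_\ell:=\TV(\pi^*_\ell,\tilde{\pi}_\ell)\le\sqrt{\tfrac12 K_\ell}+\varepsilon_\ell .
\]
Under hypothesis~\eqref{eq:survival} this gives $\delta_\ell<\tfrac12 m_\ell$.

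\emph{Step 2 (margin argument).} Suppose the target prefix $p^+_\ell$ is not among the top $B$ under $\tilde{\pi}_\ell$. Then at least $B$ competitors $p_1,\dots,p_B$ satisfy $\tilde{\pi}_\ell(p_i)\ge\tilde{\pi}_\ell(p^+_\ell)$. At least one of these $B$ distinct non-target prefixes must lie outside the teacher's top-$B$ set together with $p^+_\ell$. Hence some competitor $p$ has $\pi^*_\ell(p)\le\pi^*_{\ell,B+1}$. (If $p^+_\ell$ is not itself in the teacher's top $B$, then $m_\ell\le 0$ and the hypothesis cannot hold.)

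For this $p$, we have $\tilde{\pi}_\ell(p)\ge\tilde{\pi}_\ell(p^+_\ell)$ but $\pi^*_\ell(p^+_\ell)-\pi^*_\ell(p)\ge m_\ell$. The shift from $\pi^*_\ell$ to $\tilde{\pi}_\ell$ therefore has to raise $p$ relative to $p^+_\ell$ by at least $m_\ell$. Since probability moves between two disjoint atoms, this costs total variation: increasing $p$ by $a$ and decreasing $p^+_\ell$ by $b$ with $a+b\ge m_\ell$ requires $\TV\ge\max(a,b)\ge m_\ell/2$. This is the origin of the factor $\tfrac12$, and it contradicts $\delta_\ell<m_\ell/2$.

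I need to be careful at this point. For disjoint events, $\TV\ge\max(|\Delta_p|,|\Delta_{p^+}|)$ holds because $\TV=\sup_A|\pi(A)-\pi'(A)|$, and single atoms are admissible events $A$. So the bound is genuine, and ties are handled by the strict inequality.

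\emph{Step 3 (induction and conditioning).} This is where I expect the main obstacle. The distributions in App.~\ref{app:proofs} are defined over $\mathcal{P}_\ell$, whereas the beam at level $\ell$ only sees children of the surviving beam. I would assume, or verify from the appendix definitions, that the beam ranks the surviving candidates by the same score as $\tilde{\pi}_\ell$, possibly after restricting to that set. Restriction cannot hurt the argument: if $p^+_\ell$ is in the top $B$ of $\tilde{\pi}_\ell$ over all of $\mathcal{P}_\ell$, it remains in the top $B$ of any subset containing it.

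The inductive hypothesis says that $p^+_{\ell-1}$ survived level $\ell-1$, so $p^+_\ell$ is among the candidates at level $\ell$. Steps 1--2 then show that $p^+_\ell$ survives. The base case $\ell=1$ has all first-level codes as candidates. Applying this for $\ell=1,\dots,L$ yields the claim.

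If $\tilde{\pi}_\ell$ is defined from cumulative log-probabilities, I would check that beam ranking equals ranking by $\tilde{\pi}_\ell$. Monotonicity of $\exp$ and of normalization gives this directly.
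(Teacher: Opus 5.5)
Your proposal is correct and follows essentially the same route as the paper: Pinsker plus the TV triangle inequality give $\TV(\pi^*_\ell,\tilde{\pi}_\ell)<m_\ell/2$. The single-atom bound then keeps $p^+_\ell$ strictly ahead of every prefix outside the teacher's top-$B$, and induction with the same subset argument finishes the proof; your Step~2 is just the contrapositive of the paper's direct inequality $\tilde{\pi}_\ell(p^+_\ell)-\tilde{\pi}_\ell(p)\ge m_\ell-2\delta'>0$, and your counting is sound because the hypothesis forces $m_\ell>0$, so at most $B-1$ non-target prefixes can exceed $\pi^*_{\ell,B+1}$.
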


When $m_\ell \leq 0$, the bound provides no guarantee, as the target prefix is not ranked within the teacher top-$B$.
All distributions are defined over the full prefix set~$\mathcal{P}_\ell$; thus, being top-$B$ globally implies being top-$B$ within any subset containing~$p^+_\ell$ (see App.~\ref{app:proofs}).
The proof proceeds in two steps.
First, if quantization perturbs the teacher distribution by less than half of the teacher margin, the target prefix remains ahead of all prefixes outside the teacher top-$B$ under the quantized oracle.
Pinsker's inequality~\citep{cover2006elements} converts this perturbation into the term $\sqrt{K_\ell/2}$.
Second, applying the triangle inequality to incorporate the shift from the oracle to the decoder distribution yields the additional term $\varepsilon_\ell$.
Hence, survival is guaranteed when the combined deviation from quantization and decoding is sufficiently small relative to the teacher margin.
The value of Thm.~\ref{thm:survival} lies in the design principle it exposes: reduce $K_\ell$, reduce $\varepsilon_\ell$, and enlarge $m_\ell$ at levels where beam pruning is irreversible.
Although the bound is conservative, as is typical of Pinsker-based analyses~\citep{fedotov2003refinements,rioul2024historical,sason2016f}, it correctly predicts the monotonic relationship between $K_\ell$ and prefix survival observed in our experiments (see Sec.~\ref{sec:main_results}).
This principle directly motivates the three components of PRO.

\begin{figure}[t]
\centering
\includegraphics[width=0.97\textwidth]{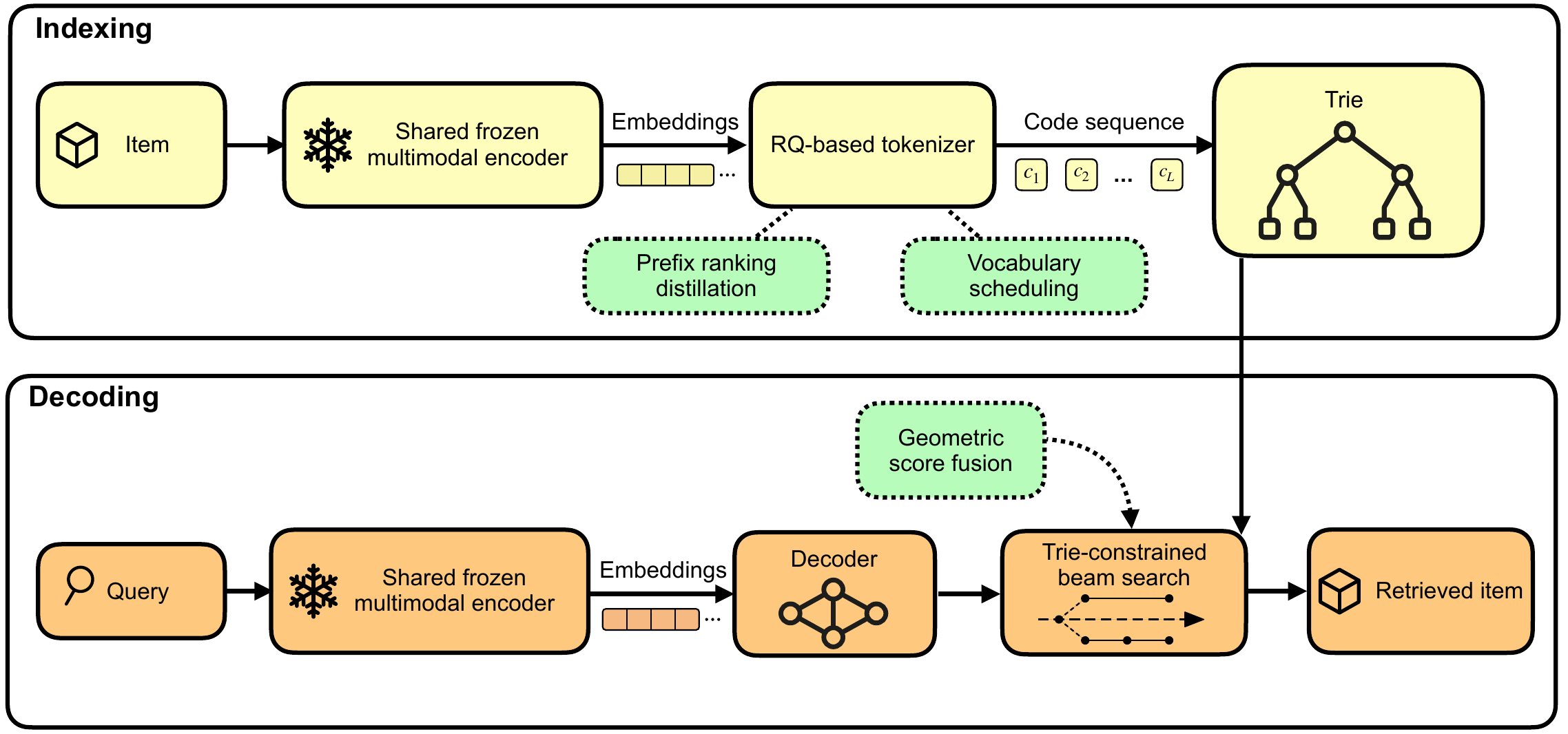}
\caption{Overview of PRO. During tokenizer training, \textbf{prefix ranking distillation} reduces the ranking divergence $K_\ell$ and \textbf{vocabulary scheduling} enlarges the teacher margin $m_\ell$.
During inference, \textbf{geometric score fusion} helps reduce the decoder mismatch $\varepsilon_\ell$.}
\label{fig:darq}
\vspace*{-.5\baselineskip}
\end{figure}

\textbf{The PRO Framework.}
\phantomsection
\label{sec:framework}
As illustrated in Fig.~\ref{fig:darq}, PRO integrates two tokenizer-side components with one inference-time component.
Prefix ranking distillation and vocabulary scheduling operate on identifier learning, while geometric score fusion augments scoring in trie-constrained beam search.

\emph{Prefix ranking distillation.}
\phantomsection
\label{sec:rpq}
Prefix ranking distillation is designed to reduce the ranking divergence $K_\ell$.
For each query and each RQ level, we compare two distributions over the same candidate prefixes.
The teacher distribution scores prefixes using the pre-quantization item embeddings, while the quantized distribution scores the same prefixes using the partial reconstructions produced by the RQ codebooks.
If these two distributions disagree, the target prefix can be ranked differently after indexing.
We therefore add a listwise KL distillation loss~\citep{hinton2015distilling,cao2007learning} during RQ training:
\begin{equation}
  \mathcal{L}_{\mathrm{rank}}
  \;=\; \EE_{q}\!\left[\,
    \frac{1}{L}
    \sum_{\ell=1}^{L}
    \KL\bigl(\pi^*_\ell \,\|\, \bar{\pi}_\ell\bigr)
  \right].
\label{eq:rpq}
\end{equation}
Both distributions are approximated over an in-batch candidate set with shared temperature~$\tau$.\footnote{In-batch approximation is a common practice in distillation-based training and provides a reasonable estimate of $K_\ell$ when the batch size is sufficiently large.}
Minimizing this loss directly reduces the empirical estimate of $K_\ell$.
It encourages the partial reconstructions at each RQ level to preserve the prefix probability distribution induced by the pre-quantization embeddings for the same query.
The ranking distillation loss is incorporated as an auxiliary term in the tokenizer training objective:
\begin{equation}
  \mathcal{L}_{\mathrm{total}}
  \;=\; \mathcal{L}_{\mathrm{cl}}
  \;+\; \beta_{\mathrm{rq}}\,\mathcal{L}_{\mathrm{rq}}
  \;+\; \gamma_{\mathrm{mse}}\,\mathcal{L}_{\mathrm{mse}}
  \;+\; \lambda\,\mathcal{L}_{\mathrm{rank}}\,,
\label{eq:stage1}
\end{equation}
where $\mathcal{L}_{\mathrm{rq}}$ denotes the RQ loss, and $\beta_{\mathrm{rq}}$, $\gamma_{\mathrm{mse}}$, and $\lambda$ control the strengths of the RQ, MSE, and ranking distillation losses.

\emph{Vocabulary scheduling.}
\phantomsection
\label{sec:daca}
Vocabulary scheduling is designed to enlarge the teacher margin $m_\ell$.
At shallow RQ levels, beam search makes pruning decisions using only short prefixes.
If the early codebook is large, the target prefix must compete with many non-target prefixes before deeper codewords can refine the identifier.
A common RQ-VAE design~\citep{lee2022autoregressive} uses the same codebook size across residual levels, and product quantization~\citep{jegou2011product} quantizes separate subspaces with separate subcodebooks, often using the same number of codewords per subspace.
We therefore use an ascending schedule
$V_1 \leq V_2 \leq \cdots \leq V_L$, allocating less capacity to early levels where pruning is irreversible.
A smaller shallow vocabulary reduces the number of early competing prefixes.
Under the teacher distribution, probability mass is spread over fewer early prefixes, increasing the probability gap between the target prefix and the strongest prefix just outside the beam.
Thus, vocabulary scheduling enlarges the teacher margin at the levels where pruning is most irreversible.
The larger deeper vocabularies preserve representation capacity after the early survival bottleneck has been alleviated.
Tab.~\ref{tab:daca} compares three scheduling strategies:
descending, uniform, and ascending.
The ascending schedule consistently yields the largest
$\bar{m}_\ell$ and the highest survival rates, while the descending schedule performs worst.
Hence, reducing early-level competition is an effective way to enlarge the teacher margin in practice, despite the smaller early-level vocabulary reducing reconstruction capacity:
reconstruction fidelity and beam search survival
are distinct objectives.

\begin{table}[t]
\caption{Performance of three vocabulary schedules on COCO.}
\label{tab:daca}
\centering
\setlength{\tabcolsep}{2mm}
\resizebox{0.95\textwidth}{!}{%
\begin{tabular}{@{}l ccc ccc ccc ccc@{}}
\toprule
& \multicolumn{3}{c}{Level 0}
& \multicolumn{3}{c}{Level 1}
& \multicolumn{3}{c}{Level 2}
& \multicolumn{3}{c}{Level 3} \\
\cmidrule(lr){2-4} \cmidrule(lr){5-7} \cmidrule(lr){8-10} \cmidrule(lr){11-13}
& Desc. & Unif. & \textbf{Asc.}
& Desc. & Unif. & \textbf{Asc.}
& Desc. & Unif. & \textbf{Asc.}
& Desc. & Unif. & \textbf{Asc.} \\
\midrule
$V$
  & 2048 & 1024 & \textbf{512}
  & 2048 & 1024 & \textbf{512}
  & 2048 & 1024 & \textbf{512}
  & 2048 & 1024 & \textbf{512} \\
$\bar{m}_\ell$ $\uparrow$
  & 0.035 & 0.079 & \textbf{0.109}
  & 0.005 & 0.013 & \textbf{0.022}
  & 0.003 & 0.007 & \textbf{0.013}
  & 0.002 & 0.005 & \textbf{0.010} \\
$\bar{K}_\ell$ $\downarrow$
  & 1.04 & 1.02 & \textbf{0.92}
  & 0.62 & 0.58 & \textbf{0.51}
  & 0.46 & 0.43 & \textbf{0.35}
  & 0.39 & 0.35 & \textbf{0.28} \\
Surv.\,(\%) $\uparrow$
  & 55.2 & 66.6 & \textbf{75.1}
  & 19.5 & 31.2 & \textbf{41.1}
  & 16.4 & 23.9 & \textbf{31.1}
  & 10.0 & 18.0 & \textbf{26.2} \\
\bottomrule
\end{tabular}%
}
\vspace*{-.5\baselineskip}
\end{table}

\emph{Geometric score fusion.}
\phantomsection
\label{sec:cdsb}
Geometric score fusion is designed to reduce the decoder mismatch $\varepsilon_\ell$.
During beam search, the decoder assigns a log-probability to each legal next token based on the query and the current prefix.
Each legal token also corresponds to an RQ codeword.
Let $\hat{x}_p$ denote the partial reconstruction represented by the current beam prefix~$p$, and let $r_p = q-\hat{x}_p$ denote the residual from this partial reconstruction to the query embedding.
This residual is defined only for scoring candidate prefixes during decoding and should not be confused with the RQ residual used during tokenizer assignment.
Since the decoder score does not explicitly measure whether a candidate codeword moves the partial reconstruction closer to the query, we use $r_p$ to compute a geometric bias for each candidate token:
\begin{equation}
  s^{\mathrm{fuse}}_\ell(v \mid p, q)
  \;=\; \log P_{\mathrm{decoder}}(v \mid q, p)
  \;+\; \omega \bigl(
    2 r_p^\top c_{\ell,v}
    - c_{\ell,v}^\top c_{\ell,v}
  \bigr),
\label{eq:cdsb}
\end{equation}
Here, $c_{\ell,v}$ is the codeword associated with candidate token~$v$, and $\omega$ controls the strength of geometric score fusion.
The added term is exactly the reduction in squared distance from the query to the partial reconstruction after appending~$v$, since $\|r_p\|^2 - \|r_p-c_{\ell,v}\|^2 = 2 r_p^\top c_{\ell,v} - c_{\ell,v}^\top c_{\ell,v}$.
Thus, the bias rewards candidate codewords that reduce the squared query-to-reconstruction distance, with larger values corresponding to larger reductions.
After adding this bias to the decoder log-probability, beam search can take the RQ reconstruction geometry into account when scoring candidate prefixes.
This better aligns the fused decoder distribution with the quantized distribution and helps reduce decoder mismatch.
App.~\ref{app:score_fusion} formalizes this connection by defining the corresponding conditional distributions and showing when the fused distribution moves closer to the quantized oracle distribution.
At $\omega = 0$, the score reduces to the original decoder score.
Score fusion operates at inference only and requires no changes to tokenizer training.

\section{Experimental Settings}
\label{sec:exp}


\label{sec:setup}

We evaluate on 9 dataset--task pairs from the M-BEIR benchmark~\citep{wei2023uniir} and Flickr30k~\citep{young2014flickr30k}, covering three query types: single-modality (COCO~\citep{lin2014microsoft}, Flickr30k~\citep{young2014flickr30k}, WebQA~\citep{chang2022webqa}, NIGHTS~\citep{fu2023dreamsim}), text-to-multimodal, and composed image+text (CIRR~\citep{liu2021cirr}, OVEN~\citep{hu2023oven}, InfoSeek~\citep{chen2023infoseek}). Candidate pools range from 1K to 612K. All results use task-specific pools and report Recall@1 and Recall@5 without reranking.
We compare with two dense baselines from UniIR~\citep{wei2023uniir}, CLIP-SF and BLIP-FF, and five generative methods: IRGen~\citep{zhang2024irgen}, GRACE~\citep{li2024grace}, AVG~\citep{cai2025avg}, SemCORE~\citep{wu2025semcore}, and GENIUS~\citep{genius2025}. Among them, GENIUS is the only universal multimodal generative retriever. The remaining four methods support only text-to-image retrieval.
For fairness, GENIUS and PRO share the same CLIP-SF encoder and T5-small decoder~\citep{raffel2020exploring}. For prior single-task baselines, we report published or reproduced results when available. PRO uses 16 RQ levels, an ascending vocabulary schedule (512, 1024, 2048), prefix ranking distillation (weight 100, temperature 0.05), and geometric score fusion (weight 10) with beam size 50. Diagnostic analyses use beam size 20 to amplify pruning effects, while all reported metrics use beam size 50. Full training details are in App.~\ref{app:impl}.

\section{Experimental Results}
\textbf{Main experimental comparison.}
\phantomsection
\label{sec:main_results}
Tab.~\ref{tab:main} reports Recall@1 and Recall@5 across all $9$ tasks.
PRO consistently outperforms GENIUS on every task, indicating robust end-to-end improvements across diverse multimodal settings. 
Although dense baselines remain stronger overall, PRO significantly narrows the gap to dense retrieval. 
The reduction is most pronounced on Flickr30k text-to-image, where PRO closes over 60\% of the gap to dense retrieval, and on the two COCO tasks, where it closes nearly 50\%. 
These results demonstrate that PRO not only improves upon prior generative baselines but also substantially enhances the competitiveness of MGR relative to dense retrieval.

\begin{table}[!t]
\caption{Main results on M-BEIR, retrieved from a task-specific pool.
\textbf{Bold}: best in each category (dense / generative). A dash (--) indicates that the corresponding baseline does not support the setting.}
\label{tab:main}
\vspace{1mm}
\centering
\setlength{\tabcolsep}{3pt}
\resizebox{\textwidth}{!}{%
\footnotesize
\begin{tabular}{l cc cc cc cc cc cc cc cc cc}
\toprule
\multirow{3}{*}{\textbf{Method}}
& \multicolumn{2}{c}{COCO}
& \multicolumn{2}{c}{Flickr30k}
& \multicolumn{2}{c}{WebQA}
& \multicolumn{2}{c}{WebQA}
& \multicolumn{2}{c}{COCO}
& \multicolumn{2}{c}{NIGHTS}
& \multicolumn{2}{c}{CIRR}
& \multicolumn{2}{c}{OVEN}
& \multicolumn{2}{c}{InfoSeek} 
\\
& \multicolumn{2}{c}{$t {\to} i$}
& \multicolumn{2}{c}{$t {\to} i$}
& \multicolumn{2}{c}{$t {\to} t$}
& \multicolumn{2}{c}{$t {\to} i\text{,}t$}
& \multicolumn{2}{c}{$i {\to} t$}
& \multicolumn{2}{c}{$i {\to} i$}
& \multicolumn{2}{c}{$i\text{,}t {\to} i$}
& \multicolumn{2}{c}{$i\text{,}t {\to} i\text{,}t$}
& \multicolumn{2}{c}{$i\text{,}t {\to} t$} \\
\cmidrule(lr){2-3} \cmidrule(lr){4-5} \cmidrule(lr){6-7}
\cmidrule(lr){8-9} \cmidrule(lr){10-11} \cmidrule(lr){12-13}
\cmidrule(lr){14-15} \cmidrule(lr){16-17} \cmidrule(lr){18-19}
& R@1 & R@5
& R@1 & R@5
& R@1 & R@5
& R@1 & R@5
& R@1 & R@5
& R@1 & R@5
& R@1 & R@5
& R@1 & R@5
& R@1 & R@5 \\
\midrule
\multicolumn{19}{l}{\textit{Dense}} \\
CLIP-SF
  & \textbf{55.2} & \textbf{80.7}
  & \textbf{79.1} & \textbf{95.0}
  & \textbf{58.3} & \textbf{84.2}
  & 47.6 & 76.3
  & 65.7 & 87.5
  & \textbf{8.8} & \textbf{31.6}
  & \phantom{0}5.9 & 43.2
  & \textbf{49.4} & \textbf{69.2}
  & \textbf{14.5} & \textbf{28.8} \\
BLIP-FF
  & 53.7 & 79.6
  & 73.9 & 92.4
  & 51.9 & 78.6
  & \textbf{49.5} & \textbf{78.1}
  & \textbf{71.1} & \textbf{90.9}
  & 8.1 & 31.0
  & \textbf{24.4} & \textbf{50.9}
  & 36.1 & 56.4
  & 10.5 & 23.4 \\
\midrule
\multicolumn{19}{l}{\textit{Generative}} \\
IRGen
  & 29.6 & 50.7
  & 49.0 & 68.9
  & -- & --
  & -- & --
  & -- & --
  & -- & --
  & -- & --
  & -- & --
  & -- & -- \\
GRACE
  & 16.7 & 39.5
  & 37.4 & 59.5
  & -- & --
  & -- & --
  & -- & --
  & -- & --
  & -- & --
  & -- & --
  & -- & -- \\
AVG
  & 31.3 & 58.0
  & 62.8 & 85.4
  & -- & --
  & -- & --
  & -- & --
  & -- & --
  & -- & --
  & -- & --
  & -- & -- \\
SemCORE
  & 42.4 & 57.5
  & 69.0 & 83.0
  & -- & --
  & -- & --
  & -- & --
  & -- & --
  & -- & --
  & -- & --
  & -- & -- \\
GENIUS
  & 37.1 & 66.4
  & 56.8 & 81.3
  & 19.4 & 29.9
  & 29.1 & 51.3
  & 47.9 & 77.3
  & 1.2 & 11.6
  & \phantom{0}5.5 & 19.0
  & 26.4 & 36.0
  & \phantom{0}6.1 & 10.2 \\
Ours
  & \textbf{45.8} & \textbf{72.6}
  & \textbf{70.6} & \textbf{87.3}
  & \textbf{25.7} & \textbf{40.7}
  & \textbf{35.3} & \textbf{57.6}
  & \textbf{59.0} & \textbf{84.3}
  & \textbf{3.5} & \textbf{19.7}
  & \phantom{0}\textbf{8.6} & \textbf{27.8}
  & \textbf{34.5} & \textbf{51.3}
  & \phantom{0}\textbf{7.3} & \textbf{15.0} \\
\bottomrule
\end{tabular}%
}
\vspace*{-.5\baselineskip}
\end{table}

\textbf{Efficiency analysis.}
\phantomsection
\label{sec:efficiency}
We compare the retrieval efficiency of PRO and dense retrieval as the candidate pool size increases. 
As shown in Fig.~\ref{fig:efficiency}, PRO maintains nearly constant throughput as the pool grows, whereas dense retrieval degrades steadily, with the two curves crossing at around 50K candidates.
Moreover, geometric score fusion introduces negligible overhead, as the throughput curves with and without it almost overlap. 
These results indicate that PRO preserves the scalability advantage of generative retrieval while maintaining strong end-to-end performance. 
Further details, including index size breakdowns, are provided in App.~\ref{app:efficiency}.

\Needspace{22\baselineskip}
\noindent\textbf{Ablation study.}\phantomsection\label{sec:ablation}

\begin{wraptable}[17]{r}{0.38\columnwidth}
\vspace*{-2mm}
\footnotesize
\setlength{\tabcolsep}{2.5pt}
\caption{Ablation study result.}
\label{tab:ablation}
\vspace{-1mm}
\centering
\begin{tabular}{l l rrrr}
\toprule
& & Base & +PD & +VS & +SF \\
\midrule
\multirow{2}{*}{COCO \scalebox{.8}{$t{\to}i$}}
  & \tinymetric{R@1} & 36.8 & 40.2 & 42.1 & \textbf{46.2} \\
  & \tinymetric{R@5} & 66.5 & 68.2 & 70.7 & \textbf{73.0} \\
\addlinespace[1.5pt]
\multirow{2}{*}{COCO \scalebox{.8}{$i{\to}t$}}
  & \tinymetric{R@1} & 48.1 & 51.5 & 53.6 & \textbf{59.3} \\
  & \tinymetric{R@5} & 77.2 & 78.9 & 81.4 & \textbf{84.6} \\
\addlinespace[1.5pt]
\multirow{2}{*}{WebQA \scalebox{.8}{$t{\to}t$}}
  & \tinymetric{R@1} & 19.5 & 21.2 & 21.6 & \textbf{26.2} \\
  & \tinymetric{R@5} & 30.0 & 31.7 & 33.2 & \textbf{39.7} \\
\addlinespace[1.5pt]
\multirow{2}{*}{WebQA \scalebox{.8}{$t{\to}i\text{,}t$}}
  & \tinymetric{R@1} & 28.9 & 31.4 & 33.5 & \textbf{35.6} \\
  & \tinymetric{R@5} & 51.4 & 53.1 & 55.2 & \textbf{57.5} \\
\addlinespace[1.5pt]
\multirow{2}{*}{OVEN \scalebox{.8}{$i\text{,}t{\to}i$}}
  & \tinymetric{R@1} & 5.9  & 5.6  & 8.8  & \textbf{9.1}  \\
  & \tinymetric{R@5} & 14.7 & 14.2 & \textbf{19.0} & 17.6 \\
\addlinespace[1.5pt]
\multirow{2}{*}{OVEN \scalebox{.8}{$i\text{,}t{\to}i\text{,}t$}}
  & \tinymetric{R@1} & 26.7 & 28.4 & 32.2 & \textbf{34.4} \\
  & \tinymetric{R@5} & 36.4 & 39.9 & 47.5 & \textbf{51.4} \\
\bottomrule
\multicolumn{6}{l}{\scriptsize PD: prefix distill.\ ($K_\ell{\downarrow}$),
  VS: vocab sched.\ ($m_\ell{\uparrow}$),} \\
\multicolumn{6}{l}{\scriptsize SF: score fusion ($\varepsilon_\ell{\downarrow}$).
  Bold: best per row.}
\end{tabular}
\end{wraptable}

\noindent
To isolate the contribution of each component, we add them cumulatively following the survival bound (see Eq.~\ref{eq:survival}): prefix distillation (reducing ranking divergence, $K_\ell{\downarrow}$), vocabulary scheduling (increasing the teacher margin, $m_\ell{\uparrow}$), and score fusion (mitigating decoder mismatch, $\varepsilon_\ell{\downarrow}$). 
Tab.~\ref{tab:ablation} reports results on six tasks spanning all three query families; each configuration requires retraining both the tokenizer and the decoder.
The full system achieves the best Recall@1 on every task, confirming that the three components are complementary. 
Two consistent patterns emerge. First, the dominant single-component gain varies by task type, in line with the survival bound. 
Vocabulary scheduling is most effective when the teacher margin is the primary bottleneck, whereas score fusion yields larger gains when the geometric proxy is more reliable. 
Second, prefix distillation enhances the effectiveness of vocabulary scheduling. 
When ranking divergence is large, increasing the margin alone does not reliably preserve the target prefix; once ranking divergence is reduced, vocabulary scheduling becomes consistently beneficial, and the two components exhibit super-additive effects. 
App.~\ref{app:full_ablation} provides full configurations and pairwise interaction analyses. 
Additional encoder sensitivity experiments (App.~\ref{app:encoder_ablation}) show that the gap to dense retrieval persists for vanilla GENIUS across three multimodal encoders, while PRO consistently narrows this gap. 
This suggests that the improvements are not tied to a specific encoder, but arise from mitigating the RQ tokenization and decoding bottleneck.

\begin{figure}[t]
  \centering
  \begin{subfigure}[t]{0.315\textwidth}
    \centering
    \includegraphics[width=\linewidth]{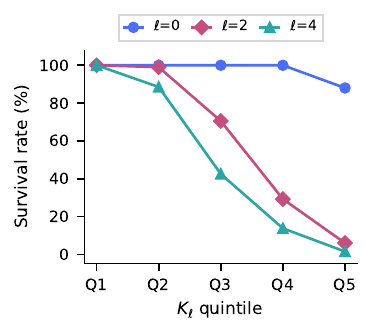}
  \vspace*{-7mm}
    \caption{Ranking divergence}
    \label{fig:theory_validation_kl}
  \end{subfigure}\hfill
  \begin{subfigure}[t]{0.315\textwidth}
    \centering
    \includegraphics[width=\linewidth]{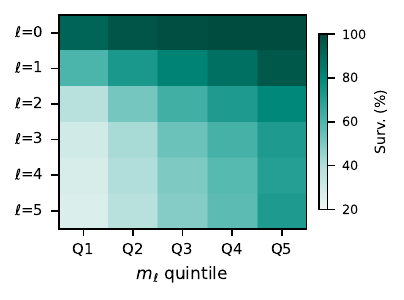}
  \vspace*{-7mm}
    \caption{Teacher margin}
    \label{fig:theory_validation_margin}
  \end{subfigure}\hfill
  \begin{subfigure}[t]{0.315\textwidth}
    \centering
    \includegraphics[width=\linewidth]{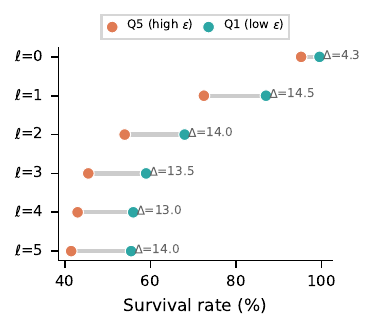}
  \vspace*{-7mm}
    \caption{Decoder mismatch}
    \label{fig:theory_validation_mismatch}
  \end{subfigure}
  \vspace*{-2mm}
  \caption{Survival trends predicted by Thm.~\ref{thm:survival}:
    ranking divergence quintile (Spearman $\rho {=} {-}0.59$), teacher margin
    quintile, and survival gap between the lowest and highest decoder
    mismatch quintiles.
  }
  \label{fig:theory_validation}
\vspace*{-.5\baselineskip}
\end{figure}

\textbf{Diagnostic analysis.}
\phantomsection
\label{sec:diagnostic}
Next, we examine whether the quantities in the survival bound are predictive of prefix survival in practice.
Following Thm.~\ref{thm:survival}, we bin 5{,}000 COCO text-to-image test queries into quintiles of ranking divergence~$K_\ell$, teacher margin~$m_\ell$, and decoder mismatch~$\varepsilon_\ell$, and measure cumulative prefix survival at beam width 20.
Fig.~\ref{fig:theory_validation} shows clear monotonic trends for all three quantities.
Ranking divergence is the strongest predictor of survival, teacher margin exhibits a consistent positive association, and decoder mismatch shows a consistent negative association across levels.
In contrast, the representative reconstruction objective on complete code sequences examined in Sec.~\ref{sec:evidence} shows negligible association with survival, consistent with Thm.~\ref{thm:qdg}, which states that reconstruction quality on complete code sequences does not imply prefix survival.
Complementarily, Fig.~\ref{fig:efficiency}\subref{fig:oracle_prefix_survival} shows that our method improves the per-level survival rate of the target identifier token under the same top-$B$ criterion.
Although the bound derived via Pinsker's inequality is loose, the observed trends align with its predicted dependencies and support the practical relevance of the three quantities.
App.~\ref{app:cross_dataset} extends this analysis to five datasets across all three query families, confirming that the trends for $K_\ell$ and $m_\ell$ generalize, while $\varepsilon_\ell$ remains a broad bottleneck.

\textbf{Generalization Beyond Multimodal Retrieval.}
Since both our analysis and PRO are built on RQ identifiers, we further evaluate generalization on TIGER~\citep{rajput2023recommender}, a representative RQ-based generative recommendation method.
We apply PRO to TIGER and evaluate on the Amazon Beauty, Toys, and Sports categories~\citep{hou2024bridging}.
As shown in App.~\ref{app:gen_rec}, PRO improves all 12 Recall and NDCG metrics over TIGER.
These results suggest that our analysis of prefix survival and PRO extend beyond multimodal retrieval to other RQ-based generative settings.

\section{Related Work}
\label{sec:related}

\textbf{Generative retrieval.}
\ac{GR} formulates retrieval as sequence generation, where an encoder-decoder model maps queries directly to target identifiers~\citep{decao2021autoregressive,tay2022transformer}.
The paradigm has been studied for document retrieval in DSI and NCI~\citep{tay2022transformer,wang2022neural}, extended to recommendation in TIGER~\citep{rajput2023recommender}, and generalized to multimodal retrieval by IRGen~\citep{zhang2024irgen}, GRACE~\citep{li2024grace}, and GENIUS~\citep{genius2025}.
Most related to our work, RIPOR~\citep{zeng2024ripor} improves scalable generative retrieval through prefix-oriented ranking optimization and relevance-based identifier construction.
However, it does not optimize RQ tokenization for target prefix survival under beam search.
Our analysis focuses one stage earlier: Thm.~\ref{thm:qdg} shows that the index itself can deterministically prune the target prefix even with a perfect decoder.
Accordingly, two of our three mechanisms target tokenizer training.
The two approaches address complementary bottlenecks: RIPOR reduces decoder mismatch, while our tokenizer mechanisms reduce ranking divergence and enlarge the teacher margin.

\textbf{Document identifier design.}
The identifier is a key factor of GR.
Early work studied atomic, semantic, and hierarchical document identifiers in DSI and NCI~\citep{tay2022transformer,wang2022neural}.
Later proposals include multiview identifiers~\citep{li2023minder}, lexical index learning~\citep{lee2023glen}, learned tokenization~\citep{sun2023genret}, bottleneck-minimal indexing~\citep{du2024bottleneck}, and identifiers with graded relevance~\citep{tang2024gr2}.
These methods improve identifier semantics, compactness, or relevance
alignment.
Our work addresses a different axis: even with fixed identifiers, the
reconstruction-trained index might fail to preserve the target prefix
under beam search (Thm.~\ref{thm:qdg}).
The mechanisms we propose are therefore complementary to better
identifier designs, because any identifier scheme paired with RQ indexing
can still face this gap.

\textbf{Training and inference mismatch.}
Teacher forcing was introduced for recurrent neural networks~\citep{1998Ateacher}, and later work studied exposure bias caused by the mismatch between training with teacher forcing and free running decoding~\citep{ranzato2015sequence} and the gap between training objectives and beam search behavior~\citep{wiseman2016sequence}.
Inference time score combination, e.g., shallow fusion in neural machine translation, has been used to adapt decoding without retraining~\citep{gulcehre2015using}.
Within \ac{GR}, reinforcement learning from relevance feedback optimizes retrieval behavior~\citep{zhou2023enhancing}, prefix alignment optimization addresses discrepancy between training and inference~\citep{yu2026apao}, and \ac{RQ} refinements improve code utilization and index quality~\citep{kuai2024hourglass}.
Our setting differs in a key respect: this gap arises from a mismatch between index construction and beam decoding, rather than from train and test discrepancy inside the decoder.
The survival bound (Thm.~\ref{thm:survival}) formalizes this coupling by decomposing prefix survival into quantities spanning both stages.

\section{Conclusion}
\label{sec:conclusion}

This paper identifies and formalizes the indexing--decoding gap in \ac{MGR}, showing that identifier learning objectives are insufficient to determine prefix survival under beam decoding.
We derive a survival bound that relates prefix survival to three controllable factors across indexing and decoding, and propose \emph{PRO}, a unified framework that improves prefix survival via 
\begin{enumerate*}[label=(\roman*)]
    \item prefix ranking distillation to align prefix rankings,
    \item vocabulary scheduling to reduce early prefix competition, and 
    \item geometric score fusion to incorporate query-aware similarity into beam search scoring.
\end{enumerate*}
Experiments on $9$ multimodal retrieval tasks show that PRO consistently outperforms state-of-the-art \ac{MGR} methods and improves prefix survival, with strong generalizability across multimodal encoders and RQ-based methods.

\textbf{Limitations.}
While our results consistently improve prefix survival and multimodal retrieval performance, our analysis is specific to RQ-based pipelines and relies on their residual code structure, so its applicability to other identifier schemes remains to be established.
In addition, our framework assumes a two-stage setup with a fixed encoder and tokenizer, which may limit the potential of joint optimization.
Future work includes extending the analysis to other identifier families and developing end-to-end training with survival-aware objectives.

\bibliographystyle{plainnat}
\bibliography{references}

@article{equitz1991successive,
  title   = {Successive Refinement of Information},
  author  = {Equitz, William H. R. and Cover, Thomas M.},
  journal = {IEEE Transactions on Information Theory},
  volume  = {37},
  number  = {2},
  pages   = {269--275},
  year    = {1991},
}

@book{cover2006elements,
  title     = {Elements of Information Theory},
  author    = {Cover, Thomas M. and Thomas, Joy A.},
  publisher = {Wiley},
  year      = {2001},
}

@article{fedotov2003refinements,
  title   = {Refinements of {Pinsker}'s Inequality},
  author  = {Fedotov, Alexei A. and Harremo{\"e}s, Peter and Tops{\o}e, Flemming},
  journal = {IEEE Transactions on Information Theory},
  volume  = {49},
  number  = {6},
  pages   = {1491--1498},
  year    = {2003},
}

@inproceedings{rioul2024historical,
  title     = {A Historical Perspective on {Sch{\"u}tzenberger}-{Pinsker} Inequalities},
  author    = {Rioul, Olivier},
  booktitle = {Geometric Science of Information (GSI)},
  series    = {Lecture Notes in Computer Science},
  pages     = {291--306},
  publisher = {Springer},
  year      = {2023},
}

@article{sason2016f,
  title   = {$f$-Divergence Inequalities},
  author  = {Sason, Igal and Verd{\'u}, Sergio},
  journal = {IEEE Transactions on Information Theory},
  volume  = {62},
  number  = {11},
  pages   = {5973--6006},
  year    = {2016},
}

@article{1998Ateacher,
  title   = {A Learning Algorithm for Continually Running Fully Recurrent Neural Networks},
  author  = {Williams, Ronald J. and Zipser, David},
  journal = {Neural Computation},
  volume  = {1},
  number  = {2},
  pages   = {270--280},
  year    = {1989},
}

@inproceedings{ranzato2015sequence,
  title     = {Sequence Level Training with Recurrent Neural Networks},
  author    = {Ranzato, Marc'Aurelio and Chopra, Sumit and Auli, Michael and Zaremba, Wojciech},
  booktitle = {International Conference on Learning Representations (ICLR)},
  year      = {2016},
}

@inproceedings{wiseman2016sequence,
  title     = {Sequence-to-Sequence Learning as Beam-Search Optimization},
  author    = {Wiseman, Sam and Rush, Alexander M.},
  booktitle = {Proceedings of the Conference on Empirical Methods in Natural Language Processing (EMNLP)},
  pages     = {1296--1306},
  publisher = {Association for Computational Linguistics},
  year      = {2016},
}

@article{gulcehre2015using,
  title      = {On Using Monolingual Corpora in Neural Machine Translation},
  author     = {G{\"u}l{\c{c}}ehre, {\c{C}}a{\u{g}}lar and Firat, Orhan and Xu, Kelvin and Cho, Kyunghyun and Barrault, Lo{\"i}c and Lin, Huei-Chi and Bougares, Fethi and Schwenk, Holger and Bengio, Yoshua},
  journal    = {CoRR},
  volume     = {abs/1503.03535},
  year       = {2015},
  eprinttype = {arXiv},
  eprint     = {1503.03535},
}

@inproceedings{khosla2020supervised,
  title     = {Supervised Contrastive Learning},
  author    = {Khosla, Prannay and Teterwak, Piotr and Wang, Chen and Sarna, Aaron and Tian, Yonglong and Isola, Phillip and Maschinot, Aaron and Liu, Ce and Krishnan, Dilip},
  booktitle = {Advances in Neural Information Processing Systems (NeurIPS)},
  volume    = {33},
  pages     = {18661--18673},
  year      = {2020},
}

@article{hinton2015distilling,
  title      = {Distilling the Knowledge in a Neural Network},
  author     = {Hinton, Geoffrey E. and Vinyals, Oriol and Dean, Jeffrey},
  journal    = {CoRR},
  volume     = {abs/1503.02531},
  year       = {2015},
  eprinttype = {arXiv},
  eprint     = {1503.02531},
}

@article{raffel2020exploring,
  title   = {Exploring the Limits of Transfer Learning with a Unified Text-to-Text Transformer},
  author  = {Raffel, Colin and Shazeer, Noam and Roberts, Adam and Lee, Katherine and Narang, Sharan and Matena, Michael and Zhou, Yanqi and Li, Wei and Liu, Peter J.},
  journal = {Journal of Machine Learning Research},
  volume  = {21},
  pages   = {140:1--140:67},
  year    = {2020},
}

@inproceedings{cao2007learning,
  title     = {Learning to Rank: From Pairwise Approach to Listwise Approach},
  author    = {Cao, Zhe and Qin, Tao and Liu, Tie-Yan and Tsai, Ming-Feng and Li, Hang},
  booktitle = {Proceedings of the International Conference on Machine Learning (ICML)},
  series    = {ACM International Conference Proceeding Series},
  pages     = {129--136},
  publisher = {ACM},
  year      = {2007},
}

@inproceedings{lee2022autoregressive,
  title     = {Autoregressive Image Generation Using Residual Quantization},
  author    = {Lee, Doyup and Kim, Chiheon and Kim, Saehoon and Cho, Minsu and Han, Wook-Shin},
  booktitle = {Proceedings of the IEEE/CVF Conference on Computer Vision and Pattern Recognition (CVPR)},
  pages     = {11513--11522},
  publisher = {IEEE},
  year      = {2022},
}

@article{jegou2011product,
  title   = {Product Quantization for Nearest Neighbor Search},
  author  = {J{\'e}gou, Herv{\'e} and Douze, Matthijs and Schmid, Cordelia},
  journal = {IEEE Transactions on Pattern Analysis and Machine Intelligence},
  volume  = {33},
  number  = {1},
  pages   = {117--128},
  year    = {2011},
}

@inproceedings{lin2014microsoft,
  title     = {Microsoft {COCO}: Common Objects in Context},
  author    = {Lin, Tsung-Yi and Maire, Michael and Belongie, Serge J. and Hays, James and Perona, Pietro and Ramanan, Deva and Doll{\'a}r, Piotr and Zitnick, C. Lawrence},
  booktitle = {European Conference on Computer Vision (ECCV)},
  series    = {Lecture Notes in Computer Science},
  pages     = {740--755},
  publisher = {Springer},
  year      = {2014},
}

@inproceedings{wei2023uniir,
  title     = {{UniIR}: Training and Benchmarking Universal Multimodal Information Retrievers},
  author    = {Wei, Cong and Chen, Yang and Chen, Haonan and Hu, Hexiang and Zhang, Ge and Fu, Jie and Ritter, Alan and Chen, Wenhu},
  booktitle = {European Conference on Computer Vision (ECCV)},
  series    = {Lecture Notes in Computer Science},
  pages     = {387--404},
  publisher = {Springer},
  year      = {2024},
}

@article{young2014flickr30k,
  title   = {From Image Descriptions to Visual Denotations: New Similarity Metrics for Semantic Inference over Event Descriptions},
  author  = {Young, Peter and Lai, Alice and Hodosh, Micah and Hockenmaier, Julia},
  journal = {Transactions of the Association for Computational Linguistics},
  volume  = {2},
  pages   = {67--78},
  year    = {2014},
}

@inproceedings{chang2022webqa,
  title     = {{WebQA}: Multihop and Multimodal {QA}},
  author    = {Chang, Yingshan and Cao, Guihong and Narang, Mridu and Gao, Jianfeng and Suzuki, Hisami and Bisk, Yonatan},
  booktitle = {Proceedings of the IEEE/CVF Conference on Computer Vision and Pattern Recognition (CVPR)},
  pages     = {16474--16483},
  publisher = {IEEE},
  year      = {2022},
}

@inproceedings{fu2023dreamsim,
  title     = {{DreamSim}: Learning New Dimensions of Human Visual Similarity Using Synthetic Data},
  author    = {Fu, Stephanie and Tamir, Netanel and Sundaram, Shobhita and Chai, Lucy and Zhang, Richard and Dekel, Tali and Isola, Phillip},
  booktitle = {Advances in Neural Information Processing Systems (NeurIPS)},
  volume    = {36},
  year      = {2023},
}

@inproceedings{liu2021cirr,
  title     = {Image Retrieval on Real-Life Images with Pre-Trained Vision-and-Language Models},
  author    = {Liu, Zheyuan and Rodriguez Opazo, Cristian and Teney, Damien and Gould, Stephen},
  booktitle = {Proceedings of the IEEE/CVF International Conference on Computer Vision (ICCV)},
  pages     = {2105--2114},
  publisher = {IEEE},
  year      = {2021},
}

@inproceedings{hu2023oven,
  title     = {Open-Domain Visual Entity Recognition: Towards Recognizing Millions of {Wikipedia} Entities},
  author    = {Hu, Hexiang and Luan, Yi and Chen, Yang and Khandelwal, Urvashi and Joshi, Mandar and Lee, Kenton and Toutanova, Kristina and Chang, Ming-Wei},
  booktitle = {Proceedings of the IEEE/CVF International Conference on Computer Vision (ICCV)},
  pages     = {12031--12041},
  publisher = {IEEE},
  year      = {2023},
}

@inproceedings{chen2023infoseek,
  title     = {Can Pre-Trained Vision and Language Models Answer Visual Information-Seeking Questions?},
  author    = {Chen, Yang and Hu, Hexiang and Luan, Yi and Sun, Haitian and Changpinyo, Soravit and Ritter, Alan and Chang, Ming-Wei},
  booktitle = {Proceedings of the Conference on Empirical Methods in Natural Language Processing (EMNLP)},
  pages     = {14948--14968},
  publisher = {Association for Computational Linguistics},
  year      = {2023},
}

@article{hou2024bridging,
  title      = {Bridging Language and Items for Retrieval and Recommendation},
  author     = {Hou, Yupeng and Li, Jiacheng and He, Zhankui and Yan, An and Chen, Xiusi and McAuley, Julian J.},
  journal    = {CoRR},
  volume     = {abs/2403.03952},
  year       = {2024},
  eprinttype = {arXiv},
  eprint     = {2403.03952},
}

@inproceedings{radford2021learning,
  title     = {Learning Transferable Visual Models from Natural Language Supervision},
  author    = {Radford, Alec and Kim, Jong Wook and Hallacy, Chris and Ramesh, Aditya and Goh, Gabriel and Agarwal, Sandhini and Sastry, Girish and Askell, Amanda and Mishkin, Pamela and Clark, Jack and Krueger, Gretchen and Sutskever, Ilya},
  booktitle = {Proceedings of the International Conference on Machine Learning (ICML)},
  series    = {Proceedings of Machine Learning Research},
  pages     = {8748--8763},
  publisher = {PMLR},
  year      = {2021},
}

@inproceedings{huang2026llm2clip,
  title     = {{LLM2CLIP}: Powerful Language Model Unlocks Richer Cross-Modality Representation},
  author    = {Huang, Weiquan and Wu, Aoqi and Yang, Yifan and Luo, Xufang and Yang, Yuqing and Naseem, Usman and Wang, Chunyu and Dai, Qi and Dai, Xiyang and Chen, Dongdong and Luo, Chong and Qiu, Lili and Hu, Liang},
  booktitle = {Proceedings of the AAAI Conference on Artificial Intelligence (AAAI)},
  pages     = {5131--5139},
  publisher = {AAAI Press},
  year      = {2026},
}

@inproceedings{decao2021autoregressive,
  title     = {Autoregressive Entity Retrieval},
  author    = {De Cao, Nicola and Izacard, Gautier and Riedel, Sebastian and Petroni, Fabio},
  booktitle = {International Conference on Learning Representations (ICLR)},
  publisher = {OpenReview.net},
  year      = {2021},
}

@inproceedings{tay2022transformer,
  title     = {Transformer Memory as a Differentiable Search Index},
  author    = {Tay, Yi and Tran, Vinh and Dehghani, Mostafa and Ni, Jianmo and Bahri, Dara and Mehta, Harsh and Qin, Zhen and Hui, Kai and Zhao, Zhe and Gupta, Jai Prakash and Schuster, Tal and Cohen, William W. and Metzler, Donald},
  booktitle = {Advances in Neural Information Processing Systems (NeurIPS)},
  volume    = {35},
  year      = {2022},
}

@inproceedings{wang2022neural,
  title     = {A Neural Corpus Indexer for Document Retrieval},
  author    = {Wang, Yujing and Hou, Yingyan and Wang, Haonan and Miao, Ziming and Wu, Shibin and Chen, Qi and Xia, Yuqing and Chi, Chengmin and Zhao, Guoshuai and Liu, Zheng and Xie, Xing and Sun, Hao and Deng, Weiwei and Zhang, Qi and Yang, Mao},
  booktitle = {Advances in Neural Information Processing Systems (NeurIPS)},
  volume    = {35},
  year      = {2022},
}

@inproceedings{rajput2023recommender,
  title     = {Recommender Systems with Generative Retrieval},
  author    = {Rajput, Shashank and Mehta, Nikhil and Singh, Anima and Keshavan, Raghunandan Hulikal and Vu, Trung and Heldt, Lukasz and Hong, Lichan and Tay, Yi and Tran, Vinh Q. and Samost, Jonah and Kula, Maciej and Chi, Ed H. and Sathiamoorthy, Mahesh},
  booktitle = {Advances in Neural Information Processing Systems (NeurIPS)},
  volume    = {36},
  year      = {2023},
}

@inproceedings{li2023minder,
  title     = {{MINDER}: Multiview Identifiers Enhanced Generative Retrieval},
  author    = {Li, Yongqi and Yang, Nan and Wang, Liang and Wei, Furu and Li, Wenjie},
  booktitle = {Proceedings of the Annual Meeting of the Association for Computational Linguistics (ACL)},
  pages     = {6636--6648},
  publisher = {Association for Computational Linguistics},
  year      = {2023},
}

@inproceedings{sun2023genret,
  title     = {Learning to Tokenize for Generative Retrieval},
  author    = {Sun, Weiwei and Yan, Lingyong and Chen, Zheng and Wang, Shuaiqiang and Zhu, Haichao and Ren, Pengjie and Chen, Zhumin and Yin, Dawei and de Rijke, Maarten and Ren, Zhaochun},
  booktitle = {Advances in Neural Information Processing Systems (NeurIPS)},
  volume    = {36},
  year      = {2023},
}

@inproceedings{lee2023glen,
  title     = {{GLEN}: Generative Retrieval via Lexical Index Learning},
  author    = {Lee, Sunkyung and Choi, Minjin and Lee, Jongwuk},
  booktitle = {Proceedings of the Conference on Empirical Methods in Natural Language Processing (EMNLP)},
  pages     = {7693--7704},
  publisher = {Association for Computational Linguistics},
  year      = {2023},
}

@inproceedings{zhou2023enhancing,
  title     = {Enhancing Generative Retrieval with Reinforcement Learning from Relevance Feedback},
  author    = {Zhou, Yujia and Dou, Zhicheng and Wen, Ji-Rong},
  booktitle = {Proceedings of the Conference on Empirical Methods in Natural Language Processing (EMNLP)},
  pages     = {12481--12490},
  publisher = {Association for Computational Linguistics},
  year      = {2023},
}

@inproceedings{zeng2024ripor,
  title     = {Scalable and Effective Generative Information Retrieval},
  author    = {Zeng, Hansi and Luo, Chen and Jin, Bowen and Sarwar, Sheikh Muhammad and Wei, Tianxin and Zamani, Hamed},
  booktitle = {Proceedings of the ACM Web Conference (WWW)},
  pages     = {1441--1452},
  publisher = {ACM},
  year      = {2024},
}

@inproceedings{du2024bottleneck,
  title     = {Bottleneck-Minimal Indexing for Generative Document Retrieval},
  author    = {Du, Xin and Xiu, Lixin and Tanaka-Ishii, Kumiko},
  booktitle = {Proceedings of the International Conference on Machine Learning (ICML)},
  series    = {Proceedings of Machine Learning Research},
  pages     = {11888--11904},
  publisher = {PMLR / OpenReview.net},
  year      = {2024},
}

@inproceedings{tang2024gr2,
  title     = {Generative Retrieval Meets Multi-Graded Relevance},
  author    = {Tang, Yubao and Zhang, Ruqing and Guo, Jiafeng and de Rijke, Maarten and Chen, Wei and Cheng, Xueqi},
  booktitle = {Advances in Neural Information Processing Systems (NeurIPS)},
  volume    = {38},
  year      = {2024},
}

@inproceedings{kuai2024hourglass,
  title     = {Breaking the Hourglass Phenomenon of Residual Quantization: Enhancing the Upper Bound of Generative Retrieval},
  author    = {Kuai, Zhirui and Chen, Zuxu and Wang, Huimu and Li, Mingming and Miao, Dadong and Wang, Binbin and Chen, Xusong and Kuang, Li and Han, Yuxing and Wang, Jiaxing and Tang, Guoyu and Liu, Lin and Wang, Songlin and Zhuo, Jingwei},
  booktitle = {Proceedings of the Conference on Empirical Methods in Natural Language Processing: Industry Track (EMNLP)},
  pages     = {677--685},
  publisher = {Association for Computational Linguistics},
  year      = {2024},
}

@article{yu2026apao,
  title      = {{APAO}: Adaptive Prefix-Aware Optimization for Generative Recommendation},
  author     = {Yu, Yuanqing and Wang, Yifan and Ma, Weizhi and Guo, Zhiqiang and Zhang, Min},
  journal    = {CoRR},
  volume     = {abs/2603.02730},
  year       = {2026},
  eprinttype = {arXiv},
  eprint     = {2603.02730},
}

@inproceedings{zhang2024irgen,
  title     = {{IRGen}: Generative Modeling for Image Retrieval},
  author    = {Zhang, Yidan and Zhang, Ting and Chen, Dong and Wang, Yujing and Chen, Qi and Xie, Xing and Sun, Hao and Deng, Weiwei and Zhang, Qi and Yang, Fan and Yang, Mao and Liao, Qingmin and Wang, Jingdong and Guo, Baining},
  booktitle = {European Conference on Computer Vision (ECCV)},
  series    = {Lecture Notes in Computer Science},
  pages     = {21--41},
  publisher = {Springer},
  year      = {2024},
}

@inproceedings{li2024grace,
  title     = {Generative Cross-Modal Retrieval: Memorizing Images in Multimodal Language Models for Retrieval and Beyond},
  author    = {Li, Yongqi and Wang, Wenjie and Qu, Leigang and Nie, Liqiang and Li, Wenjie and Chua, Tat-Seng},
  booktitle = {Proceedings of the Annual Meeting of the Association for Computational Linguistics (ACL)},
  pages     = {11851--11861},
  publisher = {Association for Computational Linguistics},
  year      = {2024},
}

@inproceedings{genius2025,
  title     = {{GENIUS}: A Generative Framework for Universal Multimodal Search},
  author    = {Kim, Sungyeon and Zhu, Xinliang and Lin, Xiaofan and Bastan, Muhammet and Gray, Douglas and Kwak, Suha},
  booktitle = {Proceedings of the IEEE/CVF Conference on Computer Vision and Pattern Recognition (CVPR)},
  pages     = {19659--19669},
  publisher = {Computer Vision Foundation / IEEE},
  year      = {2025},
}

@inproceedings{cai2025avg,
  title     = {Revolutionizing Text-to-Image Retrieval as Autoregressive Token-to-Voken Generation},
  author    = {Li, Yongqi and Cai, Hongru and Wang, Wenjie and Qu, Leigang and Wei, Yinwei and Li, Wenjie and Nie, Liqiang and Chua, Tat-Seng},
  booktitle = {Proceedings of the International ACM SIGIR Conference on Research and Development in Information Retrieval (SIGIR)},
  pages     = {813--822},
  publisher = {ACM},
  year      = {2025},
}

@article{wu2025semcore,
  title      = {{SemCORE}: A Semantic-Enhanced Generative Cross-Modal Retrieval Framework with {MLLMs}},
  author     = {Li, Haoxuan and Bin, Yi and Ma, Yunshan and Wang, Guoyin and Yang, Yang and Ng, See-Kiong and Chua, Tat-Seng},
  journal    = {CoRR},
  volume     = {abs/2504.13172},
  year       = {2025},
  eprinttype = {arXiv},
  eprint     = {2504.13172},
}

\newpage
\appendix

\setcounter{theorem}{0}
\renewcommand{\thetheorem}{\Alph{section}.\arabic{theorem}}
\renewcommand{\theHtheorem}{\Alph{section}.\arabic{theorem}}  
\renewcommand{\qedsymbol}{}

\section*{Supplementary Material}

Our supplementary material has the following sections:
\begin{itemize}
    \item Proof details
    \item Additional experimental results
    \item Implementation details
\end{itemize}

\section{Proof Details}
\label{app:proofs}

We collect full proofs of the theoretical results stated in the main
text.  All notation follows Secs.~\ref{sec:qdg}--\ref{sec:method};
we briefly recall the key symbols for convenience.

\paragraph{Notation.}
We use the following groups of symbols throughout the proofs.
\begin{itemize}[leftmargin=1.8em,itemsep=0.15em]
  \item \textbf{Embeddings and codes.}
  $x_j \in \RR^d$ is the dense embedding of target item~$j$.
  $\hat{x}_{j,\ell} := \sum_{t=1}^{\ell} c_{t,z_j^t}$ is its partial
  reconstruction after $\ell$ RQ levels, and
  $\hat{x}_j := \hat{x}_{j,L}$ is its full reconstruction.
  $z_j^{1:\ell}$ denotes the length-$\ell$ code prefix.
  \item \textbf{Beam event.}
  $p^+_\ell$ is the ground-truth prefix at level~$\ell$, and
  $S_\ell$ is the event that $p^+_\ell$ survives in the beam.
  \item \textbf{Prefix distributions.}
  $\pi^*_\ell$ scores prefixes using pre-quantization embedding
  similarities; $\bar{\pi}_\ell$ is the quantized oracle distribution;
  and $\tilde{\pi}_\ell$ is the decoder prefix distribution.
  These objects are probability mass functions (PMFs) over the finite prefix
  set at level~$\ell$.  A lower-case symbol such as~$p$ denotes a
  prefix value; when a random prefix is needed, we use an upper-case
  symbol and write, for example, $P \sim \pi^*_\ell(\cdot)$.  Here,
  ``teacher'' refers to $\pi^*_\ell$.
  \item \textbf{Survival factors.}
  $K_\ell := \KL(\pi^*_\ell \| \bar{\pi}_\ell)$,
  $m_\ell := \pi^*_\ell(p^+_\ell) - \pi^*_{\ell,B+1}$, and
  $\varepsilon_\ell := \TV(\bar{\pi}_\ell, \tilde{\pi}_\ell)$.
  The KL and TV terms are taken between PMFs over the same prefix set.
  Here, $\pi^*_{\ell,B+1}$ denotes the probability of the
  $(B{+}1)$-th ranked prefix under the teacher.
\end{itemize}

For completeness, for each prefix value~$p$, the $3$ prefix PMFs are
\[
  \pi^*_\ell(p) \propto \exp\!\bigl(s^*_\ell(p)/\tau\bigr), \qquad
  \bar{\pi}_\ell(p) \propto \exp\!\bigl(\bar{s}_\ell(p)/\tau\bigr),
  \qquad
  \tilde{\pi}_\ell(p) \propto \exp\!\bigl(\tilde{s}_\ell(p)/\tau\bigr),
\]
where
\[
  s^*_\ell(p) = \max_{j:\,z_j^{1:\ell}=p} \langle q, x_j \rangle
\]
is the frozen-encoder similarity score,
\[
  \bar{s}_\ell(p) = \langle q, \hat{x}_{j,\ell} \rangle
\]
is the partial-reconstruction score shared by all target items with
prefix~$p$, and $\tilde{s}_\ell(p)$ denotes the actual decoder score.

\subsection{Proof of Theorem~\ref{thm:qdg}}
\label{app:proof_qdg}

\paragraph{Proof idea.}
The theorem separates existing tokenizer objectives from prefix survival
during beam search.
We construct two RQ tokenizers over the same query and item embeddings.
They select the same code indices and have the same complete query and
item reconstructions; their residual-fitting costs also match by
construction. Therefore, the contrastive, residual-fitting, and
reconstruction objectives considered in the theorem take identical values.
The difference lies only in how the first two selected codewords decompose
the same complete reconstruction.
In $\mathcal{Q}^+$, the target prefix has the highest quantized-oracle
score at every level.
In $\mathcal{Q}^-$, the first selected target codeword is shifted away
from the query direction and the second selected target codeword is shifted
back, leaving the complete reconstruction and objective values unchanged.
This shift lowers the target first-level prefix score below $B$ competing
prefixes, so beam search prunes it.
Thus, existing tokenizer objectives do not by themselves determine prefix
survival.
\paragraph{Construction.}
Let $K := B + 1$.  We construct the ambient space $\RR^d$ with
$d = 2K + L + 1$ as a direct sum of mutually orthogonal subspaces:
\[
  \RR^d \;=\; S \;\oplus\; U \;\oplus\; A_1 \;\oplus\; A_2 \;\oplus\;
  A_3 \;\oplus\; \cdots \;\oplus\; A_L \;\oplus\; R,
\]
where $\dim S = \dim U = 1$, $\dim A_1 = \dim A_2 = K$,
$\dim A_t = 1$ for $t = 3, \ldots, L$, and $\dim R = 1$.

Let $s \in S$ and $u \in U$ be unit vectors, and fix a large constant
$H > 0$.
The query embedding is $q := s + H u$.
In~$A_1$, fix $K$ orthonormal vectors
$a_1^{(1)}, \ldots, a_K^{(1)}$; in~$A_2$, fix $K$ orthonormal vectors
$a_1^{(2)}, \ldots, a_K^{(2)}$.
For $t \geq 3$, let $a_\star^{(t)}$ be a unit vector spanning~$A_t$.
Let $r$ be a unit vector spanning~$R$.
Fix a constant $M > 2$.

\textit{Codebooks for $\mathcal{Q}^+$.}\;
Level~1 ($K{+}1$ codewords):
$c_{1,0}^+ = q$;\;
$c_{1,1}^+ = s + Ma_1^{(1)}$;\;
$c_{1,j}^+ = Ma_j^{(1)}$ for $j = 2, \ldots, K$.
Level~2 ($K{+}1$ codewords):
$c_{2,0}^+ = 0$;\;
$c_{2,j}^+ = Ma_j^{(2)}$ for $j = 1, \ldots, K$.
Level $t \geq 3$ (two codewords each):
$c_{t,0}^+ = 0$;\;
$c_{t,\star}^+ = Ma_\star^{(t)}$.
The zero codewords are selected only by the query. Unused
dummy codewords can be placed sufficiently far away without changing any
assignment, objective value, or beam-search outcome below.
Thus the construction only requires finite vocabularies with
$V_1,V_2 \geq K+1$ and $V_t \geq 2$ for $t \geq 3$.

\textit{Items.}\;
$K = B + 1$ target items:
\begin{equation}
  x_j \;=\; c_{1,j}^+ + c_{2,j}^+
         + \sum_{t=3}^{L} c_{t,\star}^+ + \alpha\, r,
  \qquad j = 1, \ldots, K.
\label{eq:app_doc}
\end{equation}
Target item $j = 1$ is the target.

\begin{lemma}[Greedy assignment and reconstruction optimality under $\mathcal{Q}^+$]
\label{lem:app_mse}
Under $\mathcal{Q}^+$, the greedy nearest-neighbor RQ assigns
the query the code sequence $(0,0,\ldots,0)$ and target item~$x_j$ the
code sequence $(j,\, j,\, \star, \ldots, \star)$.
The resulting target reconstruction error is $\alpha^2$ for every item,
which is the global minimum over all code assignments for the constructed
codebooks.
\end{lemma}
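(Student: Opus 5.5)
The plan is to trace the greedy residual quantizer level by level, using the orthogonal decomposition $S\oplus U\oplus A_1\oplus A_2\oplus\cdots\oplus A_L\oplus R$. At each level we compare $\|r_{\ell-1}-c\|^2$ across codewords by writing each quantity as a sum of squared components in the orthogonal subspaces.

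\emph{Query.} At level~1 the residual is $q$ itself, so $c^+_{1,0}=q$ gives distance $0$. Every other codeword has distance at least $M^2>0$, because each carries an $M a^{(1)}_j$ component that $q$ lacks. After this step the residual is $0$. At every later level the zero codeword then gives distance $0$, while every other codeword has squared norm $M^2>0$. Hence the query's code sequence is $(0,\dots,0)$.

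\emph{Items at level 1.} Fix $x_j$ and compute its distance to each level-1 codeword.
\begin{itemize}
\item The codeword $c^+_{1,j}$ matches $x_j$ exactly in the $S\oplus A_1$ component. The residual is $c^+_{2,j}+\sum_t c^+_{t,\star}+\alpha r$, with squared norm $(L-1)M^2+\alpha^2$.
\item A codeword $c^+_{1,k}$ with $k\neq j$ adds an extra cost of $2M^2$ in $A_1$, from $\|Ma^{(1)}_j-Ma^{(1)}_k\|^2$. If $j$ or $k$ equals $1$, there may be a further $\pm1$ in $S$.
\item The codeword $q$ adds $H^2$ in $U$, plus $M^2$ in $A_1$ and a possible $1$ in $S$.
\end{itemize}
So $c^+_{1,j}$ is the strict minimizer whenever $2M^2>1$, which holds since $M>2$, and $H$ is taken large.

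\emph{Items at level 2.} The residual's $A_2$ component is $Ma^{(2)}_j$. Choosing $c^+_{2,j}$ removes $M^2$ from the squared norm. Choosing $c^+_{2,k}$ with $k\neq j$ adds $M^2$. Choosing $0$ leaves the norm unchanged. So index $j$ is strictly best.

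\emph{Items at levels $t\ge 3$.} Choosing $\star$ reduces the squared norm by $M^2$, while $0$ leaves it unchanged, so $\star$ is chosen. Unused dummy codewords are placed far away and are never selected.

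The final residual is $\alpha r$, so every item's reconstruction error is $\alpha^2$.

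\emph{Global minimality.} Every codeword lies in $S\oplus U\oplus A_1\oplus\cdots\oplus A_L$, which is orthogonal to $R$. So any code assignment gives a reconstruction $\hat x\perp r$. Therefore
\[
\|x_j-\hat x\|^2=\|P_{R^\perp}(x_j-\hat x)\|^2+\alpha^2\ \ge\ \alpha^2,
\]
and the greedy assignment attains this bound exactly.

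The only delicate point is the level-1 comparison for the target $j=1$ against $c^+_{1,k}$ with $k\ne1$. The $S$ component then contributes $1$ extra rather than a saving, so one must confirm that the $2M^2$ gap in $A_1$ dominates. It does, since $M>2$. All the remaining steps are routine orthogonal bookkeeping.
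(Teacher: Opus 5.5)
Your proposal is correct and follows the paper's proof essentially step for step. Both arguments do level-by-level orthogonal bookkeeping of squared distances (correct token at $(L-1)M^2+\alpha^2$, wrong item tokens at least $2M^2$ worse, the query-only token worse by at least $H^2$) and then apply the Pythagorean bound $\alpha^2$ coming from the $R$-component. One small correction: the ``delicate point'' you flag is not delicate, because the stray $S$-component $\pm s$ only adds $+1$ to the wrong codeword's distance, so neither $2M^2>1$ nor a large $H$ is needed for the item assignments.
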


\begin{proof}
\textit{Query.}\;
The query $q$ exactly matches $c_{1,0}^+$ at level~1 and has zero
residual afterwards, so it selects $(0,0,\ldots,0)$.

\textit{Level~1.}\;
The distance from~$x_j$ to the correct codeword~$c_{1,j}^+$ is
$\|x_j - c_{1,j}^+\|^2 = (L{-}1)M^2 + \alpha^2$
(only $A_2, \ldots, A_L, R$ components remain).
For any incorrect token $k \neq j$, the $A_1$-component mismatch
contributes $\|Ma_j^{(1)} - Ma_k^{(1)}\|^2 = 2M^2$ by orthonormality,
so $\|x_j - c_{1,k}^+\|^2 \geq 2M^2 + (L{-}1)M^2 + \alpha^2$.
The query-only token $c_{1,0}^+=q$ incurs an additional $H^2$ and is
therefore not selected by any target item.
The gap is at least $2M^2 > 0$; greedy selects token~$j$.

\textit{Level~2.}\;
After subtracting $c_{1,j}^+$, the residual has zero $A_1$-component.
By the same orthonormality argument, the correct token~$j$ achieves
distance $(L{-}2)M^2 + \alpha^2$.
Any incorrect item token incurs an additional $2M^2$, and the
query-only zero token leaves an additional $M^2$ component.
Greedy selects token~$j$.

\textit{Level $t \geq 3$.}\;
The item codeword $c_{t,\star}^+$ removes the remaining $A_t$ component,
whereas the query-only zero token does not. Greedy selects
$c_{t,\star}^+$.

\textit{Full reconstruction and MSE.}\;
$\hat{q}=q$ and
$\hat{x}_j = c_{1,j}^+ + c_{2,j}^+ + \sum_{t \geq 3} c_{t,\star}^+
= x_j - \alpha\, r$, so the target reconstruction error is $\alpha^2$.

\textit{Global optimality.}\;
Every codeword lies in
$S \oplus U \oplus A_1 \oplus \cdots \oplus A_L$, so any reconstruction
has zero component in~$R$.  Since each $x_j$ has a nonzero
$R$-component $\alpha r$, the Pythagorean theorem gives
$\|x_j - \tilde{x}_j\|^2 \geq \alpha^2$ for any code assignment.
Equality is achieved above.
\end{proof}

\begin{lemma}[Target survival under trie-constrained quantized oracle beam search in $\mathcal{Q}^+$]
\label{lem:app_survival_plus}
Under trie-constrained beam search over the valid item identifiers with
quantized oracle scores, the target prefix
survives at every level $\ell = 1, \ldots, L$.
\end{lemma}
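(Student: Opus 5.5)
The plan is to compute the quantized oracle scores $\langle q, \hat{x}_{j,\ell}\rangle$ explicitly for every valid prefix at every level, using the orthogonal decomposition $\RR^d = S\oplus U\oplus A_1\oplus\cdots\oplus A_L\oplus R$. Then I would check that the target prefix is always strictly ranked first among the candidates that trie-constrained beam search considers.

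By Lemma~\ref{lem:app_mse}, the valid item identifiers are exactly $(j,j,\star,\ldots,\star)$ for $j=1,\ldots,K$. The query's all-zero code sequence is not an item identifier, so the trie contains only these $K=B+1$ paths. In particular, the level-$\ell$ prefix set consists of the $K$ distinct prefixes $z_j^{1:\ell}$, one per item.

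\textbf{Level 1.} The partial reconstructions are $\hat{x}_{1,1}=s+Ma_1^{(1)}$ and $\hat{x}_{j,1}=Ma_j^{(1)}$ for $j\ge 2$. Since $q=s+Hu$ has no component in $A_1$, the scores are $\langle q,\hat{x}_{1,1}\rangle=\langle s,s\rangle=1$ and $\langle q,\hat{x}_{j,1}\rangle=0$ for $j\geq 2$. Thus the target has the unique strictly largest score among the $K=B+1$ candidates. Since $B\ge 1$, it lies in the top-$B$ regardless of how ties among the zero-scored competitors are broken.

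\textbf{Levels $\ell\geq 2$.} Each surviving prefix has exactly one child in the trie, namely $(j,j)$ and then $\star$ thereafter. Appending $c^+_{2,j}=Ma_j^{(2)}$ or $c^+_{t,\star}=Ma_\star^{(t)}$ adds a vector in $A_2$ or $A_t$, which are orthogonal to $q\in S\oplus U$. Hence every prefix's score is unchanged, and the target keeps score $1$ against competitors at $0$.

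Moreover, after level~1 at most $B$ prefixes remain in the beam. Each has a single valid extension, so the candidate set at each later level has size at most $B$. Every candidate is therefore retained, and the target in particular survives. I would state this by induction on $\ell$: the target is in the beam at level $\ell-1$, its unique child is a candidate at level $\ell$, and that child is retained because it has the maximal score, or simply because there are at most $B$ candidates.

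I expect the only delicate point to be making precise which prefixes count as ``valid'' at level~1. One must argue that the query-only codewords $c_{1,0}^+=q$ and the far-away dummy codewords are not prefixes of any item identifier, so they are excluded by the trie constraint. This matters because $c_{1,0}^+=q$ would otherwise score $1+H^2$ and dominate the ranking. I would make this explicit by noting that trie-constrained search ranges only over prefixes of the stored item code sequences, which Lemma~\ref{lem:app_mse} identifies.
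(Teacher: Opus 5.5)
Your proposal is correct and follows essentially the same route as the paper. Both compute that the target scores $1$ and every competitor scores $0$ at level~1, then observe that all later codewords lie in $A_2,\dots,A_L$, which are orthogonal to $q$, so the target remains the unique top-scoring prefix at every level. Your explicit remarks are sound additions that the paper leaves implicit: the trie constraint excludes the query-only codeword $c_{1,0}^+=q$ (which would otherwise score $1+H^2$), and each surviving prefix has a single child, so at most $B$ candidates remain after level~1.
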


\begin{proof}
For the target ($j = 1$): level~1 contributes
$c_{1,1}^+ = s + Ma_1^{(1)}$, which has
$\langle q, c_{1,1}^+ \rangle = 1$.
Every subsequent level adds a codeword orthogonal to~$q$, so
$\langle q, \hat{x}_{1,\ell}^+ \rangle = 1$ for all~$\ell$.
For any competitor ($j \geq 2$): $c_{1,j}^+ = Ma_j^{(1)} \in A_1$, so
$\langle q, \hat{x}_{j,\ell}^+ \rangle = 0$ for all~$\ell$.
The target has the unique highest score at every level.
\end{proof}

\paragraph{Construction of $\mathcal{Q}^-$.}
Let $\delta := M - \sqrt{M^2 - 4}$, so that
$4 + (M-\delta)^2 = M^2$.
Modify exactly two codewords:
$c_{1,1}^- = -s + Ma_1^{(1)} + \delta a_1^{(2)}$,\;
$c_{2,1}^- = 2s + (M-\delta)a_1^{(2)}$.
All other codewords are unchanged.

\begin{proof}[Proof of Thm.~\ref{thm:qdg}]
We verify three claims.

\textit{Claim~1: Greedy assignment remains valid under
$\mathcal{Q}^-$.}\;
For the target at level~1, the distance to $c_{1,1}^-$ is
$4 + (M{-}\delta)^2 + (L{-}2)M^2 + \alpha^2
= (L{-}1)M^2 + \alpha^2$, while any other token $k \geq 2$ gives
$1 + (L{+}1)M^2 + \alpha^2$.
The query-only token $c_{1,0}^-=q$ incurs an additional $H^2$ and is
not selected by any target item.
Token~1 remains nearest.
At level~2, the residual after $c_{1,1}^-$ is
$2s + (M{-}\delta)a_1^{(2)}
+ \sum_{t \geq 3} c_{t,\star}^+ + \alpha\, r$;
the distance to $c_{2,1}^- = 2s + (M{-}\delta)a_1^{(2)}$ is
$(L{-}2)M^2 + \alpha^2$.
Any incorrect item token incurs an additional
$4 + (M{-}\delta)^2 + M^2$, and the query-only zero token leaves an
additional $4 + (M{-}\delta)^2 = M^2$.
For non-target items ($j \geq 2$), the correct distances are unchanged.
The modified level-1 token has distance
$1+(L{+}1)M^2+\delta^2+\alpha^2$, and the modified level-2 token has
an extra $4+M^2+(M-\delta)^2$ over the correct choice.
Levels $t \geq 3$ are unchanged.
The query still selects $(0,0,\ldots,0)$ because the query-only codewords
are unchanged. Therefore $\mathcal{Q}^-$ assigns the same code sequences as
$\mathcal{Q}^+$.

\textit{Claim~2: Identical standard tokenizer objective values.}\;
Since code assignments are unchanged, the full reconstruction of the
target is
$\hat{x}_1^- = c_{1,1}^- + c_{2,1}^- + \sum_{t \geq 3} c_{t,\star}^-
= (-s + Ma_1^{(1)} + \delta a_1^{(2)})
  + (2s + (M-\delta)a_1^{(2)}) + \cdots
= (s + Ma_1^{(1)}) + Ma_1^{(2)} + \cdots = \hat{x}_1^+$.
For $j \geq 2$, all codewords are unchanged, so $\hat{x}_j^- = \hat{x}_j^+$.
The query reconstruction is also identical because
$c_{1,0}^-=c_{1,0}^+=q$ and $c_{\ell,0}^-=c_{\ell,0}^+=0$ for
$\ell \geq 2$.
Both quantizers therefore produce identical complete query and item
reconstructions. Consequently, objectives evaluated on complete
reconstructions take identical values under $\mathcal{Q}^+$ and
$\mathcal{Q}^-$, including target reconstruction losses and
quantized-space MSE terms such as $\|\hat{q}-\hat{x}_j\|^2$.
The target reconstruction error is $\alpha^2$ under both.
The squared residual-fitting term
$\sum_{\ell}\|r_{\ell-1}-c_{\ell,z^\ell}\|^2$ also matches.
For the query, the fitting cost is zero under both quantizers.
For the target, the level-1 fitting cost under $\mathcal{Q}^-$ is
$4+(M-\delta)^2+(L-2)M^2+\alpha^2=(L-1)M^2+\alpha^2$, identical to
$\mathcal{Q}^+$; the level-2 cost is $(L-2)M^2+\alpha^2$ under both,
and later levels are unchanged.
All non-target fitting costs are unchanged because their selected
codewords are unchanged.
Finally, any contrastive term computed on the pre-quantization embeddings
is identical because the query and item embeddings are shared.
The Pythagorean lower bound from Lem.~\ref{lem:app_mse} applies to
$\mathcal{Q}^-$ as well (all codewords still lie in
$S \oplus U \oplus A_1 \oplus \cdots \oplus A_L$), so $\alpha^2$ remains
the global target-reconstruction optimum.

\textit{Claim~3: Target is pruned under trie-constrained quantized oracle beam search in
$\mathcal{Q}^-$.}\;
The level-1 prefix scores are
$\langle q, c_{1,1}^- \rangle = -1$ and
$\langle q, c_{1,j}^- \rangle = 0$ for $j \geq 2$.
The target is strictly last; with beam width $B = K - 1$, it is pruned
deterministically.

\textit{Impossibility.}\;
The target survives under $\mathcal{Q}^+$ (Lem.~\ref{lem:app_survival_plus})
and is pruned under $\mathcal{Q}^-$ (Claim~3), yet the two quantizers have identical
values for the standard tokenizer-training terms covered in
Claim~2.
Let $\Phi(\mathcal{Q})$ collect these objective values.
Then $\Phi(\mathcal{Q}^+) = \Phi(\mathcal{Q}^-)$, but the target prefix
survives under $\mathcal{Q}^+$ and is pruned under $\mathcal{Q}^-$.
Thus no criterion depending only on $\Phi$ can determine the beam-search
outcome, even under trie-constrained quantized oracle beam search.
\end{proof}

\subsection{Preparatory tools for the survival bound}
\label{app:tv_tools}

We recall the definitions and standard results used in the proof of
Thm.~\ref{thm:survival}.

\begin{definition}[Total variation distance]
\label{def:tv}
For probability distributions $P, Q$ over a finite set~$\mathcal{X}$,
the \emph{total variation distance} is
\begin{equation}
  \TV(P, Q)
    \;:=\; \frac{1}{2} \sum_{x \in \mathcal{X}} |P(x) - Q(x)|
    \;=\; \max_{A \subseteq \mathcal{X}} |P(A) - Q(A)|.
\label{eq:app_tv_def}
\end{equation}
The two expressions are equal by a standard identity.
\end{definition}

\begin{lemma}[Pointwise bound]
\label{lem:app_tv_point}
For any single element $x \in \mathcal{X}$,
\begin{equation}
  |P(x) - Q(x)| \;\leq\; \TV(P, Q).
\label{eq:app_tv_point}
\end{equation}
\end{lemma}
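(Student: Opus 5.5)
This follows in one line from the alternative characterisation of total variation in Definition~\ref{def:tv}. I plan to specialise the maximum over events $A \subseteq \mathcal{X}$ to the singleton event $A = \{x\}$. For this event, $P(A) = P(x)$ and $Q(A) = Q(x)$. Hence
\[
  |P(x) - Q(x)| \;=\; |P(\{x\}) - Q(\{x\})| \;\leq\; \max_{A \subseteq \mathcal{X}} |P(A) - Q(A)| \;=\; \TV(P,Q),
\]
which is exactly \eqref{eq:app_tv_point}.

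Definition~\ref{def:tv} asserts the identity between the half-$\ell_1$ form and the max-over-events form only as ``standard''. To make the lemma self-contained, I would also give a direct argument from the $\ell_1$ form.

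Since $P$ and $Q$ both sum to one, $\sum_{y}(P(y)-Q(y)) = 0$. Isolating the term at $x$ gives
\[
  P(x) - Q(x) \;=\; -\sum_{y \neq x} \bigl(P(y) - Q(y)\bigr).
\]
Taking absolute values shows $|P(x)-Q(x)| \leq \sum_{y\neq x}|P(y)-Q(y)|$. Adding $|P(x)-Q(x)|$ to both sides yields
\[
  2|P(x)-Q(x)| \;\leq\; \sum_{y}|P(y)-Q(y)| \;=\; 2\,\TV(P,Q),
\]
and dividing by two gives the claim.

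I expect no real obstacle here. The only point needing care is the factor $\tfrac12$ in the definition: a naive bound $|P(x)-Q(x)| \leq \sum_y |P(y)-Q(y)|$ would lose a factor of two. The mass-conservation step above is precisely what recovers that factor.

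This sharp constant matters downstream. In Thm.~\ref{thm:survival}, the pointwise bound will be applied to the target prefix and to each competitor. The resulting perturbations combine to $2\,\TV$, which is then compared against the margin $m_\ell$; this is where the $\tfrac12 m_\ell$ in \eqref{eq:survival} comes from.
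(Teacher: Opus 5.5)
Your proposal is correct and matches the paper's proof: the paper also specialises the max-over-events form of $\TV$ in Definition~\ref{def:tv} to the singleton $A=\{x\}$. Your mass-conservation derivation from the half-$\ell_1$ form is a valid extra step that the paper omits, since the paper simply cites the equivalence of the two expressions as a standard identity.
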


\begin{proof}
By the second expression in Eq.~\ref{eq:app_tv_def},
$\TV(P,Q) = \max_A |P(A) - Q(A)| \geq |P(\{x\}) - Q(\{x\})|$.
\end{proof}

\begin{lemma}[Triangle inequality]
\label{lem:app_tv_tri}
For distributions $P, Q, R$ over the same finite set,
\begin{equation}
  \TV(P, R) \;\leq\; \TV(P, Q) + \TV(Q, R).
\label{eq:app_tv_tri}
\end{equation}
\end{lemma}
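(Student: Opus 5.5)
The plan is to prove the triangle inequality for $\TV$ directly from the first (half-$\ell_1$) expression in Definition~\ref{def:tv}. The steps are pointwise, so there is essentially no obstacle; the only care needed is to apply the scalar triangle inequality before summing.

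First, fix $x \in \mathcal{X}$ and write
\[
P(x) - R(x) = \bigl(P(x) - Q(x)\bigr) + \bigl(Q(x) - R(x)\bigr).
\]
The triangle inequality for absolute values on $\RR$ then gives
\[
|P(x)-R(x)| \leq |P(x)-Q(x)| + |Q(x)-R(x)|.
\]

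Second, sum this pointwise inequality over the finite set $\mathcal{X}$ and multiply by $\tfrac{1}{2}$. By Eq.~\ref{eq:app_tv_def}, the left side becomes $\TV(P,R)$ and the two terms on the right become $\TV(P,Q)$ and $\TV(Q,R)$. That yields Eq.~\ref{eq:app_tv_tri}.

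As a cross-check, one can also argue through the max-over-events form. For any $A \subseteq \mathcal{X}$,
\[
|P(A)-R(A)| \leq |P(A)-Q(A)| + |Q(A)-R(A)| \leq \TV(P,Q) + \TV(Q,R),
\]
and taking the maximum over $A$ gives the same result. Either route suffices.

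The only thing to confirm is that $P$, $Q$, $R$ live on the same finite prefix set. This holds in our use, since all three of $\pi^*_\ell$, $\bar{\pi}_\ell$, $\tilde{\pi}_\ell$ are PMFs over $\mathcal{P}_\ell$, so every sum above is finite and well defined.
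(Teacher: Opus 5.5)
Your proof is correct and is the same argument as the paper's: apply the scalar triangle inequality pointwise, sum over the finite set, and divide by $2$. Your alternative route through the max-over-events form is also valid, but the paper does not need it.
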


\begin{proof}
For each $x$, $|P(x) - R(x)| \leq |P(x) - Q(x)| + |Q(x) - R(x)|$.
Summing over~$x$ and dividing by~2 yields the result.
\end{proof}

\begin{lemma}[Pinsker's inequality {\citep{cover2006elements}}]
\label{lem:app_pinsker}
For distributions $P, Q$ with $P$ absolutely continuous with respect
to~$Q$,
\begin{equation}
  \TV(P, Q) \;\leq\; \sqrt{\frac{1}{2}\,\KL(P \| Q)}.
\label{eq:app_pinsker}
\end{equation}
\end{lemma}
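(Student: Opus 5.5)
The plan is the classical two-step argument: first reduce to distributions on a two-point set via a coarse-graining (data-processing) step, then verify the binary inequality by elementary calculus. Throughout, $\KL$ is taken with the natural logarithm, which is the normalization under which the constant $\tfrac12$ is correct. If $\KL(P\|Q)=\infty$ there is nothing to prove, so assume absolute continuity as in the statement.

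\emph{Step 1 (reduction to a binary set).} Let $A:=\{x\in\mathcal{X}: P(x)\ge Q(x)\}$. Then
\[
\sum_x |P(x)-Q(x)| = 2\bigl(P(A)-Q(A)\bigr),
\]
so, with $p:=P(A)$ and $q':=Q(A)$, Def.~\ref{def:tv} gives $\TV(P,Q)=p-q'$. Next I will show that merging the elements of $A$ and of $A^c$ does not increase KL:
\[
\KL(P\|Q)\;\ge\; p\log\frac{p}{q'}+(1-p)\log\frac{1-p}{1-q'} \;=:\; d(p\|q').
\]
This is the log-sum inequality applied separately to the sums over $A$ and over $A^c$; the log-sum inequality itself follows from convexity of $t\mapsto t\log t$ (Jensen). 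It therefore suffices to prove the binary inequality $d(p\|q')\ge 2(p-q')^2$.

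\emph{Step 2 (binary Pinsker).} Fix $p\in[0,1]$ and set
\[
g(y):=d(p\|y)-2(p-y)^2 \qquad \text{for } y\in(0,1).
\]
Then $g(p)=0$, and a direct computation gives
\[
g'(y)=-\frac{p}{y}+\frac{1-p}{1-y}+4(p-y)=(y-p)\Bigl(\frac{1}{y(1-y)}-4\Bigr).
\]
Since $y(1-y)\le\tfrac14$, the second factor is nonnegative. Hence $g'(y)\le0$ for $y\le p$ and $g'(y)\ge 0$ for $y\ge p$, so $g$ attains its minimum at $y=p$ and $g\ge0$. The boundary cases $y\in\{0,1\}$ are handled by continuity, or by noting that $d=\infty$ unless $p=y$ there.

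Combining the two steps,
\[
\TV(P,Q)^2=(p-q')^2\le \tfrac12\, d(p\|q')\le \tfrac12\,\KL(P\|Q),
\]
and taking square roots gives the lemma. The only step requiring care is Step 1: one must check that the log-sum inequality applies, with the usual conventions $0\log 0=0$ and $0\log(0/0)=0$ handling zero masses, and that the chosen set $A$ attains the total variation exactly. The calculus in Step 2 is routine once $g'$ is factored as above.
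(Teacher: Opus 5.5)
Your proof is correct and matches the source the paper relies on: the paper gives no proof of its own and only cites Cover and Thomas, whose proof is this same argument (the binary case by differentiating in the second argument with the factor $\frac{1}{y(1-y)}-4\ge 0$, then the general case via $A=\{P\ge Q\}$ and the log-sum/data-processing step). Your remark that $\KL$ must use the natural logarithm for the constant $\tfrac12$ to hold is a useful clarification, since the paper's statement leaves the base implicit.
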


\subsection{Proof of Theorem~\ref{thm:survival} (Survival bound)}
\label{app:proof_survival}

We first establish a per-level result for the quantized oracle alone
(Lem.~\ref{lem:app_kl_topB}), then extend it to the actual decoder
prefix distribution via the triangle inequality to obtain
Thm.~\ref{thm:survival}.

\begin{lemma}[Per-level KL controls top-$B$ under quantized oracle]
\label{lem:app_kl_topB}
Fix a query~$q$ and a level~$\ell$.
If $p^+_\ell$ is in the teacher's top-$B$ under $\pi^*_\ell$
(i.e., $m_\ell > 0$) and
\begin{equation}
  K_\ell \;=\; \KL(\pi^*_\ell \,\|\, \bar{\pi}_\ell)
  \;<\; \frac{m_\ell^2}{2},
\label{eq:app_kl_cond}
\end{equation}
then $p^+_\ell$ is also in the top-$B$ under the quantized
oracle~$\bar{\pi}_\ell$.
\end{lemma}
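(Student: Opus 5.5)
The plan is to convert the KL hypothesis into a total-variation bound and then use a pointwise perturbation argument. By Pinsker's inequality (Lem.~\ref{lem:app_pinsker}), the hypothesis $K_\ell < m_\ell^2/2$ gives
\[
\eta := \TV(\pi^*_\ell,\bar{\pi}_\ell) \le \sqrt{K_\ell/2} < m_\ell/2 .
\]
Absolute continuity is automatic, because $\bar{\pi}_\ell$ is a softmax and so has full support on $\mathcal{P}_\ell$. By Lem.~\ref{lem:app_tv_point}, every prefix $p$ then satisfies $|\pi^*_\ell(p)-\bar{\pi}_\ell(p)| \le \eta$.

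Next I will split $\mathcal{P}_\ell$ using the teacher ranking. Let $T$ be the teacher's top-$B$ set, which contains $p^+_\ell$ since $m_\ell>0$. Every $p \notin T$ has $\pi^*_\ell(p) \le \pi^*_{\ell,B+1}$. For any such $p$,
\[
\bar{\pi}_\ell(p^+_\ell) \ge \pi^*_\ell(p^+_\ell)-\eta, \qquad \bar{\pi}_\ell(p) \le \pi^*_{\ell,B+1}+\eta .
\]
Subtracting gives
\[
\bar{\pi}_\ell(p^+_\ell)-\bar{\pi}_\ell(p) \ge m_\ell - 2\eta > 0 .
\]
So under $\bar{\pi}_\ell$, the target strictly beats every prefix outside $T$.

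To conclude, the prefixes that can outrank $p^+_\ell$ under $\bar{\pi}_\ell$ all lie in $T\setminus\{p^+_\ell\}$, and that set has at most $B-1$ elements. Hence $p^+_\ell$ has rank at most $B$, and this holds strictly, so no tie-breaking issue arises. Ranking by probability is the same as ranking by score, because the softmax is monotone in the score at a fixed temperature.

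The main subtlety is the bookkeeping around the $(B{+}1)$-th value when the teacher has ties. I will handle it by defining $T$ as any top-$B$ set containing $p^+_\ell$, which exists because $m_\ell>0$ forces $\pi^*_\ell(p^+_\ell)>\pi^*_{\ell,B+1}$. With that choice, every prefix outside $T$ is bounded by $\pi^*_{\ell,B+1}$.

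One could sharpen the constant, using the fact that TV controls $P(A)-Q(A)$ for sets, so the two pointwise deviations cannot both reach $\eta$ in general. This is unnecessary, since the factor $2$ already matches the stated condition $\sqrt{K_\ell/2}<m_\ell/2$.
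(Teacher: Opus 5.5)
Your proof is correct and is essentially the paper's argument: Pinsker gives $\TV(\pi^*_\ell,\bar{\pi}_\ell) < m_\ell/2$, and the pointwise drop/rise bound then gives a gap of at least $m_\ell - 2\eta > 0$ against every prefix outside the teacher top-$B$. The paper orders the two steps the other way, and your treatment of ties is more careful than the paper's. One correction to your closing aside: the factor $2$ cannot be sharpened this way, because moving mass $\eta$ from $p^+_\ell$ to a single competitor $p$ gives $\TV = \eta$ with both pointwise deviations equal to $\eta$, so $m_\ell - 2\eta$ is tight.
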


\begin{proof}
\textit{Step~1: assume $\TV(\pi^*_\ell, \bar{\pi}_\ell) < m_\ell / 2$
and show ranking preservation.}

Let $\delta := \TV(\pi^*_\ell, \bar{\pi}_\ell)$ and suppose
$\delta < m_\ell/2$.
By Lem.~\ref{lem:app_tv_point}, the probability of each individual
prefix can change by at most~$\delta$ when moving from~$\pi^*_\ell$
to~$\bar{\pi}_\ell$.

The ground-truth prefix can \emph{drop} by at most~$\delta$:
\begin{equation}
  \bar{\pi}_\ell(p^+_\ell)
    \;\geq\; \pi^*_\ell(p^+_\ell) - \delta.
\label{eq:app_gt_drop}
\end{equation}

Any prefix~$p$ with teacher rank under $\pi^*_\ell$ greater than $B$ (i.e.,
$\pi^*_\ell(p) \leq \pi^*_{\ell,B+1}$) can \emph{rise} by at
most~$\delta$:
\begin{equation}
  \bar{\pi}_\ell(p)
    \;\leq\; \pi^*_\ell(p) + \delta
    \;\leq\; \pi^*_{\ell,B+1} + \delta.
\label{eq:app_comp_rise}
\end{equation}

Subtracting Eq.~\ref{eq:app_comp_rise} from Eq.~\ref{eq:app_gt_drop}:
\begin{align}
  \bar{\pi}_\ell(p^+_\ell) - \bar{\pi}_\ell(p)
    &\;\geq\; \bigl[\pi^*_\ell(p^+_\ell) - \delta\bigr]
              - \bigl[\pi^*_{\ell,B+1} + \delta\bigr]
    \notag \\
    &\;=\; \underbrace{\pi^*_\ell(p^+_\ell)
           - \pi^*_{\ell,B+1}}_{= \,m_\ell}
           - 2\delta
    \notag \\
    &\;=\; m_\ell - 2\delta
    \;>\; 0,
\label{eq:app_margin_gap}
\end{align}
where the strict inequality uses $\delta < m_\ell/2$.
Since $\bar{\pi}_\ell(p^+_\ell)$ is strictly greater than the
quantized oracle probability of \emph{every} prefix whose teacher rank
exceeds~$B$, and at most $B - 1$ other prefixes can have teacher rank
$\leq B$, we conclude $p^+_\ell \in \topB(\bar{\pi}_\ell)$.

\medskip
\textit{Step~2: use Pinsker's inequality to convert the KL condition to
a TV condition.}

By Pinsker's inequality (Lem.~\ref{lem:app_pinsker}):
\[
  \delta
    = \TV(\pi^*_\ell,\, \bar{\pi}_\ell)
    \leq \sqrt{\frac{K_\ell}{2}}.
\]
For the condition in Step~1 ($\delta < m_\ell/2$) to hold, it suffices
that
\[
  \sqrt{\frac{K_\ell}{2}} < \frac{m_\ell}{2}.
\]
Squaring both sides (both are non-negative):
\[
  \frac{K_\ell}{2} < \frac{m_\ell^2}{4},
  \qquad\text{i.e.,}\qquad
  K_\ell < \frac{m_\ell^2}{2}.
\]
This is precisely the hypothesis in Eq.~\ref{eq:app_kl_cond}.
\end{proof}


We now extend Lem.~\ref{lem:app_kl_topB} from the quantized oracle to
the actual decoder prefix distribution, arriving at Thm.~\ref{thm:survival}.

\begin{proof}[Proof of Thm.~\ref{thm:survival}]
Fix a query~$q$ for which the condition
$\sqrt{K_\ell/2} + \varepsilon_\ell < m_\ell/2$ holds at every
level~$\ell$.  We prove by induction that $p^+_\ell$ remains in the
beam at every level.

\textit{Per-level argument.}\;
Fix any level~$\ell$ and assume $S_{\ell-1}$ holds (the inductive
hypothesis; for $\ell = 1$, there is no previous pruning step).
We need to show
$p^+_\ell \in \topB(\tilde{\pi}_\ell)$ over the candidate set at
level~$\ell$.

By the triangle inequality for TV
(Lem.~\ref{lem:app_tv_tri}):
\begin{equation}
  \TV(\pi^*_\ell,\, \tilde{\pi}_\ell)
  \;\leq\;
    \underbrace{\TV(\pi^*_\ell,\, \bar{\pi}_\ell)}_{\text{quantization shift}}
    \;+\;
    \underbrace{\TV(\bar{\pi}_\ell,\, \tilde{\pi}_\ell)}_{\text{decoder mismatch}}.
\label{eq:app_tri_expand}
\end{equation}
Applying Pinsker (Lem.~\ref{lem:app_pinsker}) to the first term and
using the definition $\varepsilon_\ell :=
\TV(\bar{\pi}_\ell, \tilde{\pi}_\ell)$ for the second:
\begin{equation}
  \TV(\pi^*_\ell,\, \tilde{\pi}_\ell)
  \;\leq\;
    \sqrt{\frac{K_\ell}{2}}
    \;+\; \varepsilon_\ell.
\label{eq:app_combined_tv}
\end{equation}
By the hypothesis of the theorem, the right-hand side is strictly less
than~$m_\ell/2$.  Denoting the left-hand side by
$\delta' := \TV(\pi^*_\ell, \tilde{\pi}_\ell)$, we have
$\delta' < m_\ell/2$.

The remainder of the argument is identical to Step~1 of
Lem.~\ref{lem:app_kl_topB}, but with $\tilde{\pi}_\ell$ in place
of~$\bar{\pi}_\ell$:
by Lem.~\ref{lem:app_tv_point}, the ground-truth prefix can drop by at
most~$\delta'$, and any competitor ranked below $B$ by the teacher can
rise by at most~$\delta'$.  Subtracting (as
in Eq.~\ref{eq:app_margin_gap}):
\[
  \tilde{\pi}_\ell(p^+_\ell) - \tilde{\pi}_\ell(p)
  \;\geq\; m_\ell - 2\delta'
  \;>\; 0.
\]
Therefore $p^+_\ell \in \topB(\tilde{\pi}_\ell)$ over the full prefix
set~$\mathcal{P}_\ell$.

\textit{Subset argument.}\;
Actual beam search at level~$\ell$ considers only children of the~$B$
surviving prefixes from level~$\ell{-}1$, i.e., a subset
$\mathcal{P}'_\ell \subseteq \mathcal{P}_\ell$.
Since $S_{\ell-1}$ holds by the inductive hypothesis,
$p^+_{\ell-1}$ is in the beam, so $p^+_\ell$ is indeed a candidate
in~$\mathcal{P}'_\ell$.
If $p^+_\ell$ is in the top-$B$ of the full set~$\mathcal{P}_\ell$,
it is \emph{a fortiori} in the top-$B$ of any subset containing it,
because removing competitors can only improve its rank.

Applying the per-level argument at $\ell = 1, 2, \ldots, L$
completes the induction, establishing $S_L$.
\end{proof}

\subsection{Score fusion and decoder mismatch}
\label{app:score_fusion}

We formalize when geometric score fusion can reduce the decoder
mismatch~$\varepsilon_\ell$.
Fix a query~$q$, a level~$\ell$, and a surviving parent prefix
$p=z^{1:\ell-1}$.
Let $\hat{x}_p$ denote the partial reconstruction represented by this
parent prefix, and define the decoding residual
$r_p := q-\hat{x}_p$.
This residual is the query-to-prefix residual used at decoding time, not
the RQ assignment residual in Sec.~\ref{sec:preliminaries}.
Let $\mathcal{A}_\ell(p)$ denote the set of legal child tokens.
Each legal child token~$v \in \mathcal{A}_\ell(p)$ corresponds to a
codeword~$c_{\ell,v}$ and produces the candidate partial reconstruction
$\hat{x}_p+c_{\ell,v}$.
All conditional distributions below are PMFs over the finite token set
$\mathcal{A}_\ell(p)$; lower-case symbols such as $u$ and $v$ denote
token values.  When an expectation or covariance is used, the associated
upper-case symbol denotes a random token drawn from the stated PMF.
The conditional quantized oracle assigns higher probability to legal
tokens whose candidate reconstructions have larger similarity to the
query, while the conditional decoder distribution is produced by the
autoregressive decoder over the same legal tokens.
Decoder mismatch measures the distance between these two conditional
distributions.
Score fusion adds a codebook-derived bias to the decoder log-probability
so that candidate tokens reducing the query residual receive higher beam
scores.

The conditional quantized oracle over $\mathcal{A}_\ell(p)$ is
\[
  \bar{\pi}_\ell(v \mid p, q)
  =
  \frac{\exp\!\bigl(q^\top c_{\ell,v}/\tau\bigr)}
       {\sum_{u \in \mathcal{A}_\ell(p)}
        \exp\!\bigl(q^\top c_{\ell,u}/\tau\bigr)}.
\]
The parent term $q^\top \hat{x}_p$ is omitted because it is constant
across all children of the same parent prefix.
The conditional decoder distribution is
$\tilde{\pi}_\ell(v \mid p, q)$, and the geometric score used by score
fusion is
\[
  g_\ell(v)
  =
  2 r_p^\top c_{\ell,v}
  -
  c_{\ell,v}^\top c_{\ell,v}.
\]
Equivalently,
\[
  g_\ell(v)
  =
  \|r_p\|^2-\|r_p-c_{\ell,v}\|^2,
\]
so $g_\ell(v)$ is the reduction in squared distance from the query to
the partial reconstruction after appending token~$v$.

We now rewrite $g_\ell(v)$ in terms of the conditional quantized oracle.
Let
\[
  Z_\ell(p,q)
  =
  \sum_{u \in \mathcal{A}_\ell(p)}
  \exp\!\bigl(q^\top c_{\ell,u}/\tau\bigr).
\]
Then
\[
  q^\top c_{\ell,v}
  =
  \tau \log \bar{\pi}_\ell(v \mid p, q)
  +
  \tau \log Z_\ell(p,q).
\]
Substituting $r_p=q-\hat{x}_p$ into $g_\ell(v)$ gives
\begin{align}
  g_\ell(v)
  &=
  2(q-\hat{x}_p)^\top c_{\ell,v}
  -
  c_{\ell,v}^\top c_{\ell,v}
  \notag \\
  &=
  2q^\top c_{\ell,v}
  -
  \bigl(2\hat{x}_p^\top c_{\ell,v}
  +
  c_{\ell,v}^\top c_{\ell,v}\bigr)
  \notag \\
  &=
  2\tau \log \bar{\pi}_\ell(v \mid p, q)
  +
  \kappa_\ell(p,q)
  -
  \eta_\ell(v),
\label{eq:g_decomp}
\end{align}
where
\[
  \kappa_\ell(p,q) := 2\tau\log Z_\ell(p,q),
  \qquad
  \eta_\ell(v) := 2\hat{x}_p^\top c_{\ell,v}
  +
  c_{\ell,v}^\top c_{\ell,v}.
\]
The term $\kappa_\ell(p,q)$ is constant over child tokens of the same
parent prefix, whereas $\eta_\ell(v)$ is token-dependent.
Consequently, the fused distribution takes the form
\begin{equation}
  \pi^\omega_\ell(v \mid p, q)
  \;\propto\;
  \tilde{\pi}_\ell(v \mid p, q)\;\cdot\;
  \bar{\pi}_\ell(v \mid p, q)^{2\tau\omega}\;\cdot\;
  e^{-\omega\,\eta_\ell(v)}.
\label{eq:poe}
\end{equation}

\begin{proposition}[Geometric expert approximates quantized oracle]
\label{prop:geom_expert}
Define the geometric expert distribution
\[
  \rho_\ell(v \mid p, q)
  \;:=\;
  \frac{\exp\!\bigl(g_\ell(v) / (2\tau)\bigr)}
       {\sum_{u \in \mathcal{A}_\ell(p)}
        \exp\!\bigl(g_\ell(u) / (2\tau)\bigr)}.
\]
Then
\[
  \rho_\ell(v \mid p, q)
  \;=\;
  \frac{\bar{\pi}_\ell(v \mid p, q)\,
        \exp\!\bigl(-\eta_\ell(v)/(2\tau)\bigr)}
       {\sum_{u \in \mathcal{A}_\ell(p)}
        \bar{\pi}_\ell(u \mid p, q)\,
        \exp\!\bigl(-\eta_\ell(u)/(2\tau)\bigr)},
\]
where $\eta_\ell(v) := 2\hat{x}_p^\top c_{\ell,v}
+ c_{\ell,v}^\top c_{\ell,v}$.
If the distortion oscillation satisfies
$\mathrm{osc}(\eta_\ell) :=
\max_v \eta_\ell(v) - \min_v \eta_\ell(v) \leq \delta_\ell$,
where this oscillation measures the token-dependent deviation between
the geometric score and the conditional-oracle log-probability after
removing the parent-wise constant,
then
\begin{equation}
  \KL\!\bigl(\bar{\pi}_\ell(\cdot \mid p,q) \,\big\|\,
             \rho_\ell(\cdot \mid p,q)\bigr)
  \;\leq\; \frac{\delta_\ell}{2\tau},
  \qquad
  \TV\!\bigl(\bar{\pi}_\ell(\cdot \mid p,q),\,
             \rho_\ell(\cdot \mid p,q)\bigr)
  \;\leq\; \sqrt{\frac{\delta_\ell}{4\tau}}.
\label{eq:geom_expert_bound}
\end{equation}
\end{proposition}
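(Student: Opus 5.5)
The plan is to prove the two parts of Proposition~\ref{prop:geom_expert} in sequence. First I establish the product-form identity for $\rho_\ell$ directly from the decomposition in Eq.~\eqref{eq:g_decomp}. Then I obtain the KL bound by an elementary log-sum-exp estimate, and finally derive the TV bound from Pinsker's inequality (Lem.~\ref{lem:app_pinsker}).

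\emph{Step 1 (identity).} By Eq.~\eqref{eq:g_decomp},
\[
  g_\ell(v)/(2\tau) = \log \bar{\pi}_\ell(v\mid p,q) + \kappa_\ell(p,q)/(2\tau) - \eta_\ell(v)/(2\tau).
\]
Exponentiating gives
\[
  \exp\bigl(g_\ell(v)/(2\tau)\bigr) = e^{\kappa_\ell/(2\tau)}\,\bar{\pi}_\ell(v\mid p,q)\,e^{-\eta_\ell(v)/(2\tau)} .
\]
Since $\kappa_\ell(p,q)$ does not depend on the child token, the factor $e^{\kappa_\ell/(2\tau)}$ appears in every term of the numerator and denominator of $\rho_\ell$ over $\mathcal{A}_\ell(p)$ and cancels. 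This yields the stated expression.

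\emph{Step 2 (KL bound).} Write $a(v) := \eta_\ell(v)/(2\tau)$ and $W := \sum_u \bar{\pi}_\ell(u\mid p,q)\,e^{-a(u)}$. From Step~1,
\[
  \log\bigl(\bar{\pi}_\ell(v\mid p,q)/\rho_\ell(v\mid p,q)\bigr) = a(v) + \log W
\]
for every legal $v$. Here $\rho_\ell>0$ wherever $\bar{\pi}_\ell>0$, so absolute continuity holds. Taking the expectation under $\bar{\pi}_\ell$ gives
\[
  \KL(\bar{\pi}_\ell\|\rho_\ell) = \EE_{\bar{\pi}_\ell}[a(V)] + \log \EE_{\bar{\pi}_\ell}[e^{-a(V)}].
\]
I then use two crude bounds: $\EE[e^{-a}] \le e^{-\min_v a(v)}$ and $\EE[a]\le \max_v a(v)$. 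Together they give
\[
  \KL \le \max_v a(v) - \min_v a(v) = \mathrm{osc}(\eta_\ell)/(2\tau) \le \delta_\ell/(2\tau).
\]
A sharper bound via Hoeffding's lemma (of order $\delta_\ell^2/(32\tau^2)$) is available, but the stated linear bound only needs the crude one.

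\emph{Step 3 (TV bound).} Pinsker's inequality applied to the KL bound from Step~2 gives
\[
  \TV(\bar{\pi}_\ell,\rho_\ell)\le\sqrt{\KL/2}\le\sqrt{\delta_\ell/(4\tau)}.
\]

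There is no real obstacle. The only points needing care are, first, that the parent-wise constant $\kappa_\ell$ genuinely cancels: $\kappa_\ell$ depends only on $(p,q)$, and $\hat{x}_p$ is fixed for a given parent. Second, all sums must range over the same finite legal set $\mathcal{A}_\ell(p)$, so that $\bar{\pi}_\ell$ and $\rho_\ell$ are PMFs on a common support.
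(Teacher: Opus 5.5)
Your proposal is correct and matches the paper's proof essentially step for step. The identity follows by cancelling the parent-wise factor $e^{\kappa_\ell/(2\tau)}$ from Eq.~\eqref{eq:g_decomp}, and the KL bound starts from $\KL(\bar{\pi}_\ell\|\rho_\ell)=\EE_{\bar{\pi}_\ell}[a(V)]+\log\EE_{\bar{\pi}_\ell}[e^{-a(V)}]$. The paper writes this equivalently as $\log\EE_{\bar{\pi}_\ell}[e^{-\Delta_V/(2\tau)}]$, with $\eta_\ell$ centred at its $\bar{\pi}_\ell$-mean, and then bounds by the oscillation exactly as you do. The TV bound is Pinsker in both. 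Your Hoeffding side remark is also correct: applied to the same expression, it would improve the KL bound to $\delta_\ell^2/(32\tau^2)$.
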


\begin{proof}
The identity for~$\rho_\ell$ follows directly from
$g_\ell(v) = 2\tau \log \bar{\pi}_\ell(v \mid p,q)
+ \kappa_\ell(p,q) - \eta_\ell(v)$
(Eq.~\ref{eq:g_decomp}): dividing by~$2\tau$ and
exponentiating cancels $\kappa_\ell/(2\tau)$ in the normalization.

For the KL bound, write
$\bar{\eta} :=
\EE_{V \sim \bar{\pi}_\ell(\cdot \mid p,q)}[\eta_\ell(V)]$ and
$\Delta_v := \eta_\ell(v) - \bar{\eta}$ for each
$v \in \mathcal{A}_\ell(p)$.
Then
\[
  \KL\!\bigl(\bar{\pi}_\ell(\cdot \mid p,q) \,\big\|\,
             \rho_\ell(\cdot \mid p,q)\bigr)
  \;=\; \log
        \EE_{V \sim \bar{\pi}_\ell(\cdot \mid p,q)}\!\Bigl[
          e^{-\Delta_V / (2\tau)}\Bigr].
\]
Since $\max_v(-\Delta_v) \leq \mathrm{osc}(\eta_\ell) \leq \delta_\ell$,
every term in the expectation is at most
$e^{\delta_\ell/(2\tau)}$, giving
$\KL(\bar{\pi}_\ell(\cdot \mid p,q) \| \rho_\ell(\cdot \mid p,q))
\leq \delta_\ell / (2\tau)$.
The TV bound follows from Pinsker's inequality
(Lem.~\ref{lem:app_pinsker}).
\end{proof}

Prop.~\ref{prop:geom_expert} explains why the geometric score is
aligned with the conditional quantized oracle when the token-dependent
distortion is small.
The next result states when adding this score to the decoder reduces the
local decoder mismatch.

\begin{proposition}[Local reduction of decoder mismatch]
\label{prop:local_tv}
Define the fused family
\[
  \pi^\omega_\ell(v \mid p, q)
  \;:=\;
  \frac{\tilde{\pi}_\ell(v \mid p, q)\,
        \exp\!\bigl(\omega\, g_\ell(v)\bigr)}
       {\sum_{u \in \mathcal{A}_\ell(p)}
        \tilde{\pi}_\ell(u \mid p, q)\,
        \exp\!\bigl(\omega\, g_\ell(u)\bigr)},
  \qquad \omega \geq 0.
\]
Assume there is no tie between the conditional oracle and decoder
probabilities, i.e.,
\[
  \bar{\pi}_\ell(v \mid p,q)
  \neq
  \tilde{\pi}_\ell(v \mid p,q)
  \qquad \text{for every } v \in \mathcal{A}_\ell(p),
\]
and let
\[
  S^+_\ell
  \;:=\; \bigl\{v \in \mathcal{A}_\ell(p) :\;
         \bar{\pi}_\ell(v \mid p,q) > \tilde{\pi}_\ell(v \mid p,q)
         \bigr\}.
\]
Let $V \sim \tilde{\pi}_\ell(\cdot \mid p,q)$ be a random legal token
drawn from the decoder distribution.
Then
\begin{equation}
  \left.\frac{d}{d\omega}\,
    \TV\!\bigl(\bar{\pi}_\ell(\cdot \mid p,q),\,
               \pi^\omega_\ell(\cdot \mid p,q)\bigr)
  \right|_{\omega=0}
  \;=\;
  -\,\mathrm{Cov}\!\Bigl(
      g_\ell(V),\; \mathbf{1}\{V \in S^+_\ell\}
  \Bigr).
\label{eq:tv_deriv}
\end{equation}
In particular, if
$\mathrm{Cov}\!\bigl(
   g_\ell(V),\, \mathbf{1}\{V \in S^+_\ell\}\bigr) > 0$,
then there exists $\omega_0 > 0$ such that for all
$0 < \omega \leq \omega_0$,
\[
  \TV\!\bigl(\bar{\pi}_\ell(\cdot \mid p,q),\,
             \pi^\omega_\ell(\cdot \mid p,q)\bigr)
  \;<\;
  \TV\!\bigl(\bar{\pi}_\ell(\cdot \mid p,q),\,
             \tilde{\pi}_\ell(\cdot \mid p,q)\bigr).
\]
\end{proposition}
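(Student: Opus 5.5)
Since $\mathcal{A}_\ell(p)$ is finite, I will write $\TV$ as half the sum of absolute differences and differentiate term by term. Abbreviate $\bar{\pi}(v)$, $\tilde{\pi}(v)$, $\pi^\omega(v)$ for the conditional PMFs at fixed $(p,q)$. By the no-tie assumption, each difference $\bar{\pi}(v)-\tilde{\pi}(v)$ is nonzero. The map $\omega\mapsto\pi^\omega(v)$ is a smooth (softmax-type) function of $\omega$, and $\pi^0=\tilde{\pi}$. By continuity, there is therefore a neighborhood $[0,\omega_1)$ on which every sign of $\bar{\pi}(v)-\pi^\omega(v)$ is frozen: positive exactly for $v\in S^+_\ell$. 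On that neighborhood,
\[
\TV(\bar{\pi},\pi^\omega)=\tfrac12\sum_{v\in S^+_\ell}\bigl(\bar{\pi}(v)-\pi^\omega(v)\bigr)-\tfrac12\sum_{v\notin S^+_\ell}\bigl(\bar{\pi}(v)-\pi^\omega(v)\bigr).
\]
This is a smooth function of $\omega$, so the one-sided derivative at $0$ exists and equals the ordinary derivative of this expression.

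Next I compute $\frac{d}{d\omega}\pi^\omega(v)|_{\omega=0}$. Writing $\pi^\omega(v)=\tilde{\pi}(v)e^{\omega g(v)}/Z(\omega)$ with $Z'(0)=\EE_{\tilde{\pi}}[g(V)]$, I get
\[
\frac{d}{d\omega}\pi^\omega(v)\Big|_{0}=\tilde{\pi}(v)\bigl(g(v)-\EE_{\tilde{\pi}}[g(V)]\bigr).
\]
Because these derivatives sum to zero over $v$, the derivative of $\sum_{v\notin S^+}\pi^\omega(v)$ is minus that of $\sum_{v\in S^+}\pi^\omega(v)$. Substituting, the derivative of $\TV$ at $0$ equals $-\sum_{v\in S^+}\tilde{\pi}(v)(g(v)-\EE g)$. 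This is exactly $-\EE_{\tilde{\pi}}[(g(V)-\EE g)\mathbf{1}\{V\in S^+\}]=-\mathrm{Cov}(g(V),\mathbf{1}\{V\in S^+\})$, which gives Eq.~\ref{eq:tv_deriv}.

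For the ``in particular'' part, a positive covariance makes the right derivative at $0$ strictly negative. The strict inequality $\TV(\bar{\pi},\pi^\omega)<\TV(\bar{\pi},\tilde{\pi})$ then holds for all sufficiently small $\omega>0$, and I take $\omega_0$ inside $(0,\omega_1)$ accordingly. A clean way to say this: $f(\omega)-f(0)=\omega(f'(0)+o(1))<0$ for small $\omega$.

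The only delicate step is justifying differentiability of $\TV$, which is not smooth in general. The no-tie hypothesis is precisely what handles this, by letting the signs be fixed locally. I would make that continuity argument explicit: take $\omega_1$ so small that $|\pi^\omega(v)-\tilde{\pi}(v)|<\min_v|\bar{\pi}(v)-\tilde{\pi}(v)|$ for $\omega<\omega_1$.
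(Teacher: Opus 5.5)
Your proposal is correct and follows essentially the same route as the paper: freeze the signs of $\bar{\pi}(v)-\pi^\omega(v)$ near $\omega=0$ using the no-tie assumption and continuity, differentiate the exponential tilt at $\omega=0$, and identify $-\sum_{v\in S^+_\ell}\tilde{\pi}(v)\bigl(g(v)-\EE_{\tilde{\pi}}[g(V)]\bigr)$ as minus the covariance. The only cosmetic difference is that you keep both sign classes in the half-$\ell_1$ form and then use that the derivatives sum to zero, whereas the paper writes $\TV$ directly as $\sum_{v\in S^+_\ell}(\bar{p}_v-p^\omega_v)$.
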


\begin{proof}
For each legal token $v \in \mathcal{A}_\ell(p)$, define the probability
masses
\[
  \bar{p}_v := \bar{\pi}_\ell(v \mid p,q), \qquad
  \tilde{p}_v := \tilde{\pi}_\ell(v \mid p,q), \qquad
  p^\omega_v := \pi^\omega_\ell(v \mid p,q).
\]
We use $V_\omega \sim \pi^\omega_\ell(\cdot \mid p,q)$ for a random token
drawn from the fused distribution at parameter~$\omega$.

\textit{Step~1: derivative of $\pi^\omega$.}\;
Since $\pi^\omega_\ell(\cdot \mid p,q)$ is an exponential-family tilt
of~$\tilde{\pi}_\ell(\cdot \mid p,q)$,
\[
  \frac{d}{d\omega}\,p^\omega_v
  \;=\; p^\omega_v\,
        \bigl(g_\ell(v)
        - \EE_{V_\omega \sim \pi^\omega_\ell(\cdot \mid p,q)}
          [g_\ell(V_\omega)]\bigr).
\]
At $\omega = 0$, $p^0_v = \tilde{p}_v$ for every token~$v$, so
\[
  \left.\frac{d}{d\omega}\,p^\omega_v\right|_{\omega=0}
  =
  \tilde{p}_v
  \bigl(g_\ell(v)
  - \EE_{V \sim \tilde{\pi}_\ell(\cdot \mid p,q)}[g_\ell(V)]
  \bigr).
\]

\textit{Step~2: sign preservation.}\;
By the no-tie assumption and the finite support,
$\bar{p}_v - \tilde{p}_v \neq 0$ for every~$v$.
By continuity of $p^\omega_v$ in~$\omega$, there exists $\epsilon > 0$
such that for all $0 < \omega < \epsilon$,
\[
  \mathrm{sign}\!\bigl(\bar{p}_v - p^\omega_v\bigr)
  \;=\;
  \mathrm{sign}\!\bigl(\bar{p}_v - \tilde{p}_v\bigr)
  \qquad
  \text{for every } v.
\]

\textit{Step~3: TV derivative.}\;
In this regime,
\[
  \TV\!\bigl(\bar{\pi}_\ell(\cdot \mid p,q),
             \pi^\omega_\ell(\cdot \mid p,q)\bigr)
  =
  \sum_{v \in S^+_\ell}\!\bigl(\bar{p}_v - p^\omega_v\bigr).
\]
Differentiating at $\omega = 0$:
\begin{align*}
  &\left.\frac{d}{d\omega}\,
  \TV\!\bigl(\bar{\pi}_\ell(\cdot \mid p,q),
             \pi^\omega_\ell(\cdot \mid p,q)\bigr)
  \right|_{\omega=0}
  \\
  &=
  -\sum_{v \in S^+_\ell}
     \tilde{p}_v\,
     \bigl(g_\ell(v)
     - \EE_{V \sim \tilde{\pi}_\ell(\cdot \mid p,q)}[g_\ell(V)]
     \bigr)  \\
  &=
  -\,\mathrm{Cov}_{V \sim \tilde{\pi}_\ell(\cdot \mid p,q)}\!\bigl(
        g_\ell(V),\; \mathbf{1}\{V \in S^+_\ell\}\bigr).
\end{align*}
If this covariance is positive, the derivative is strictly negative,
so by continuity there exists $\omega_0 > 0$ with
\[
  \TV\!\bigl(\bar{\pi}_\ell(\cdot \mid p,q),
             \pi^\omega_\ell(\cdot \mid p,q)\bigr)
  <
  \TV\!\bigl(\bar{\pi}_\ell(\cdot \mid p,q),
             \tilde{\pi}_\ell(\cdot \mid p,q)\bigr)
\]
for all $0 < \omega \leq \omega_0$.
\end{proof}

\begin{corollary}[Oracle-alignment decomposition]
\label{cor:oracle_align}
Under the decomposition
$g_\ell(v) = 2\tau \log \bar{\pi}_\ell(v \mid p,q)
+ \kappa_\ell - \eta_\ell(v)$,
let $V \sim \tilde{\pi}_\ell(\cdot \mid p,q)$.  Then
\begin{align}
  \mathrm{Cov}\!\bigl(
    g_\ell(V), \mathbf{1}\{V \in S^+_\ell\}\bigr)
  &=
  \underbrace{2\tau\,\mathrm{Cov}\!\bigl(
    \log \bar{\pi}_\ell(V \mid p,q),\,
    \mathbf{1}\{V \in S^+_\ell\}\bigr)
  }_{\text{oracle alignment}}
  \notag \\
  &\quad -
  \underbrace{\mathrm{Cov}\!\bigl(
    \eta_\ell(V),\, \mathbf{1}\{V \in S^+_\ell\}\bigr)
  }_{\text{distortion interference}}.
\label{eq:cov_decomp}
\end{align}
\end{corollary}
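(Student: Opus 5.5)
The corollary follows from the decomposition in Eq.~\ref{eq:g_decomp} together with the bilinearity of covariance. I will fix $q$, the level $\ell$, and the surviving parent prefix $p$. Throughout, $V \sim \tilde{\pi}_\ell(\cdot\mid p,q)$ and $I := \mathbf{1}\{V\in S^+_\ell\}$.

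\emph{Step 1: the decomposition.} By Eq.~\ref{eq:g_decomp}, for every legal token $v\in\mathcal{A}_\ell(p)$,
\[
g_\ell(v) = 2\tau\log\bar{\pi}_\ell(v\mid p,q) + \kappa_\ell(p,q) - \eta_\ell(v).
\]
This identity holds pointwise on the finite support. Evaluating it at the random token $V$ therefore expresses $g_\ell(V)$ as a linear combination of three random variables: $\log\bar{\pi}_\ell(V\mid p,q)$, the constant $\kappa_\ell(p,q)$, and $\eta_\ell(V)$. All three are bounded, since the support is finite and $\bar{\pi}_\ell>0$ under the softmax definition.

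\emph{Step 2: bilinearity.} I will apply linearity of covariance in its first argument:
\[
\mathrm{Cov}(aX+c-Y, I) = a\,\mathrm{Cov}(X,I) - \mathrm{Cov}(Y,I),
\]
with $a=2\tau$, $X=\log\bar{\pi}_\ell(V\mid p,q)$, $Y=\eta_\ell(V)$, and $c=\kappa_\ell(p,q)$. The constant drops out because $\mathrm{Cov}(c,I)=\EE[cI]-c\,\EE[I]=0$. The point to check is that $\kappa_\ell$ does not depend on $v$: by definition it is $2\tau\log Z_\ell(p,q)$, and $Z_\ell(p,q)$ sums over all children of the fixed parent. This yields Eq.~\ref{eq:cov_decomp} directly.

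\emph{Main obstacle.} There is no real obstacle; the argument is routine. The only care needed is to keep the randomness consistent. The covariance must be taken under the decoder distribution $\tilde{\pi}_\ell$, as in Prop.~\ref{prop:local_tv}. The set $S^+_\ell$ is deterministic, being defined from $\bar{\pi}_\ell$ and $\tilde{\pi}_\ell$, so $I$ is a genuine function of $V$ and bilinearity applies without further justification.

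Finally, I will remark how the result combines with Prop.~\ref{prop:local_tv}. The derivative of the TV distance at $\omega=0$ is negative whenever the oracle-alignment term exceeds the distortion-interference term. When $\mathrm{osc}(\eta_\ell)$ is small, the interference term is small, since $|\mathrm{Cov}(\eta_\ell(V),I)|\le \mathrm{osc}(\eta_\ell)/4$. This links the corollary back to Prop.~\ref{prop:geom_expert}.
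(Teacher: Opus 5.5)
Your proposal is correct and is essentially the paper's own justification: the paper gives no separate proof, and the identity follows by evaluating Eq.~\ref{eq:g_decomp} at $V \sim \tilde{\pi}_\ell(\cdot\mid p,q)$ and using linearity of covariance, where the parent-wise constant $\kappa_\ell(p,q)$ contributes zero covariance. Your extra bound $|\mathrm{Cov}(\eta_\ell(V),\mathbf{1}\{V\in S^+_\ell\})| \le \mathrm{osc}(\eta_\ell)/4$ is also valid and is a useful link to Prop.~\ref{prop:geom_expert} that the paper does not state.
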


This decomposition makes the sufficient condition in
Prop.~\ref{prop:local_tv} easier to interpret.
The first covariance is positive when tokens underweighted by the decoder
tend to have high conditional-oracle probability.
The second covariance is the token-dependent distortion introduced by
$\eta_\ell(v)$.
Therefore, score fusion locally reduces $\varepsilon_\ell$ when the
oracle-alignment term dominates the distortion-interference term.
When $\varepsilon_\ell$ is large (as observed empirically in
App.~\ref{app:cross_dataset}, mean $\TV > 0.95$), many
oracle-preferred tokens are underweighted by the decoder.
In that case, $S^+_\ell$ concentrates on tokens with high
$\log \bar{\pi}_\ell(v \mid p,q)$, which makes the oracle-alignment term
large and positive.

\begin{remark}\rm
Props.~\ref{prop:geom_expert} and~\ref{prop:local_tv} operate on
the per-parent conditional distribution.
The global decoder mismatch
$\varepsilon_\ell = \TV(\bar{\pi}_\ell, \tilde{\pi}_\ell)$
in Thm.~\ref{thm:survival} is defined over all length-$\ell$
prefixes.
The local result therefore does not by itself prove a global reduction
for arbitrary beams.
It does explain the mechanism used in our decoder: if the alignment
condition in Prop.~\ref{prop:local_tv} holds across the surviving
parents, then reducing mismatch inside these conditional distributions
also reduces the global decoder mismatch over length-$\ell$ prefixes.
The empirical decoder mismatch is consistently large across datasets
(App.~\ref{app:cross_dataset}: mean $\TV > 0.95$, std ${<}\,0.03$),
which supports treating this mismatch as a systematic decoding issue
rather than an isolated parent-prefix case.
\end{remark}

\section{Additional Experimental Results}
\label{app:additional_exp}

\subsection{Generality to generative recommendation}
\label{app:gen_rec}

Tab.~\ref{tab:gen_rec} reports the additional experiment on TIGER~\citep{rajput2023recommender}.
We evaluate on the Amazon Beauty, Toys, and Sports categories~\citep{hou2024bridging}.
PRO improves all Recall and NDCG metrics over TIGER, which supports the generality of our analysis of prefix survival and PRO in RQ-based generative recommendation.

\begin{table}[t]
\caption{Comparison with TIGER on RQ-based generative recommendation.}
\label{tab:gen_rec}
\vspace{1mm}
\centering
\small
\setlength{\tabcolsep}{3.2pt}
\resizebox{\textwidth}{!}{
\begin{tabular}{l cccc cccc cccc}
\toprule
 & \multicolumn{4}{c}{\textbf{Beauty}} & \multicolumn{4}{c}{\textbf{Toys}} & \multicolumn{4}{c}{\textbf{Sports}} \\
\cmidrule(lr){2-5} \cmidrule(lr){6-9} \cmidrule(lr){10-13}
\textbf{Method} & R@5 & R@10 & N@5 & N@10 & R@5 & R@10 & N@5 & N@10 & R@5 & R@10 & N@5 & N@10 \\
\midrule
TIGER & 0.039038 & 0.055270 & 0.026157 & 0.031408 & 0.035906 & 0.052596 & 0.024123 & 0.029479 & 0.021771 & 0.033092 & 0.014115 & 0.017762 \\
PRO & \textbf{0.042794} & \textbf{0.062961} & \textbf{0.028821} & \textbf{0.035295} & \textbf{0.040645} & \textbf{0.060066} & \textbf{0.026946} & \textbf{0.033198} & \textbf{0.024046} & \textbf{0.037474} & \textbf{0.015183} & \textbf{0.019501} \\
\midrule
$\Delta$ (abs.) & +0.003756 & +0.007691 & +0.002664 & +0.003887 & +0.004739 & +0.007470 & +0.002823 & +0.003719 & +0.002275 & +0.004382 & +0.001068 & +0.001739 \\
$\Delta$ (rel.) & +9.62\% & +13.92\% & +10.18\% & +12.38\% & +13.20\% & +14.20\% & +11.70\% & +12.62\% & +10.45\% & +13.24\% & +7.57\% & +9.79\% \\
\bottomrule
\end{tabular}
}
\end{table}

\FloatBarrier
\subsection{Full retrieval results}
\label{app:full_results}

Tabs.~\ref{tab:main_full}--\ref{tab:main_full_3} report the
complete retrieval results on M-BEIR, extending Tab.~\ref{tab:main}
with R@10, grouped by query type.


\begin{table}[h!]
\caption{Full retrieval results on M-BEIR: text$\leftrightarrow$image
tasks. \textbf{Bold}: best generative method per column.}
\label{tab:main_full}
\vspace{1mm}
\centering
\setlength{\tabcolsep}{3pt}
\resizebox{0.95\textwidth}{!}{%
\begin{tabular}{l ccc ccc ccc}
\toprule
\multirow{2}{*}{\textbf{Method}}
& \multicolumn{3}{c}{COCO $t{\to}i$}
& \multicolumn{3}{c}{Flickr30k $t{\to}i$}
& \multicolumn{3}{c}{COCO $i{\to}t$} \\
\cmidrule(lr){2-4} \cmidrule(lr){5-7} \cmidrule(lr){8-10}
& R@1 & R@5 & R@10
& R@1 & R@5 & R@10
& R@1 & R@5 & R@10 \\
\midrule
\multicolumn{10}{l}{\textit{Dense}} \\
CLIP-SF
  & \textbf{55.2} & \textbf{80.7} & \textbf{88.4}
  & \textbf{79.1} & \textbf{95.0} & \textbf{97.5}
  & 65.7 & 87.5 & 92.6 \\
BLIP-FF
  & 53.7 & 79.6 & 87.3
  & 73.9 & 92.4 & 95.3
  & \textbf{71.1} & \textbf{90.9} & \textbf{95.4} \\
\midrule
\multicolumn{10}{l}{\textit{Generative}} \\
IRGen
  & 29.6 & 50.7 & 56.3
  & 49.0 & 68.9 & 72.5
  & -- & -- & -- \\
GRACE
  & 16.7 & 39.5 & 50.3
  & 37.4 & 59.5 & 66.2
  & -- & -- & -- \\
AVG
  & 31.3 & 58.0 & --
  & 62.8 & 85.4 & --
  & -- & -- & -- \\
SemCORE
  & 42.4 & 57.5 & --
  & 69.0 & 83.0 & --
  & -- & -- & -- \\
GENIUS
  & 37.1{\scriptsize$\pm$.33} & 66.4{\scriptsize$\pm$.28} & 76.7{\scriptsize$\pm$.22}
  & 56.8{\scriptsize$\pm$.42} & 81.3{\scriptsize$\pm$.25} & 85.5{\scriptsize$\pm$.18}
  & 47.9{\scriptsize$\pm$.35} & 77.3{\scriptsize$\pm$.24} & 86.0{\scriptsize$\pm$.17} \\
Ours
  & \textbf{45.8}{\scriptsize$\pm$.38} & \textbf{72.6}{\scriptsize$\pm$.30} & \textbf{81.3}{\scriptsize$\pm$.22}
  & \textbf{70.6}{\scriptsize$\pm$.45} & \textbf{87.3}{\scriptsize$\pm$.22} & \textbf{89.2}{\scriptsize$\pm$.15}
  & \textbf{59.0}{\scriptsize$\pm$.40} & \textbf{84.3}{\scriptsize$\pm$.25} & \textbf{90.9}{\scriptsize$\pm$.14} \\
\bottomrule
\end{tabular}%
}
\end{table}

\vspace{-3mm}
\begin{table}[h!]
\caption{Full retrieval results on M-BEIR (cont.):
text-to-multimodal and image-to-image tasks.
``--'': method does not support the task.}
\label{tab:main_full_2}
\vspace{1mm}
\centering
\setlength{\tabcolsep}{3pt}
\resizebox{0.95\textwidth}{!}{%
\begin{tabular}{l ccc ccc ccc}
\toprule
\multirow{2}{*}{\textbf{Method}}
& \multicolumn{3}{c}{WebQA $t{\to}t$}
& \multicolumn{3}{c}{WebQA $t{\to}i\text{,}t$}
& \multicolumn{3}{c}{NIGHTS $i{\to}i$} \\
\cmidrule(lr){2-4} \cmidrule(lr){5-7} \cmidrule(lr){8-10}
& R@1 & R@5 & R@10
& R@1 & R@5 & R@10
& R@1 & R@5 & R@10 \\
\midrule
\multicolumn{10}{l}{\textit{Dense}} \\
CLIP-SF
  & \textbf{58.3} & \textbf{84.2} & \textbf{90.0}
  & 47.6 & 76.3 & 83.9
  & \textbf{8.8}  & \textbf{31.6} & 52.8 \\
BLIP-FF
  & 51.9 & 78.6 & 84.9
  & \textbf{49.5} & \textbf{78.1} & \textbf{86.7}
  & 8.1  & 31.0 & \textbf{54.8} \\
\midrule
\multicolumn{10}{l}{\textit{Generative}} \\
IRGen
  & -- & -- & --
  & -- & -- & --
  & -- & -- & -- \\
GRACE
  & -- & -- & --
  & -- & -- & --
  & -- & -- & -- \\
GENIUS
  & 19.4{\scriptsize$\pm$.32} & 29.9{\scriptsize$\pm$.28} & 32.2{\scriptsize$\pm$.25}
  & 29.1{\scriptsize$\pm$.36} & 51.3{\scriptsize$\pm$.28} & 58.2{\scriptsize$\pm$.24}
  & 1.2{\scriptsize$\pm$.14}  & 11.6{\scriptsize$\pm$.28} & 30.7{\scriptsize$\pm$.35} \\
Ours
  & \textbf{25.7}{\scriptsize$\pm$.40} & \textbf{40.7}{\scriptsize$\pm$.35} & \textbf{45.2}{\scriptsize$\pm$.30}
  & \textbf{35.3}{\scriptsize$\pm$.42} & \textbf{57.6}{\scriptsize$\pm$.32} & \textbf{64.6}{\scriptsize$\pm$.26}
  & \textbf{3.5}{\scriptsize$\pm$.20}  & \textbf{19.7}{\scriptsize$\pm$.35} & \textbf{40.7}{\scriptsize$\pm$.38} \\
\bottomrule
\end{tabular}%
}
\end{table}

\vspace{-3mm}
\begin{table}[h!]
\caption{Full retrieval results on M-BEIR (cont.):
composed image$+$text queries.
``--'': method does not support the task.}
\label{tab:main_full_3}
\vspace{1mm}
\centering
\setlength{\tabcolsep}{5.5pt}
\resizebox{0.95\textwidth}{!}{%
\begin{tabular}{l ccc ccc ccc}
\toprule
\multirow{2}{*}{\textbf{Method}}
& \multicolumn{3}{c}{CIRR $i\text{,}t{\to}i$}
& \multicolumn{3}{c}{OVEN $i\text{,}t{\to}i\text{,}t$}
& \multicolumn{3}{c}{InfoSeek $i\text{,}t{\to}t$} \\
\cmidrule(lr){2-4} \cmidrule(lr){5-7} \cmidrule(lr){8-10}
& R@1 & R@5 & R@10
& R@1 & R@5 & R@10
& R@1 & R@5 & R@10 \\
\midrule
\multicolumn{10}{l}{\textit{Dense}} \\
CLIP-SF
  & 5.9  & 43.2 & 56.5
  & \textbf{49.4} & \textbf{69.2} & \textbf{75.3}
  & \textbf{14.5} & \textbf{28.8} & \textbf{37.0} \\
BLIP-FF
  & \textbf{24.4} & \textbf{50.9} & \textbf{61.3}
  & 36.1 & 56.4 & 63.2
  & 10.5 & 23.4 & 30.2 \\
\midrule
\multicolumn{10}{l}{\textit{Generative}} \\
IRGen
  & -- & -- & --
  & -- & -- & --
  & -- & -- & -- \\
GRACE
  & -- & -- & --
  & -- & -- & --
  & -- & -- & -- \\
GENIUS
  & 5.5{\scriptsize$\pm$.24}  & 19.0{\scriptsize$\pm$.30} & 28.1{\scriptsize$\pm$.28}
  & 26.4{\scriptsize$\pm$.38} & 36.0{\scriptsize$\pm$.30} & 43.9{\scriptsize$\pm$.26}
  & 6.1{\scriptsize$\pm$.22}  & 10.2{\scriptsize$\pm$.25} & 12.9{\scriptsize$\pm$.22} \\
Ours
  & \textbf{8.6}{\scriptsize$\pm$.30}  & \textbf{27.8}{\scriptsize$\pm$.36} & \textbf{38.2}{\scriptsize$\pm$.32}
  & \textbf{34.5}{\scriptsize$\pm$.44} & \textbf{51.3}{\scriptsize$\pm$.35} & \textbf{61.3}{\scriptsize$\pm$.28}
  & \textbf{7.3}{\scriptsize$\pm$.25}  & \textbf{15.0}{\scriptsize$\pm$.30} & \textbf{19.3}{\scriptsize$\pm$.28} \\
\bottomrule
\end{tabular}%
}
\end{table}

\FloatBarrier
\subsection{Complete ablation}
\label{app:full_ablation}

Tab.~\ref{tab:ablation_full} reports all seven configurations tested
in the ablation study, complementing the additive summary in
Tab.~\ref{tab:ablation}.
We organize the results by individual components, pairwise
combinations, and the full system.

\begin{table}[h]
\caption{Complete ablation results. Individual components (top),
pairwise combinations (middle), and full approach (bottom).
Abbreviations: PD: Prefix distillation; VS: Vocabulary scheduling; SF: Score fusion.
\textbf{Bold}: best per column. All entries: mean$\pm$std over 5 seeds.}
\label{tab:ablation_full}
\vspace{1mm}
\centering
\resizebox{\textwidth}{!}{%
\footnotesize
\setlength{\tabcolsep}{3pt}
\begin{tabular}{l cc cc cc cc cc cc}
\toprule
\multirow{3}{*}{\textbf{Configuration}}
& \multicolumn{2}{c}{COCO}
& \multicolumn{2}{c}{COCO}
& \multicolumn{2}{c}{WebQA}
& \multicolumn{2}{c}{WebQA}
& \multicolumn{2}{c}{OVEN}
& \multicolumn{2}{c}{OVEN} 
\\
& \multicolumn{2}{c}{$t {\to} i$}
& \multicolumn{2}{c}{$i {\to} t$}
& \multicolumn{2}{c}{$t {\to} t$}
& \multicolumn{2}{c}{$t {\to} i\text{,}t$}
& \multicolumn{2}{c}{$i\text{,}t {\to} i$}
& \multicolumn{2}{c}{$i\text{,}t {\to} i\text{,}t$} \\
\cmidrule(lr){2-3} \cmidrule(lr){4-5} \cmidrule(lr){6-7}
\cmidrule(lr){8-9} \cmidrule(lr){10-11} \cmidrule(lr){12-13}
& R@1 & R@5
& R@1 & R@5
& R@1 & R@5
& R@1 & R@5
& R@1 & R@5
& R@1 & R@5 \\
\midrule
\multicolumn{13}{l}{\textit{Baseline}} \\
Base (vanilla RQ)
  & 36.8{\scriptsize$\pm$.30} & 66.5{\scriptsize$\pm$.24}
  & 48.1{\scriptsize$\pm$.32} & 77.2{\scriptsize$\pm$.22}
  & 19.5{\scriptsize$\pm$.34} & 30.0{\scriptsize$\pm$.28}
  & 28.9{\scriptsize$\pm$.35} & 51.4{\scriptsize$\pm$.26}
  & 5.9{\scriptsize$\pm$.24} & 14.7{\scriptsize$\pm$.27}
  & 26.7{\scriptsize$\pm$.32} & 36.4{\scriptsize$\pm$.26} \\
\midrule
\multicolumn{13}{l}{\textit{Individual components}} \\
+ PD ($K_\ell{\downarrow}$)
  & 40.2{\scriptsize$\pm$.33} & 68.2{\scriptsize$\pm$.26}
  & 51.5{\scriptsize$\pm$.35} & 78.9{\scriptsize$\pm$.23}
  & 21.2{\scriptsize$\pm$.36} & 31.7{\scriptsize$\pm$.30}
  & 31.4{\scriptsize$\pm$.38} & 53.1{\scriptsize$\pm$.28}
  & 5.6{\scriptsize$\pm$.23} & 14.2{\scriptsize$\pm$.26}
  & 28.4{\scriptsize$\pm$.34} & 39.9{\scriptsize$\pm$.27} \\
+ VS\ ($m_\ell{\uparrow}$)
  & 38.5{\scriptsize$\pm$.32} & 68.6{\scriptsize$\pm$.25}
  & 49.4{\scriptsize$\pm$.33} & 79.5{\scriptsize$\pm$.22}
  & 19.1{\scriptsize$\pm$.35} & 29.6{\scriptsize$\pm$.29}
  & 30.1{\scriptsize$\pm$.36} & 52.7{\scriptsize$\pm$.27}
  & 8.5{\scriptsize$\pm$.28} & 18.5{\scriptsize$\pm$.30}
  & 30.4{\scriptsize$\pm$.35} & 44.1{\scriptsize$\pm$.28} \\
+ SF ($\varepsilon_\ell{\downarrow}$)
  & 39.7{\scriptsize$\pm$.34} & 69.8{\scriptsize$\pm$.26}
  & 51.7{\scriptsize$\pm$.36} & 81.1{\scriptsize$\pm$.24}
  & 23.4{\scriptsize$\pm$.40} & 36.6{\scriptsize$\pm$.33}
  & 31.2{\scriptsize$\pm$.38} & 54.3{\scriptsize$\pm$.29}
  & 5.3{\scriptsize$\pm$.22} & 13.3{\scriptsize$\pm$.25}
  & 28.6{\scriptsize$\pm$.33} & 40.7{\scriptsize$\pm$.28} \\
\midrule
\multicolumn{13}{l}{\textit{Pairwise combinations}} \\
+ PD + SF
  & 42.9{\scriptsize$\pm$.36} & 71.4{\scriptsize$\pm$.27}
  & 55.3{\scriptsize$\pm$.38} & 82.5{\scriptsize$\pm$.24}
  & 24.9{\scriptsize$\pm$.42} & \textbf{40.1}{\scriptsize$\pm$.34}
  & 33.7{\scriptsize$\pm$.40} & 55.9{\scriptsize$\pm$.30}
  & 6.2{\scriptsize$\pm$.25} & 14.5{\scriptsize$\pm$.27}
  & 30.2{\scriptsize$\pm$.35} & 44.8{\scriptsize$\pm$.29} \\
+ PD + VS
  & 42.1{\scriptsize$\pm$.35} & 70.7{\scriptsize$\pm$.26}
  & 53.6{\scriptsize$\pm$.37} & 81.4{\scriptsize$\pm$.23}
  & 21.6{\scriptsize$\pm$.38} & 33.2{\scriptsize$\pm$.31}
  & 33.5{\scriptsize$\pm$.39} & 55.2{\scriptsize$\pm$.29}
  & 8.8{\scriptsize$\pm$.28} & \textbf{19.0}{\scriptsize$\pm$.30}
  & 32.2{\scriptsize$\pm$.36} & 47.5{\scriptsize$\pm$.29} \\
\midrule
\multicolumn{13}{l}{\textit{Full approach}} \\
+ PD + VS + SF
  & \textbf{46.2}{\scriptsize$\pm$.38} & \textbf{73.0}{\scriptsize$\pm$.28}
  & \textbf{59.3}{\scriptsize$\pm$.42} & \textbf{84.6}{\scriptsize$\pm$.25}
  & \textbf{26.2}{\scriptsize$\pm$.44} & 39.7{\scriptsize$\pm$.34}
  & \textbf{35.6}{\scriptsize$\pm$.42} & \textbf{57.5}{\scriptsize$\pm$.32}
  & \textbf{9.1}{\scriptsize$\pm$.30} & 17.6{\scriptsize$\pm$.30}
  & \textbf{34.4}{\scriptsize$\pm$.38} & \textbf{51.4}{\scriptsize$\pm$.31} \\
\bottomrule
\end{tabular}%
}
\end{table}

\paragraph{Per-component analysis.}
Among individual components, the strongest contributor varies by task:
vocabulary scheduling gives the largest standalone gains on the OVEN
settings, score fusion is strongest on WebQA \(t\to t\), and prefix
distillation is strongest on COCO. These patterns are consistent with
the view from the survival bound that different tasks are dominated by
different quantities.

\paragraph{Pairwise interactions.}
Prefix distillation and vocabulary scheduling show robust positive
interaction across the ablation settings. This supports the hypothesis
that reducing ranking divergence makes subsequent margin enlargement
more effective.
In contrast, prefix distillation + score fusion shows mixed
interaction (positive on 3 tasks, near-zero or slightly negative on
3), confirming that the synergy between the two codebook-learning
components is more robust than the codebook--decoder interaction.
Notably, vocabulary scheduling alone hurts WebQA~$t{\to}t$ by
$-$0.4, but after prefix distillation the marginal vocabulary
scheduling gain turns positive ($+$0.4), consistent with the
hypothesis that reducing $K_\ell$ stabilizes subsequent margin
enlargement.

\paragraph{R@5 exceptions.}
The full system achieves the best R@5 on 4 of 6 tasks.
Two exceptions: prefix distillation + score fusion outperforms Full on
WebQA~$t{\to}t$ R@5 by $+$0.4, and prefix distillation +
vocabulary scheduling outperforms Full on OVEN~$i\text{,}t{\to}i$ R@5 by
$+$1.4.
Both exceptions involve tasks where score fusion or vocabulary
scheduling provide weaker marginal gains at R@5, suggesting that
inference-time score fusion is less effective on small text-only pools
and on composed queries where the geometric proxy is noisy.

\subsection{Cross-dataset diagnostic analysis}
\label{app:cross_dataset}

Sec.~\ref{sec:diagnostic} validates the survival bound's
predictions on COCO.
Fig.~\ref{fig:cross_dataset} extends this analysis to four
additional datasets spanning all three query families, using
$\tau{=}1.0$ and beam width~20.

The top row shows $K_\ell$ quintile versus survival at levels 0, 3,
and~5. All five datasets exhibit a consistent monotone decline: higher
ranking divergence predicts lower prefix survival at every level,
confirming that the relationship identified on COCO generalizes
across query modalities and dataset characteristics.
The effect is strongest on composed-query tasks (CIRR, OVEN), where
$K_\ell$ variance is largest, and mildest on Flickr30k, where the
overall survival rate is high and leaves less room for differentiation.

The bottom row shows $m_\ell$ quintile versus survival. All datasets
display a monotone increase: larger teacher margin correlates with
higher survival at every level. The effect is milder than for
$K_\ell$, consistent with the main-text finding that ranking
divergence is the dominant bottleneck in the survival bound.

\paragraph{Decoder mismatch is a systemic bottleneck.}
Unlike $K_\ell$ and $m_\ell$, which exhibit substantial per-query
variance, the decoder mismatch $\varepsilon_\ell$ is uniformly high
across all queries and all five datasets (mean
$\TV > 0.95$, std $< 0.03$). Consequently, quintile-binning
$\varepsilon_\ell$ reveals no monotone trend with survival---not
because decoder mismatch is unimportant, but because it affects
\emph{every} query equally.
This finding reframes the role of score fusion
(Sec.~\ref{sec:cdsb}): rather than selectively rescuing queries
with unusually high decoder mismatch, it compensates for a systemic
gap between the decoder scoring distribution and the quantized oracle, explaining why
score fusion yields broad-based improvements across all tasks and
query types (Tab.~\ref{tab:ablation}).

\begin{figure}[h]
  \centering
  \includegraphics[width=\textwidth]{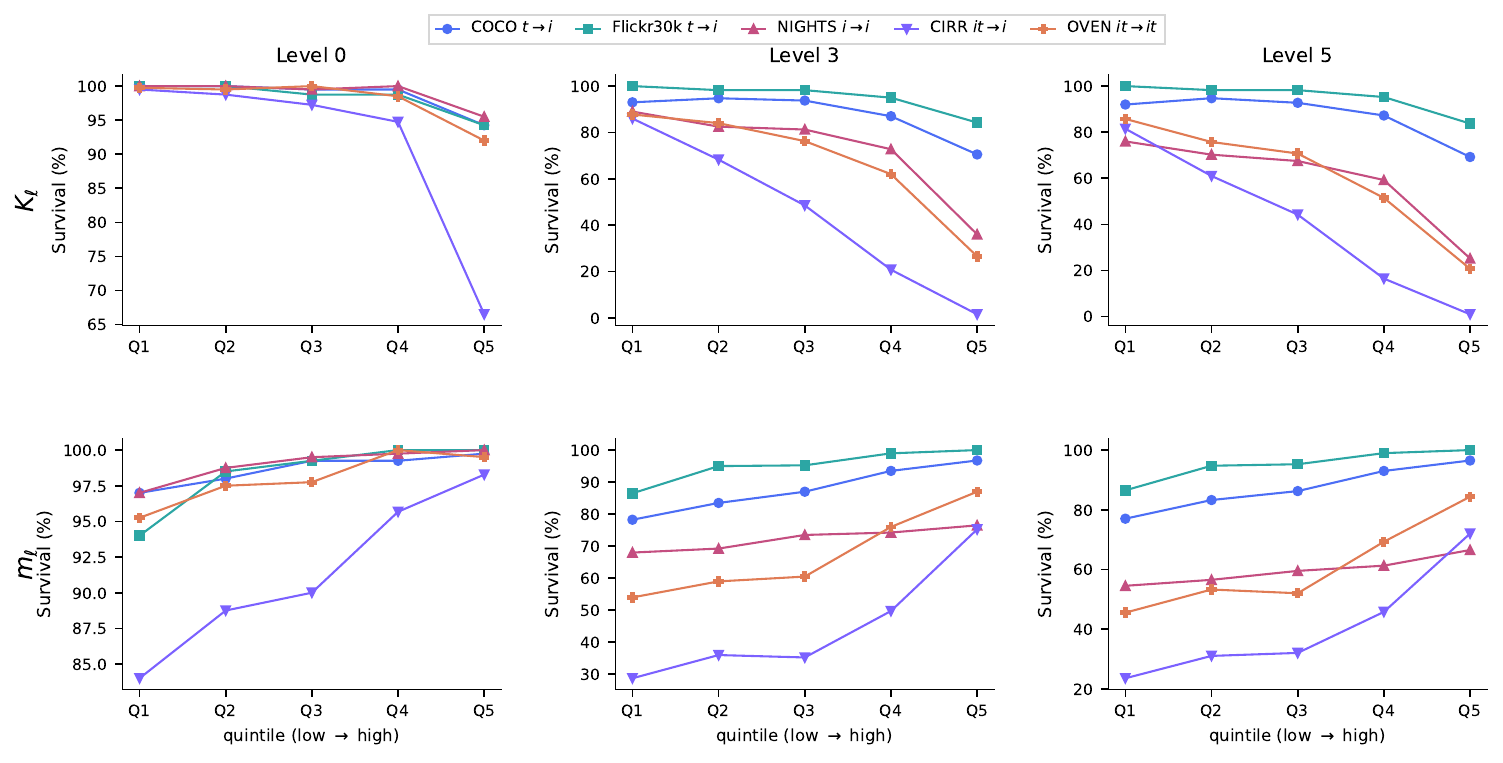}
  \caption{Cross-dataset diagnostic analysis.
    Top row: $K_\ell$ quintile vs.\ survival.
    Bottom row: $m_\ell$ quintile vs.\ survival.
    All five datasets show monotone trends consistent with the
    survival bound (Thm.~\ref{thm:survival}).}
  \label{fig:cross_dataset}
\end{figure}

\subsection{Efficiency comparison}
\label{app:efficiency}

\paragraph{Prefix survival and throughput scaling.}
Fig.~\ref{fig:efficiency}\subref{fig:oracle_prefix_survival} reports
the per-level survival rate of the target identifier token under quantized oracle scoring.
At each \ac{RQ} level, this metric conditions on the preceding target prefix and checks whether the target token at that level is ranked among the top-$B$ candidate tokens.
Fig.~\ref{fig:efficiency}\subref{fig:throughput} compares query
throughput as the candidate pool grows from 5K to 300K on a single GPU.
Dense retrieval stores all $N$ embeddings and scans them at query
time, with cost $O(N {\cdot} d)$; generative decoding stores only
discrete RQ codes and traverses a trie at $O(L {\cdot} B {\cdot} V)$,
independent of~$N$.
Dense throughput declines steadily as the pool grows, while generative
throughput remains constant.
The two curves cross near $N{=}50\text{K}$; beyond this point
generative retrieval is faster.
Adding score fusion (Eq.~\ref{eq:cdsb}) has virtually no effect on
throughput, as it only computes a lightweight residual inner product
per decoding step.

\begin{figure}[h]
  \centering
  \begin{subfigure}[t]{0.47\textwidth}
    \centering
    \includegraphics[width=\linewidth]{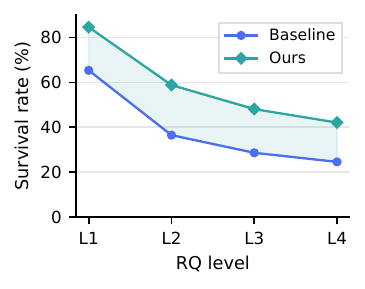}
    \caption{Per-level token survival}
    \label{fig:oracle_prefix_survival}
  \end{subfigure}\hfill
  \begin{subfigure}[t]{0.49\textwidth}
    \centering
    \includegraphics[width=\linewidth]{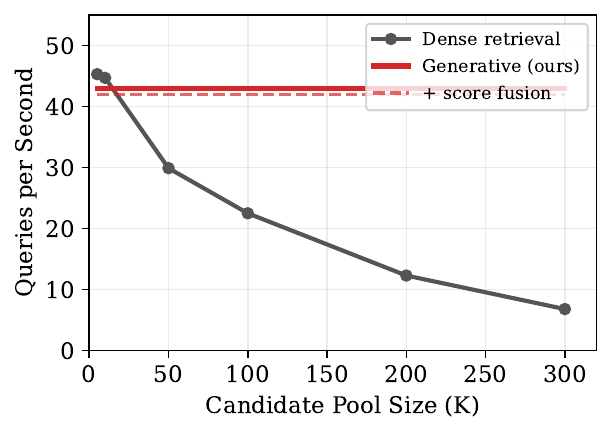}
    \caption{Throughput scaling}
    \label{fig:throughput}
  \end{subfigure}
  \caption{End-to-end diagnostics and efficiency.
    (a) Per-level survival rate of the target identifier token on COCO text-to-image under quantized oracle scoring; our method improves target token retention over the baseline indexing.
    (b) Query throughput vs.\ candidate pool size on a single GPU.
    Dense retrieval scales as $O(N {\cdot} d)$; generative decoding remains constant.
    Score fusion adds negligible overhead (the two generative curves nearly overlap).}
  \label{fig:efficiency}
\end{figure}

\paragraph{Index size.}
Tab.~\ref{tab:index_size} compares the index footprint at several
pool sizes. The generative index stores $L{=}16$ integer codes per
target item (two bytes each), yielding a constant compression ratio of
over ninety times relative to storing full 768-dimensional
floating-point embeddings.

\begin{table}[h]
\caption{Index size comparison. Dense retrieval stores float32
embeddings ($N {\times} 768 {\times} 4$\,B); generative retrieval
stores int16 RQ codes ($N {\times} 16 {\times} 2$\,B).}
\label{tab:index_size}
\vspace{1mm}
\centering
\begin{tabular}{r r r r}
\toprule
Pool size $N$ & Dense (MB) & Generative (MB) & Ratio \\
\midrule
5\,K    &   14.6 &  0.15 & 96$\times$ \\
50\,K   &  146.5 &  1.53 & 96$\times$ \\
335\,K  &  979.0 & 10.2  & 96$\times$ \\
1\,M    & 2\,930 & 30.5  & 96$\times$ \\
\bottomrule
\end{tabular}
\end{table}

\subsection{Encoder sensitivity}
\label{app:encoder_ablation}

To verify that the performance gap between RQ-based \ac{MGR} and dense
retrieval is not primarily driven by encoder choice, we compare three
multimodal encoders under vanilla GENIUS (RQ $L=16$, $V=4096$) with a
T5-small decoder (beam width 20, no re-ranking), across four M-BEIR
tasks (MSCOCO $t{\to}i$, Flickr30k $t{\to}i$, WebQA $t{\to}t$,
NIGHTS $i{\to}i$). The encoders span different capacities and training
regimes:
\begin{itemize}[leftmargin=1.8em]
  \item \textbf{CLIP-OpenAI} (ViT-L/14, 768-d): zero-shot CLIP without
  M-BEIR fine-tuning, serving as the non-instruct baseline.
  \item \textbf{LLM2CLIP}~\citep{huang2026llm2clip} (LLM2Vec-8B +
  ViT-L/14-336, 1280-d): higher-capacity encoder via an LLM backbone,
  also evaluated in the non-instruct regime.
  \item \textbf{CLIP-SF}~\citep{wei2023uniir} (ViT-L/14, 768-d):
  UniIR's instruction-tuned CLIP, used in its native instruct regime.
\end{itemize}

Tab.~\ref{tab:encoder_ablation} reports dense-retrieval (query-target
similarity), vanilla GENIUS, and our method Recall@\{1,5\}
across the four tasks.

\begin{table}[h]
\caption{Recall@\{1,5\} with three multimodal encoders on four M-BEIR
tasks. Dense: query-target inner-product similarity; GENIUS: vanilla GENIUS pipeline with the corresponding encoder; Ours: our pipeline with the corresponding encoder.}
\label{tab:encoder_ablation}
\vspace{1mm}
\centering
\setlength{\tabcolsep}{3pt}
\begin{tabular}{l l cc cc cc}
\toprule
\textbf{Task} & \textbf{Encoder} & \multicolumn{2}{c}{\textbf{Dense}} & \multicolumn{2}{c}{\textbf{GENIUS}} & \multicolumn{2}{c}{\textbf{Ours}} \\
\cmidrule(lr){3-4} \cmidrule(lr){5-6} \cmidrule(lr){7-8}
 &  & R@1 & R@5 & R@1 & R@5 & R@1 & R@5 \\
\midrule
\multirow{3}{*}{MSCOCO $t{\to}i$}
  & CLIP-OpenAI & 38.2 & 62.5 & 27.8 & 53.7 & 38.6 & 63.7 \\
  & LLM2CLIP    & 54.4 & 78.5 & 34.0 & 58.6 & 45.5 & 68.5 \\
  & CLIP-SF     & 55.2 & 80.7 & 37.1 & 66.4 & 45.8 & 72.6 \\
\midrule
\multirow{3}{*}{Flickr30k $t{\to}i$}
  & CLIP-OpenAI & 66.9 & 88.9 & 55.4 & 80.7 & 68.3 & 87.8 \\
  & LLM2CLIP    & 82.5 & 95.8 & 57.5 & 80.2 & 75.6 & 89.9 \\
  & CLIP-SF     & 79.1 & 95.0 & 56.8 & 81.3 & 70.6 & 87.3 \\
\midrule
\multirow{3}{*}{WebQA $t{\to}t$}
  & CLIP-OpenAI & 22.2 & 40.9 &  9.9 & 21.6 & 17.4 & 25.1 \\
  & LLM2CLIP    & 50.6 & 78.8 & 17.6 & 27.6 & 29.5 & 44.2 \\
  & CLIP-SF     & 58.3 & 84.2 & 19.4 & 29.9 & 25.7 & 40.7 \\
\midrule
\multirow{3}{*}{NIGHTS $i{\to}i$}
  & CLIP-OpenAI &  6.8 & 27.8 &  0.5 &  6.9 & 3.3 & 19.3 \\
  & LLM2CLIP    &  8.4 & 31.2 &  2.2 & 14.1 & 3.4 & 20.0 \\
  & CLIP-SF     &  8.8 & 31.6 &  1.2 & 11.6 & 3.5 & 19.7 \\
\bottomrule
\end{tabular}
\end{table}

Across encoders of different capacities and training regimes, the
vanilla GENIUS column exhibits a similar proportional gap between
generative and dense retrieval: switching to a higher-capacity encoder
or a retrieval-specialized one shifts both scores upward but does not
close their relative gap. This suggests that the gap is not caused by
insufficient encoder expressiveness; rather, the RQ-based
tokenization-and-decoding bottleneck prevents the decoder from fully
preserving the discrimination in the encoder space. The Ours column
consistently narrows this decay across encoders, indicating that our
method mitigates the RQ bottleneck rather than simply benefiting from a
particular encoder choice.

\section{Implementation Details}
\label{app:impl}

All experiments use the CLIP-SF encoder (ViT-L/14, 768-d
embeddings)~\citep{wei2023uniir} and T5-small
decoder~\citep{raffel2020exploring}. CLIP weights are frozen
throughout; only the RQ codebooks and T5 decoder parameters are trained. All training uses a single node with 4
NVIDIA RTX~4090 GPUs (24GB each) and PyTorch DDP.
Unless otherwise specified, results on the nine multimodal retrieval tasks are averaged over 5 random seeds.

\paragraph{Datasets.}
Tab.~\ref{tab:datasets} summarizes the 9 dataset--task pairs from
the M-BEIR benchmark~\citep{wei2023uniir} used in our evaluation.
Each task defines a query modality, a target modality, and a
task-specific candidate pool.
Pool sizes range from 1K (Flickr30k) to 612K (InfoSeek),
spanning three orders of magnitude.

\begin{table}[h]
\caption{Dataset statistics. All splits follow the M-BEIR
benchmark~\citep{wei2023uniir}. Pool sizes refer to the
task-specific test candidate pool.}
\label{tab:datasets}
\vspace{1mm}
\centering
\begin{tabular}{l l l l r r r}
\toprule
\textbf{Dataset} & \textbf{Task}
  & \textbf{Query} & \textbf{Target}
  & \textbf{Train} & \textbf{Test}
  & \textbf{Pool} \\
\midrule
COCO     & $t{\to}i$      & text   & image       & 100K  & 24.8K & 5.0K \\
Flickr30k& $t{\to}i$      & text   & image       & 145K  & 5.0K  & 1.0K \\
COCO     & $i{\to}t$      & image  & text        & 113K  & 5.0K  & 24.8K \\
NIGHTS   & $i{\to}i$      & image  & image       & 16K   & 2.1K  & 40K \\
\midrule
WebQA    & $t{\to}t$      & text   & text        & 16K   & 2.5K  & 544K \\
WebQA    & $t{\to}it$     & text   & image+text  & 17K   & 2.5K  & 403K \\
\midrule
CIRR     & $it{\to}i$     & image+text & image   & 26K   & 4.2K  & 22K \\
OVEN     & $it{\to}it$    & image+text & image+text & 305K & 14.7K & 335K \\
InfoSeek & $it{\to}t$     & image+text & text    & 284K  & 11.3K & 612K \\
\bottomrule
\end{tabular}
\end{table}

\subsection{Existing assets and licenses}
\label{app:asset_licenses}

Tab.~\ref{tab:data_asset_licenses} and
Tab.~\ref{tab:model_code_asset_licenses} summarize the existing
datasets, pretrained models, and code assets used in our experiments.
We use these assets only for research evaluation, cite their original
papers or release pages, and do not redistribute raw datasets or
pretrained checkpoints in the released code repository.

\begin{table}[h]
\caption{Dataset assets used in this paper.}
\label{tab:data_asset_licenses}
\vspace{1mm}
\centering
\scriptsize
\renewcommand{\arraystretch}{1.18}
\begin{tabularx}{\textwidth}{>{\raggedright\arraybackslash}p{0.24\textwidth} >{\raggedright\arraybackslash}p{0.30\textwidth} >{\raggedright\arraybackslash}p{0.22\textwidth} >{\raggedright\arraybackslash}X}
\toprule
\textbf{Asset} & \textbf{Use} & \textbf{License / terms} & \textbf{Source} \\
\midrule
M-BEIR & Main multimodal retrieval benchmark
& MIT for the benchmark release; underlying media and data follow the original source terms specified by M-BEIR
& \citet{wei2023uniir}; \href{https://huggingface.co/datasets/TIGER-Lab/M-BEIR}{Hugging Face} \\
Flickr30k & Additional text-to-image retrieval benchmark
& Images are obtained from Flickr and must follow the Flickr Terms of Use; the dataset maintainers state that the data are provided only for non-commercial research and educational purposes
& \citet{young2014flickr30k}; \href{https://shannon.cs.illinois.edu/DenotationGraph/}{dataset page} \\
Amazon Reviews'23 & Beauty, Toys, and Sports categories for the TIGER generalization experiment
& No explicit dataset license assigned by the release; made available primarily for research purposes; raw data are not redistributed in our package
& \citet{hou2024bridging}; \href{https://amazon-reviews-2023.github.io/}{dataset page}; \href{https://huggingface.co/datasets/McAuley-Lab/Amazon-Reviews-2023/discussions/1}{license discussion} \\
\bottomrule
\end{tabularx}
\end{table}

\begin{table}[h]
\caption{Pretrained model and code assets used in this paper.}
\label{tab:model_code_asset_licenses}
\vspace{1mm}
\centering
\scriptsize
\renewcommand{\arraystretch}{1.18}
\begin{tabularx}{\textwidth}{>{\raggedright\arraybackslash}p{0.25\textwidth} >{\raggedright\arraybackslash}p{0.31\textwidth} >{\raggedright\arraybackslash}p{0.16\textwidth} >{\raggedright\arraybackslash}X}
\toprule
\textbf{Asset} & \textbf{Use} & \textbf{License / terms} & \textbf{Source} \\
\midrule
UniIR CLIP-SF and BLIP-FF checkpoints & Dense baselines and frozen CLIP-SF encoder for GENIUS and PRO & MIT
& \citet{wei2023uniir}; \href{https://huggingface.co/TIGER-Lab/UniIR}{Hugging Face} \\
UniIR codebase & Reference code for M-BEIR data loading and evaluation conventions & MIT
& \citet{wei2023uniir}; \href{https://github.com/TIGER-AI-Lab/UniIR}{GitHub} \\
OpenAI CLIP ViT-L/14 & Encoder-sensitivity experiment & MIT
& \citet{radford2021learning}; \href{https://github.com/openai/CLIP}{GitHub} \\
T5-small & Autoregressive decoder backbone & Apache-2.0
& \citet{raffel2020exploring}; \href{https://huggingface.co/google-t5/t5-small}{Hugging Face} \\
LLM2CLIP & Encoder-sensitivity experiment & Apache-2.0
& \citet{huang2026llm2clip}; \href{https://huggingface.co/microsoft/LLM2CLIP-Openai-L-14-224}{Hugging Face} \\
GENIUS official implementation & Reference implementation for the reproduced GENIUS baseline & MIT
& \citet{genius2025}; \href{https://github.com/sung-yeon-kim/GENIUS-CVPR25}{GitHub} \\
\bottomrule
\end{tabularx}
\end{table}

\paragraph{Item codes.}
Each target item is represented by a sequence of $L{+}1{=}17$ discrete
tokens: one modality token ($V_{\mathrm{mod}}{=}3$) followed by $L{=}16$ RQ code tokens
with an ascending vocabulary schedule:
\[
  V_\ell =
  \begin{cases}
    512  & \ell = 0,\ldots,3    \quad\text{(4 levels)},\\
    1024 & \ell = 4,\ldots,11   \quad\text{(8 levels)},\\
    2048 & \ell \geq 12         \quad\text{(4 levels)}.
  \end{cases}
\]
The baseline (GENIUS) uses a uniform vocabulary $V{=}4096$ at all 16
levels with the same modality token.

\paragraph{Codebook-learning objective.}
For an input embedding~$y$ quantized by the RQ codebooks, let
$r_0=y$, $z^\ell=\argmin_v\|r_{\ell-1}-c_{\ell,v}\|^2$, and
$r_\ell=r_{\ell-1}-c_{\ell,z^\ell}$.
Following the standard RQ objective used in GENIUS~\citep{genius2025},
the residual-quantization loss is
\begin{equation}
  \mathcal{L}_{\mathrm{rq}}(y)
  \;=\;
  \sum_{\ell=1}^{L}
  \left\|r_{\ell-1}
  - \mathrm{sg}\!\left(c_{\ell,z^\ell}\right)\right\|_2^2,
\label{eq:rq_loss}
\end{equation}
where $\mathrm{sg}(\cdot)$ denotes the stop-gradient operator.
In codebook learning, $\mathcal{L}_{\mathrm{rq}}$ is averaged over the
query and target embeddings in the batch.
The full objective for codebook learning is
\begin{equation}
  \mathcal{L}_{\mathrm{cb}}
  \;=\; \mathcal{L}_{\mathrm{cl}}
  \;+\; \beta_{\mathrm{rq}}\,\mathcal{L}_{\mathrm{rq}}
  \;+\; \gamma_{\mathrm{mse}}\,\mathcal{L}_{\mathrm{mse}}
  \;+\; \lambda\,\mathcal{L}_{\mathrm{rank}}\,,
\label{eq:cb_full}
\end{equation}
where $\mathcal{L}_{\mathrm{cl}}$ is a bidirectional contrastive
loss on pre-quantization embeddings,
$\mathcal{L}_{\mathrm{rq}}$ is the residual-quantization loss between
residuals and assigned codewords,
$\mathcal{L}_{\mathrm{mse}}=\|\hat{q}-\hat{x}_j\|_2^2$ is the
MSE between quantized query and target vectors,
and $\mathcal{L}_{\mathrm{rank}}$ is the prefix ranking distillation
loss (Sec.~\ref{sec:rpq}).
We set $\beta_{\mathrm{rq}}{=}100$, $\gamma_{\mathrm{mse}}{=}100$, $\lambda{=}100$.

\paragraph{Codebook-learning schedule.}
Training runs for 20 epochs with learning rate $10^{-4}$ (Adam),
batch size 512 per GPU (effective batch 2048 with 4 GPUs),
and no learning rate warmup.
Prefix ranking distillation uses
$\lambda_{\mathrm{rank}}{=}100$, shared temperature
$\tau{=}0.05$, top-$k{=}128$ in-batch candidates, with 10\% linear warmup.

\paragraph{Decoder training.}
The T5-small decoder (60M parameters) is trained for 30 epochs with
learning rate $10^{-4}$ (Adam) and batch size 512.
Input query embeddings are projected to a single soft token via a
linear layer, then fed to the T5 encoder.
The decoder autoregressively generates the 17-token code sequence
(modality + 16 RQ tokens), supervised with per-token cross-entropy
loss.

\paragraph{Inference.}
We use trie-constrained beam search and geometric score fusion (Eq.~\ref{eq:cdsb}) adds a bias
$\omega{=}10$ at each decoding step, computed from the query residual
and the current-level codebook.
No reranking is applied after beam search.

\paragraph{Hyperparameters.}
Tab.~\ref{tab:hparam} summarizes all key hyperparameters.

\begin{table}[h]
\caption{Hyperparameter summary. All results averaged over 5 seeds.}
\label{tab:hparam}
\vspace{1mm}
\centering
\begin{tabular}{llc}
\toprule
\textbf{Component} & \textbf{Hyperparameter} & \textbf{Value} \\
\midrule
\multirow{4}{*}{Codebook (RQ)}
  & RQ levels $L$ & 16 \\
  & Vocab schedule $V_\ell$ & $512 / 1024 / 2048$ \\
  & Learning rate & $10^{-4}$ \\
  & Training epochs & 20 \\
\midrule
\multirow{4}{*}{Prefix distillation}
  & $\lambda_{\mathrm{rank}}$ & 100 \\
  & $\tau$ (shared) & 0.05 \\
  & Top-$k$ candidates & 128 \\
  & Warmup fraction & 10\% \\
\midrule
\multirow{3}{*}{Decoder (T5)}
  & Learning rate & $10^{-4}$ \\
  & Training epochs & 30 \\
  & Batch size (per GPU) & 512 \\
\midrule
\multirow{2}{*}{Inference}
  & Beam width $B$ & 50 \\
  & Score fusion $\omega$ & 10 \\
\bottomrule
\end{tabular}
\end{table}

\paragraph{Training overhead.}
Prefix ranking distillation adds a KL term over in-batch candidates at
each of the $L$ levels; no additional forward pass is required.
On 4$\times$RTX~4090 GPUs, the codebook-learning stage (Stage~1) takes
${\sim}4$\,min/epoch with distillation versus ${\sim}3.8$\,min/epoch
without, an overhead of ${\sim}8$\%.
Since decoder training (Stage~2, ${\sim}23$\,min/epoch) is unchanged
and dominates wall-clock time, the end-to-end training overhead is
under $1$\%.
Score fusion adds only a single inner product per decoding step at
inference and has negligible latency impact (Fig.~\ref{fig:efficiency}).

\paragraph{Vocabulary schedule design.}
The ascending schedule
$V_1 \leq V_2 \leq \cdots \leq V_L$ follows directly from the survival
bound (Eq.~\ref{eq:survival}): early levels face irreversible pruning,
so fewer codewords enlarge the teacher margin~$m_\ell$
(Sec.~\ref{sec:daca}); later levels contribute less to survival risk
and benefit from larger vocabularies for reconstruction fidelity.
We adopt a three-tier split ($512/1024/2048$) that allocates the
smallest vocabulary to the first 4 levels---where survival decay is
steepest (Sec.~\ref{sec:evidence})---and the largest to the final 4
levels.
The 8 intermediate levels use a moderate size as a transition.
This schedule was selected based on the principle above and not
tuned exhaustively; Tab.~\ref{tab:daca} confirms that even coarser
ascending schedules consistently outperform uniform and descending
alternatives.

\paragraph{Baselines.}
For IRGen~\citep{zhang2024irgen} and GRACE~\citep{li2024grace}, we report the
results reproduced by GENIUS~\citep{genius2025}.
For AVG~\citep{cai2025avg} and SemCORE~\citep{wu2025semcore}, we report the
results from SemCORE~\citep{wu2025semcore}.
We have verified that the test splits and candidate pools used in
these works are consistent with those of the M-BEIR benchmark.

\subsection{Statistical variability}
\label{app:ci}

GENIUS (reproduced) and our method are each trained with 5
independent random seeds.
Tabs.~\ref{tab:main_full}--\ref{tab:main_full_3} and
Tab.~\ref{tab:ablation_full} report per-cell mean$\pm$std.
Across all tasks, the maximum standard deviation is 0.45 (R@1)
and 0.36 (R@5); the corresponding 95\% confidence interval
half-widths ($1.96\,\sigma/\!\sqrt{5}$) are below 0.40 in all
cases.
All pairwise comparisons between our full system and GENIUS are
statistically significant ($p < 0.01$, two-sample $t$-test),
including the smallest gap (InfoSeek R@1, $\Delta {=} 1.2$,
$p < 0.01$).
Dense baselines (CLIP-SF, BLIP-FF) use deterministic pretrained
weights.

\begin{table}[h]
\caption{Hyperparameter sensitivity (COCO $t{\to}i$).
Default values in \textbf{bold}. Each panel varies one parameter with
the others fixed.}
\label{tab:sens}
\vspace{1mm}
\centering
\small
\begin{tabular}{cc cc cc}
\toprule
$\lambda_{\mathrm{rank}}$ & R@1 / R@5
  & $\tau$ & R@1 / R@5
  & $\omega$ & R@1 / R@5 \\
\midrule
0   & 41.3 / 70.2  & 0.01 & 40.3 / 68.7  & 0  & 42.1 / 70.7 \\
1   & 41.8 / 70.6  & 0.02 & 44.6 / 71.9  & 1  & 42.4 / 70.9 \\
10  & 44.7 / 72.4  & \textbf{0.05} & \textbf{46.2 / 73.0}
    & 3  & 43.9 / 72.3 \\
50  & 46.5 / 72.1  & 0.1  & 45.8 / 72.8  & 5  & 45.6 / 72.1 \\
\textbf{100} & \textbf{46.2 / 73.0}
    & 0.2  & 43.1 / 71.0  & \textbf{10} & \textbf{46.2 / 73.0} \\
200 & 44.8 / 72.5  & 0.5  & 39.7 / 67.4  & 20 & 45.5 / 73.4 \\
500 & 41.5 / 69.3  & 1.0  & 38.2 / 66.1  & 50 & 42.0 / 70.1 \\
\bottomrule
\end{tabular}
\end{table}

\subsection{Hyperparameter sensitivity}
\label{app:hparam_sens}

Tab.~\ref{tab:sens} reports Recall@\{1,\,5\} on COCO $t{\to}i$
(single seed) as each key hyperparameter is varied while the others are
held at their default values
($\lambda_{\mathrm{rank}}{=}100$, $\tau{=}0.05$, $\omega{=}10$).

All three parameters exhibit a broad plateau around the chosen
operating point.
For the distillation strength $\lambda_{\mathrm{rank}}$, values in
$[50, 200]$ yield R@1 within 1.7 points of the best; the dominant
effect is whether distillation is enabled at all ($\lambda{=}0$ vs.\
$\lambda{>}0$: ${+}5$ R@1).
The temperature $\tau$ is stable in $[0.05, 0.1]$; extreme values
($\leq 0.01$ or $\geq 0.5$) hurt because the softmax becomes either
too peaked or too flat for effective distillation.
The fusion weight $\omega$ is stable in $[5, 20]$; at $\omega{=}0$ (pure
T5 decoding) R@1 drops by 4 points, while $\omega{=}50$ over-weights
the geometric term and degrades performance.

\end{document}